\definecolor{dark-red}{rgb}{0.7,0.25,0.25}
\definecolor{dark-blue}{rgb}{0.15,0.15,0.55}
\definecolor{medium-blue}{rgb}{0,0,.8}
\definecolor{DarkGreen}{RGB}{0,150,0}
\definecolor{rho}{named}{red}
\theoremstyle{plain}
\newtheorem{thm}{Theorem}[section]
\newtheorem{cor}[thm]{Corollary}
\newtheorem{lem}[thm]{Lemma}
\newtheorem{prop}[thm]{Proposition}
\newtheorem*{mainres}{Main Result}
\newtheorem*{thm*}{Theorem}
\theoremstyle{definition}
\newtheorem{defn}[thm]{Definition}
\theoremstyle{remark}
\newtheorem{ex}[thm]{Example}
\newtheorem{rem}[thm]{Remark}
\numberwithin{equation}{section}
\DeclareMathOperator{\Diff}{Diff}
\DeclareMathOperator{\End}{End}
\DeclareMathOperator{\Lie}{Lie}
\DeclareMathOperator{\Mob}{M\ddot{o}b}
\DeclareMathOperator{\Rot}{Rot}
\DeclareMathOperator{\spann}{span}
\DeclareMathOperator{\supp}{supp}
\newcommand{\abs}[1]{\left| #1 \right|}
\newcommand{\ip}[1]{\langle #1 \rangle}
\newcommand{\bil}[1]{\left( #1 \right)}
\newcommand{\norm}[1]{\left\| #1 \right\|}
\newcommand{\opp}{\mathsf{o}}
\def\semicolon{;}
\def\applytolist#1{
    \expandafter\def\csname multi#1\endcsname##1{
        \def\multiack{##1}\ifx\multiack\semicolon
            \def\next{\relax}
        \else
            \csname #1\endcsname{##1}
            \def\next{\csname multi#1\endcsname}
        \fi
        \next}
    \csname multi#1\endcsname}
\def\calc#1{\expandafter\def\csname c#1\endcsname{{\mathcal #1}}}
\def\bbc#1{\expandafter\def\csname bb#1\endcsname{{\mathbb #1}}}
\def\bfc#1{\expandafter\def\csname bf#1\endcsname{{\mathbf #1}}}
\def\sfc#1{\expandafter\def\csname s#1\endcsname{{\sf #1}}}
\def\rmi{{\mathrm i}}
\def\rme{{\mathrm e}}
\newcommand{\noshow}[1]{}
\newcounter{wightman}
\title{Non-unitary Wightman CFTs and non-unitary vertex algebras}
\author[1]{Sebastiano Carpi\thanks{{\tt carpi@mat.uniroma2.it}}}
\author[2]{Christopher Raymond\thanks{{\tt christopher.raymond@uni-hamburg.de}}}
\author[1]{Yoh Tanimoto\thanks{{\tt hoyt@mat.uniroma2.it}}}
\author[3]{James E.\! Tener\thanks{{\tt james.tener@anu.edu.au}}}
\affil[1]{Dipartimento di Matematica, Universit\`a di Roma Tor Vergata,\authorcr
   Via della Ricerca Scientifica 1, I-00133 Roma, Italy}
\affil[2]{Department of Mathematics, University of Hamburg, 20148 Hamburg, Germany}
\affil[3]{Mathematical Sciences Institute, Australian National University,\authorcr
Canberra, ACT 2600, Australia}
\date{}
\begin{document}

\maketitle

\begin{abstract}
We give an equivalence of categories between: (i) M\"obius vertex algebras which are equipped with a choice of generating family of quasiprimary vectors, and (ii) (not-necessarily-unitary) M\"obius-covariant Wightman conformal field theories on the unit circle.
We do not impose any technical restrictions on the theories considered (such as finite-dimensional conformal weight spaces or simplicity), yielding the most general equivalence between these two axiomatizations of two-dimensional chiral conformal field theory.
This provides new opportunities to study non-unitary vertex algebras using the lens of algebraic conformal field theory and operator algebras, which we demonstrate by establishing a non-unitary version of the Reeh-Schlieder theorem. 
\end{abstract}

{\hypersetup{linkcolor=black}
\setcounter{tocdepth}{2}
\tableofcontents
}

\section{Introduction}

It is a fundamental mathematical challenge to establish a rigorous axiomatization of quantum field theory (QFT), and in general this problem remains wide open except in very specialized contexts.
In recent years, axiomatic QFT has received particular attention in the context of two-dimensional chiral conformal field theories (CFTs), as these theories are sufficiently structured to enable a rigorous mathematical treatment while at the same time exhibiting a wide variety of mathematical connections (to operator algebras and subfactors, to representation theory and modular tensor categories, to vector-valued modular forms, and to many other areas).
There are many proposed axiomatizations of two-dimensional chiral CFTs, each of which captures different aspects of the physical theory, and none of which have been rigorously demonstrated to provide a complete description of chiral CFT.
It is conjectured that these different axiomatizations are essentially equivalent, and there have been recent breakthroughs in comparing different axiomatizations under certain technical hypotheses \cite{CKLW18,GRACFT1}.

In this article we demonstrate the equivalence of two well-known axiomatizations of two-dimensional chiral CFTs.
We establish this equivalence in the most general way possible, without any reliance on auxiliary technical hypotheses or restrictions on the models under consideration, such as the existence of an invariant inner product (a positive-definite sesquilinear form giving rise to ``unitarity'').
Proving equivalences at this level of generality has largely been viewed as an aspirational (but not necessarily feasible) goal of axiomatic QFT, which we achieve here through a detailed analysis of the mathematical structures in question.

The first axiomatization that we consider is the non-unitary version of the (bosonic) Wightman axioms on the unit circle $S^1 \subset \bbC$, with M\"obius symmetry (i.e.\! symmetry group $\Mob:=\mathrm{PSU}(1,1)$, the holomorphic automorphisms of the unit disk).
The key data of such a theory is a collection $\cF$ of operator-valued distributions (or Wightman fields) acting on a common invariant vector space of states $\cD$, along with a compatible positive-energy representation of $\Mob$.

The second axiomatization that we consider is ($\bbN$-graded, bosonic) M\"obius vertex algebras.
These are vertex algebras graded by non-negative integer conformal dimensions, with symmetry given by the complexified Lie algebra $\mathfrak{su}(1,1)_\bbC \cong \mathfrak{sl}_2(\bbC)$.

We prove the following main result.

\begin{mainres}
There is a natural equivalence of categories between non-unitary M\"obius-covariant Wightman conformal field theories on $S^1$ and M\"obius vertex algebras equipped with a family of quasiprimary generators.
\end{mainres}

Our result does not require unitarity or the existence of an invariant bilinear form (invariant in the sense of vertex algebras, see Section \ref{sec: invariant forms} for further discussion), and we do not require that the homogeneous subspaces for the grading by conformal dimensions be finite-dimensional.
There are many important examples of CFTs arising in mathematical and theoretical physics which require this level of generality, and in particular non-unitarity arises from the CFT-approach to classical critical phenomena, and from string theory.
Specific examples include the non-unitary Virasoro minimal models, affine vertex algebras at non-critical level (both universal and simple quotient), bosonic $\bbN$-graded affine W-algebras (again universal and simple quotient), and the $\beta \gamma$-ghost vertex algebra with central charge $c=2$ (along with other ``$A$-graded'' vertex algebras which arise in logarithmic conformal field theory).
As a result of our theorem there are canonical Wightman CFTs associated to these models, which demonstrates significant functional analytic regularity that is not otherwise apparent.

The constructions going from suitably defined Wightman CFTs to vertex algebras and back are given in Sections~\ref{sec: VOA to Wightman} and \ref{sec: Wightman to VOA}, respectively.
These are shown to give an equivalence of categories in Section~\ref{sec: equivalence of categories}.
The vertex algebra $\cV$ associated to a Wightman CFT with domain $\cD$ is constructed as a certain subspace $\cV \subset \cD$.
Conversely, the Wightman CFT associated to $\cV$ is constructed as an extension $\cV \subset \cD \subset \prod \cV(n)$, where $\prod \cV(n)$ is the algebraic completion (see Section \ref{sec: VOA to Wightman}).
We note that at this level of generality (allowing each weight space $\cV(n)$ to be infinite-dimensional) it is not a priori clear that there is a single Wightman CFT for each vertex algebra, and it seems plausible that there could be families of `Wightman completions' of a single vertex algebra.
However, as a consequence of our result, there is indeed a unique Wightman CFT for each M\"obius vertex algebra generated by quasiprimary fields.

A very useful and inspiring heuristic discussion on the connection between Wightman CFTs and vertex algebras can be found in \cite[\S1.2]{Kac98}. However, the arguments given there do not appear to be aimed to give precise mathematical details on this connection. 
More recently, three of the present authors gave a rigorous proof that \emph{unitary} M\"obius vertex algebras were equivalent to \emph{unitary} Wightman CFTs possessing an additional analytic property called \emph{uniformly bounded order}, provided that the homogeneous subspaces for the grading by conformal dimensions were finite-dimensional \cite{RaymondTanimotoTener22}.
The present article generalizes the previous result to possibly non-unitary theories, also dropping the requirements of uniformly bounded order and finite-dimensional weight spaces.
As the techniques historically used to study Wightman theories involve careful analysis of the norm topology on the space of states, there is significant new work required to generalize our previous results to the non-unitary setting.

We also demonstrate in Section~\ref{sec: invariant forms} that the correspondence constructed in this article is compatible with invariant bilinear forms, invariant sesquilinear forms, and invariant inner products.

\begin{thm*}
Let $\cD$ be a M\"obius-covariant Wightman CFT and let $\cV \subset \cD$ be the associated M\"obius vertex algebra.
Then every invariant inner product (unitary structure) on $\cD$ restricts to an invariant inner product on $\cV$, and conversely every unitary structure on $\cV$ uniquely extends to one on $\cD$.
The same holds for invariant sesquilinear forms (involutive structures) and invariant bilinear forms.
\end{thm*}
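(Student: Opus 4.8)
The plan is to treat the three kinds of form in parallel, with invariant bilinear forms as the model case. An invariant sesquilinear form comes equipped with an antilinear involutive automorphism of all of the ambient structure (the M\"obius action and the fields) --- the associated involutive structure --- and an invariant inner product is an invariant sesquilinear form that is in addition positive-definite. Since such an automorphism automatically preserves the subspace $\cV \subset \cD$ of finite-energy vectors, and since positive-definiteness and nondegeneracy are visibly inherited under both restriction and componentwise extension, the sesquilinear and inner-product cases need no analytic input beyond the bilinear one.

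Restriction is the easy direction: the M\"obius-covariance identity and the field-adjoint identities that define invariance of a form on $\cD$ are equalities between expressions built from the M\"obius action and the smeared fields, all of which make sense already on $\cV$, so they restrict. For the extension direction I would first note that any invariant bilinear form is block-diagonal for the conformal grading: since $L_0$ is symmetric for the form, $(m-n)\,\beta(u,v)=0$ whenever $u\in\cV(m)$ and $v\in\cV(n)$. Thus an invariant bilinear form $\beta$ on $\cV$ is precisely a family of bilinear forms $\beta_n$ on the weight spaces $\cV(n)$, and a form on $\cD \subset \prod_n \cV(n)$ decomposes the same way, so the only candidate for an extension of $\beta$ to $\cD$ is
\[
\bar\beta\bigl((u_n)_n,(v_n)_n\bigr)=\sum_n \beta_n(u_n,v_n).
\]
This settles uniqueness at once. (It also follows from cyclicity of the vacuum, since the adjoint relations express the value of an invariant form on a word of smeared fields applied to $\Omega$ through the vacuum-to-vacuum correlation functions, which depend only on $\beta|_\cV$.)

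The substantive point, which I expect to be the main obstacle, is to show that this sum converges absolutely for all $u,v\in\cD$ and that $\bar\beta$ then satisfies the invariance relations on $\cD$. This is genuinely analytic rather than a soft density argument: any bilinear form on $\cV$ that is continuous for the product topology on $\prod_n \cV(n)$ vanishes on all but finitely many weight spaces, so one must use the regularity built into $\cD$ in the construction of the Wightman CFT associated to $\cV$ (Section~\ref{sec: Wightman to VOA}). Concretely, $\cD$ is generated from $\Omega$ by the smeared quasiprimary fields, and the weight-$n$ component of a vector $\phi^{i_1}(f_1)\cdots\phi^{i_k}(f_k)\Omega$ is an infinite combination of iterated modes $\phi^{i_1}_{(m_1)}\cdots\phi^{i_k}_{(m_k)}\Omega$ whose coefficients involve the Fourier coefficients of the $f_j$, which decay rapidly since the $f_j$ are smooth; meanwhile the values of $\beta_n$ on these iterated modes, after repeated use of the adjoint relations, are controlled by finitely many vertex-algebra structure constants together with the weight-zero datum of $\beta$, and grow at most polynomially in the mode indices (reflecting that the vertex-algebra correlation functions are rational with poles only on the diagonals) --- precisely the growth estimates underlying the construction of the Wightman fields on $\cD$. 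Pairing a rapidly decaying sequence with a polynomially bounded one yields absolute convergence, and I would organize the bookkeeping as an induction on the number $k$ of fields. Once convergence is in hand, invariance of $\bar\beta$ on $\cD$ follows from its invariance on the dense subspace $\cV$ together with the continuity of the smeared fields on $\cD$; when $\beta$ is an inner product, $\bar\beta(u,u)=\sum_n \beta_n(u_n,u_n)$ is a sum of nonnegative terms that vanishes only if every $u_n=0$, i.e.\ only if $u=0$, so positivity and nondegeneracy pass to $\cD$; and Hermitian symmetry and compatibility with the involution transfer in the same way.
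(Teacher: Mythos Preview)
Your overall architecture is right, and your approach would work, but it differs from the paper's in the extension direction in a way worth noting. You propose to define the extension as $\bar\beta\bigl((u_n),(v_n)\bigr)=\sum_n\beta_n(u_n,v_n)$ and then prove absolute convergence by pairing rapid Fourier decay against the polynomial growth of matrix elements (essentially Lemma~\ref{lem: finite energy estimate}). This can be made to work, but the paper avoids the convergence bookkeeping entirely by a cleaner device: it \emph{defines}
\[
(\varphi_1(f_1)\cdots\varphi_k(f_k)\Omega,\;\psi_1(g_1)\cdots\psi_\ell(g_\ell)\Omega)_\cD
:=(\psi_\ell^\opp(g_\ell)\cdots\psi_1^\opp(g_1)\varphi_1(f_1)\cdots\varphi_k(f_k)\Omega,\;\Omega)_{\widehat\cV,\cV},
\]
i.e.\ moves all fields to one side and pairs with the vacuum. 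The right-hand side is just the linear functional $\beta(\cdot,\Omega)\in\cV(0)^*\subset\cV'\subset\cD_\cF^*$ applied to a vector in $\cD$, so it is automatically finite and jointly $\cF$-strong continuous---no estimates needed. Well-definedness then reduces to the trivial observation that if $\Phi=0$ then $X\Phi=0$, together with the symmetric identity~\eqref{eqn: left and right definitions of D form} (first on Laurent polynomials, then by continuity). Your parenthetical remark about ``vacuum-to-vacuum correlation functions'' is exactly this idea; the paper promotes it from a uniqueness remark to the actual definition.

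On the restriction direction, one point you gloss over: the Wightman invariance relation only directly yields $(Y(v,z)u_1,u_2)=(u_1,Y^\opp(v,z)u_2)$ for $v$ in the generating set $S$ of quasiprimary vectors, since those are the only vectors to which Wightman fields correspond. Extending this to all $v\in\cV$ requires an argument (the paper's Lemma~\ref{lem: invariant form for generators}, which uses the contragredient module structure and the Borcherds product formula). Also, to get $(L_nu_1,u_2)=(u_1,L_{-n}u_2)$ from the group-level condition $(U(\gamma)\Phi,U(\alpha(\gamma))\Psi)=(\Phi,\Psi)$ you must differentiate, which uses the $\cF$-strong differentiability established in Lemma~\ref{lem: Lie Mob covariance}; this is routine but should be said.
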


We are left with a striking and clear correspondence between two well-known axiomatizations of two-dimensional chiral conformal field theory, without any reliance on additional technical hypotheses.
We are motivated in part by the possibility to provide such an equivalence, which is not often possible in the wild landscape of axiomatic quantum field theory.
We are also motivated by intriguing links between non-unitary conformal field theories and the unitary world of algebraic conformal field theory.
Given a Wightman CFT on $S^1$ and an interval $I \subset S^1$, consider the algebra $\cP(I)$ generated by Wightman fields $\varphi(f)$ smeared by test functions $f$ supported in the interval $I$.
Such Wightman nets of algebras have been studied in the context of unitary quantum field theories \cite{StreaterWightman64}, and there is a substantial effort underway to understand the relationship between unitary vertex algebras, unitary Wightman nets, and the usual nets of algebras of \emph{bounded} observables (i.e.\! conformal nets) studied in algebraic conformal field theory \cite{CKLW18,GRACFT1}.
On the other hand, as a result of our present work, there exists a Wightman net for every M\"obius vertex algebra, including non-unitary ones.
Such nets could give an avenue to apply methods generally used in the unitary framework of algebraic quantum field theory in the more general setting of non-unitary models.
Previously such links have been probed only at the level of categories of representations \cite{EvansGannonNonunitary17}.
As a first demonstration of the potential of this approach we prove a version of the Reeh-Schlieder Theorem (regarding the cyclic and separating property of the vacuum vector) for non-unitary theories in Appendix~\ref{sec: reeh schlieder}.

Finally, there is strong motivation to understand functional analytic aspects of non-unitary vertex algebras as a part of studying links between algebraic and geometric aspects of the theory, as in \cite{HuangFunctionalI,HuangFunctionalII}.
More recently, analytic considerations of non-unitary vertex algebras have played a key role in the study of conformal blocks \cite{GuiConvergence24,GuiPropagation24,GuiAnalyticI23ax}, and such considerations also feature in the construction of functorial CFTs in the sense of Segal \cite{SegalDef}.

In future work, it would be interesting to relate modules for vertex algebras to representations of the corresponding Wightman nets, which would fit into the broad program underway in the unitary setting to relate vertex algebra modules to representations in algebraic conformal field theory \cite{TenerGRACFT2,TenerGRACFT3, gui21categoricalextension, gui20unbddax, CarpiWeinerXu}. Such relations should enable further correspondences between full two-dimensional conformal field theories in various approaches, cf.\! \cite{Moriwaki21, AGT23Pointed, AMT24OS}.

\subsection*{Acknowledgements}

S.C. and Y.T acknowledge support from the GNAMPA-INDAM project {\it Operator algebras and infinite quantum systems}, 
CUP E53C23001670001 and from the MUR Excellence Department Project {\it MatMod@TOV} awarded to the Department of Mathematics, University of Rome ``Tor Vergata'', CUP E83C23000330006. 
C.R. and J.T. were supported by ARC Discovery Project DP200100067, ``Physical realisation of enriched quantum symmetries''.
C.R. and J.T. would like to thank David Ridout for suggesting and encouraging our study of Wightman CFTs in the context of non-unitary models.

\section{Preliminaries on Wightman CFTs and M\"obius vertex algebras}

An \textbf{operator-valued distribution} on the unit circle $S^1$ with domain a vector space $\mathcal{D}$ is a linear map
\[
\varphi:C^\infty(S^1) \to \cL(\cD),
\]
where $\cL(\cD)$ is the space of linear operators on $\cD$.
In this article, we will typically study operator-valued distributions whose domain $\cD$ is infinite-dimensional, and we will require some topological considerations with respect to the action of sets of such distributions on $\cD$.

If $\cF$ is a set of operator-valued distributions on $S^1$ with a common domain $\cD$, then a linear functional $\lambda:\cD \to \bbC$ is called \textbf{compatible with $\cF$} if the multilinear maps $C^\infty(S^1)^k \to \bbC$ given by
\begin{equation}\label{eqn: preliminary regular action}
(f_1, \ldots, f_k) \mapsto \lambda\big(\varphi_1(f_1) \cdots \varphi_k(f_k)\Phi\big)
\end{equation}
are continuous in the $f_j$ for all $\varphi_1, \ldots, \varphi_k \in \cF$ and $\Phi \in \cD$.
Note that multilinear forms $C^\infty(S^1)^k \to \bbC$ are separately continuous if and only if they are jointly continuous since $C^\infty(S^1)$ is a Fr\'echet space \cite[Cor. \S 34.2]{Treves67}.
We write $\cD_\cF^*$ for the space of all linear functionals compatible with $\cF$.
Recall that a set $\cX$ of linear functionals on $\cD$ is said to separate points if for every non-zero $\Phi \in \cD$ there is a $\lambda \in \cX$ such that $\lambda(\Phi) \ne 0$.

\begin{defn}
A set $\cF$ of operator-valued distributions with domain $\cD$ acts \textbf{regularly} if $\cD_\cF^*$ separates points.
\end{defn}

If we imagine that $\cF$ consists of a family of Wightman fields (i.e.\! the operators $\varphi(f)$ are smeared quantum fields), then it is natural that functionals $\lambda:\cD \to \bbC$ should have the property that expressions \eqref{eqn: preliminary regular action} are continuous in the smearing functions $f_j$.
Thus, the condition of regularity serves to exclude certain nonphysical actions that have the property that functionals cannot distinguish states.
The following example illustrates the pathological behavior of nonregular actions.

\begin{ex}\label{ex: non regular action}
Let $\cD = T(C^\infty(S^1)) = \bigoplus_{k=0}^\infty C^\infty(S^1)^{\otimes k}$ be the tensor algebra, and for $f \in C^\infty(S^1)$ let $\varphi(f) \in \cL(\cD)$ be the operation of left-multiplication by $f$ in $\cD$.
The space $\cD$ carries a regular action of $\cF=\{\varphi\}$.
Let $\cI \subsetneq \cD$ be the left ideal generated by trigonometric polynomials $\bbC[z^{\pm 1}] \subset C^\infty(S^1)$.
Let $\tilde \cD = \cD/\cI$, and observe that for each $f \in C^\infty(S^1)$ the action of $\varphi(f)$ descends to an operator $\tilde \varphi(f) \in \cL(\tilde \cD)$.
The action of $\tilde \cF=\{\tilde \varphi\}$ on $\tilde \cD$ is not regular.
Let $\Omega \in \cD$ be the unit of the tensor algebra, and let $\tilde \Omega \in \tilde \cD$ be its image under the canonical projection.
For any $f \in \bbC[z^{\pm 1}]$ we have $\tilde \varphi(f)\tilde \Omega = 0$, and thus any $\lambda \in \tilde \cD_{\tilde \cF}^*$ vanishes on $\tilde \varphi(f)\tilde \Omega$ for any $f \in C^\infty(S^1)$.
In particular, if $f \in C^\infty(S^1) \setminus \bbC[z^{\pm 1}]$, then $\tilde \varphi(f)\tilde \Omega$ is non-zero but lies in the kernel of all $\lambda \in \tilde \cD_{\tilde \cF}^*$.
\end{ex}

\begin{rem}
A non-regular action of $\cF$ on $\cD$ descends to a regular action on the quotient $\cD/\cD_0$, where $\cD_0 = \bigcap_{\lambda \in \cD_\cF^*} \ker \lambda$.
\end{rem}

Let $e_n \in C^\infty(S^1)$ be the function $e_n(z)=z^n$.
The condition that $\cF$ acts regularly on $\cD$ ensures that the operators $\varphi(f)$ are determined by the modes $\varphi(e_n)$ in a certain sense that we will make precise below.
This is in contrast with Example~\ref{ex: non regular action}, in which $\tilde \varphi(e_n)\tilde \Omega = 0$ for all $n$ but $\tilde \varphi(f)\tilde \Omega \ne 0$ for some $f \in C^\infty(S^1)$.

We now introduce certain topologies on $\cD$ associated with the action of $\cF$.
We assume here that the reader is familiar with (or indifferent to) the fundamentals of topological and locally convex vector spaces, and defer the relevant background and additional details to Appendix~\ref{app: TVS and LCS}.

\begin{defn}
Given a family $\cF$ of operator-valued distributions on $S^1$ with domain $\cD$, the \textbf{$\cF$-weak} topology on $\cD$ is the weak topology induced by the linear functionals $\cD_\cF^*$.
That is, the $\cF$-weak topology is the coarsest topology such that every $\lambda \in \cD_\cF^*$ is continuous.
\end{defn}

For a topological vector space $X$, a map $T:X \to \cD$ is $\cF$-weakly continuous precisely when $\lambda \circ T$ is continuous for all $\lambda \in \cD_\cF^*$.
The family $\cF$ acts regularly precisely when the $\cF$-weak topology is Hausdorff.
We will see in Lemma~\ref{lem: continuity of field operators} below that for $\varphi \in \cF$ the expressions $\varphi(f)\Phi$ are continuous in $f \in C^\infty(S^1)$ when $\cD$ is given the $\cF$-weak topology, so indeed $\varphi(f)$ is determined by the modes $\varphi(e_n)$ when $\cF$ acts regularly.

There is a second natural topology on $\cD$ associated with the action of $\cF$.
For $\varphi_1, \ldots, \varphi_k \in \cF$ and $\Phi \in \cD$, let
\[
S_{\varphi_1,\ldots,\varphi_k,\Phi}:C^\infty(S^1)^{\otimes k} \to \cD
\]
be the linear map
\[
S_{\varphi_1,\ldots,\varphi_k,\Phi}(f_1\otimes \cdots \otimes f_k) = \varphi_1(f_1) \cdots \varphi_k(f_k)\Phi.
\]
We equip the algebraic tensor product $C^\infty(S^1)^{\otimes k}$ with the projective topology, for which continuous linear maps $C^\infty(S^1)^{\otimes k} \to X$ correspond to continuous multilinear maps (see Appendix~\ref{app: TVS and LCS}).

\begin{defn}
Given a family of $\cF$ of operator-valued distributions on $S^1$ with domain $\cD$, the \textbf{$\cF$-strong} topology on $\cD$ is the colimit (or final) locally convex topology induced by the maps $S_{\varphi_1,\ldots,\varphi_k,\Phi}$ for $\varphi_1, \ldots, \varphi_k \in \cF$ and $\Phi \in \cD$.
That is, the $\cF$-strong topology is the finest locally convex topology such that the maps $S_{\varphi_1,\ldots,\varphi_k,\Phi}$ are continuous.
\end{defn}

Equivalently, the $\cF$-strong topology is the finest locally convex topology on $\cD$ such that expressions
$
\varphi_1(f_1) \cdots \varphi_k(f_k)\Phi
$
are continuous in the functions $f_j$ (jointly, or equivalently separately by \cite[Cor. \S34.2]{Treves67}).

\begin{rem}\label{rem: weak of F strong is F weak}
For a locally convex space $X$, a linear map $T:\cD \to X$ is continuous precisely when $T \circ S_{\varphi_1, \ldots, \varphi_k, \Phi}$ is continuous for all $\varphi_1, \ldots, \varphi_k \in \cF$ and $\Phi \in \cD$ \cite[Thm. 12.2.2]{NariciBeckenstein11}.
In particular, a linear functional $\lambda:\cD \to \bbC$ is $\cF$-strongly continuous if and only if $\lambda \in \cD_\cF^*$, and so the weak topology on $\cD$ induced by the space of $\cF$-strong continuous linear functionals is precisely the $\cF$-weak topology.
\end{rem}

We now have the following alternate characterizations of the regularity of an action of $\cF$ on $\cD$.

\begin{lem}\label{lem: characterization of regular action}
Let $\cF$ be a set of operator-valued distributions on $S^1$ with domain a vector space $\cD$.
Then the following are equivalent.
\begin{enumerate}[i)]
\item $\cF$ acts regularly on $\cD$, i.e.\! $\cD_\cF^*$ separates points.
\item The $\cF$-weak topology on $\cD$ is Hausdorff.
\item The $\cF$-strong topology on $\cD$ is Hausdorff.
\item There exists a locally convex Hausdorff topology on $\cD$ such that the maps
\[
(f_1, \ldots, f_k) \mapsto \varphi_1(f_1) \cdots \varphi_k(f_k)\Phi
\]
are continuous $C^\infty(S^1)^k \to \cD$ for all $\varphi_1, \ldots, \varphi_k \in \cF$ and $\Phi \in \cD$.
\end{enumerate}
\end{lem}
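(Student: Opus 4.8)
The plan is to prove the cycle of implications $i) \Rightarrow iii) \Rightarrow iv) \Rightarrow ii) \Rightarrow i)$, using the two ``bridge'' remarks already in the text (Remark~\ref{rem: weak of F strong is F weak}) to pass between the weak and strong topologies.

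First, $i) \Rightarrow iii)$. Assume $\cD_\cF^*$ separates points. By Remark~\ref{rem: weak of F strong is F weak}, the $\cF$-strong continuous linear functionals are exactly $\cD_\cF^*$, and the $\cF$-strong topology is a locally convex topology on $\cD$. A locally convex space is Hausdorff if and only if its continuous dual separates points; since $\cD_\cF^*$ separates points by hypothesis, the $\cF$-strong topology is Hausdorff. (Here one uses that, in a locally convex space, for every non-zero vector there is a continuous seminorm not vanishing on it, hence by Hahn--Banach a continuous functional not vanishing on it; conversely separation of points by the dual gives a coarser Hausdorff topology, and the finer locally convex topology is then also Hausdorff.)

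Next, $iii) \Rightarrow iv)$ is immediate: the $\cF$-strong topology is by definition a locally convex topology making all the maps $(f_1,\dots,f_k) \mapsto \varphi_1(f_1)\cdots\varphi_k(f_k)\Phi$ continuous (this is the ``equivalently'' characterization stated just after the definition of the $\cF$-strong topology), and by $iii)$ it is Hausdorff, so it witnesses $iv)$. Then $iv) \Rightarrow ii)$: let $\tau$ be any locally convex Hausdorff topology on $\cD$ as in $iv)$. Each map $S_{\varphi_1,\dots,\varphi_k,\Phi}$ is then $\tau$-continuous, so $\tau$ is coarser than (or equal to) the $\cF$-strong topology, hence $\tau$-continuous functionals lie in $\cD_\cF^*$. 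Since $\tau$ is Hausdorff and locally convex, its continuous dual separates points, and therefore the larger space $\cD_\cF^*$ separates points as well; equivalently the $\cF$-weak topology, being the weak topology induced by a point-separating family, is Hausdorff. Finally $ii) \Rightarrow i)$ is essentially a tautology: the $\cF$-weak topology is the initial topology for $\cD_\cF^*$, and such an initial topology is Hausdorff precisely when the defining family of functionals separates points.

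I do not expect a serious obstacle here; the statement is a packaging of standard locally convex duality facts (a locally convex topology is Hausdorff iff its dual separates points; initial topologies; colimit/final locally convex topologies and their duals) together with the universal properties already recorded in Remark~\ref{rem: weak of F strong is F weak} and in the discussion surrounding the definitions of the $\cF$-weak and $\cF$-strong topologies. The only point requiring a little care is the direction $iv) \Rightarrow ii)$, where one must correctly invoke that a topology making all the $S_{\varphi_1,\dots,\varphi_k,\Phi}$ continuous is necessarily coarser than the $\cF$-strong topology (the finest such), so that its dual embeds in $\cD_\cF^*$; this is exactly the universal property of the colimit topology cited via \cite[Thm. 12.2.2]{NariciBeckenstein11}. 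All the needed locally convex background is collected in Appendix~\ref{app: TVS and LCS}.
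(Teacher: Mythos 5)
Your proof is correct and uses essentially the same ingredients as the paper's: the Hahn--Banach separation theorem for locally convex Hausdorff spaces, the universal property of the $\cF$-strong (colimit) topology recorded in Remark~\ref{rem: weak of F strong is F weak}, and the comparison between the $\cF$-weak and $\cF$-strong topologies. The only difference is cosmetic -- the paper proves the cycle $i)\Rightarrow ii)\Rightarrow iii)\Rightarrow iv)\Rightarrow i)$ while you traverse $i)\Rightarrow iii)\Rightarrow iv)\Rightarrow ii)\Rightarrow i)$, so the same facts are simply regrouped among the implications.
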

\begin{proof}
As noted above, the implication (i) $\implies$ (ii) follows immediately from the definitions of regularity and the $\cF$-weak topology.
The identity map $\cD \to \cD$ is continuous from the $\cF$-weak topology to the $\cF$-strong topology, and thus (ii) $\implies$ (iii).
The $\cF$-strong topology is locally convex by definition, and thus (iii) $\implies$ (iv) is tautological.
Finally, if $\tau$ is a locally convex Hausdorff topology on $\cD$ as in (iv), then we have an inclusion of continuous duals $(\cD,\tau)^* \subset \cD_\cF^*$.
By the Hahn-Banach theorem $(\cD,\tau)^*$ separates points \cite[Thm. 7.7.7]{NariciBeckenstein11}. Hence so does $\cD_\cF^*$, and the action of $\cF$ is regular.
\end{proof}

Both the $\cF$-strong and $\cF$-weak topologies are quite natural, and so it is not surprising that the fields $\cF$ act continuously when $\cD$ is given one of these topologies.

\begin{lem}\label{lem: continuity of field operators}
Let $\cF$ be a set of operator-valued distributions on $S^1$ acting regularly with domain $\cD$  equipped with the $\cF$-strong topology.
Then for $\varphi \in \cF$ the natural map $\varphi:C^\infty(S^1) \times \cD \to \cD$ is  separately continuous.
The same holds if $\cD$ is equipped with the $\cF$-weak topology.
\end{lem}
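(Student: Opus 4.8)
The plan is to check the four separate‑continuity assertions — continuity in the test function and continuity in the vector, for each of the $\cF$-strong and $\cF$-weak topologies — and to note that three of the four drop out of the definitions, with only continuity in the vector for the $\cF$-strong topology requiring a genuine (but short) argument.

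First I would treat continuity in $f$. Fix $\varphi\in\cF$ and $\Phi\in\cD$. The map $f\mapsto\varphi(f)\Phi$ is literally $S_{\varphi,\Phi}\colon C^\infty(S^1)\to\cD$, which is continuous into $(\cD,\cF\text{-strong})$ by the very definition of the $\cF$-strong topology (it is the $k=1$ member of the defining family of maps). Since the $\cF$-weak topology is coarser than the $\cF$-strong topology — the identity $\cD\to\cD$ is continuous from $\cF$-weak to $\cF$-strong, as observed in the proof of Lemma~\ref{lem: characterization of regular action} — the same map is automatically continuous into $(\cD,\cF\text{-weak})$. (One could argue directly instead: composing with any $\lambda\in\cD_\cF^*$ gives $f\mapsto\lambda(\varphi(f)\Phi)$, continuous by the $k=1$ case of the compatibility condition \eqref{eqn: preliminary regular action}, which is exactly what $\cF$-weak continuity into $\cD$ demands.)

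The substantive step is continuity in the vector for the $\cF$-strong topology: for fixed $f\in C^\infty(S^1)$ and $\varphi\in\cF$, show that the linear map $\varphi(f)\colon\cD\to\cD$ is $\cF$-strongly continuous. Here I would invoke the universal property of the final locally convex topology (Remark~\ref{rem: weak of F strong is F weak}, or \cite[Thm.~12.2.2]{NariciBeckenstein11}): it suffices that $\varphi(f)\circ S_{\varphi_1,\ldots,\varphi_k,\Phi}\colon C^\infty(S^1)^{\otimes k}\to\cD$ be continuous for all $\varphi_1,\ldots,\varphi_k\in\cF$ and $\Phi\in\cD$ (the degenerate $k=0$ case being trivial). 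The key point is the ``insertion'' identity
\[
\varphi(f)\circ S_{\varphi_1,\ldots,\varphi_k,\Phi}=S_{\varphi,\varphi_1,\ldots,\varphi_k,\Phi}\circ\iota_f,
\]
where $\iota_f\colon C^\infty(S^1)^{\otimes k}\to C^\infty(S^1)^{\otimes(k+1)}$ is the map $h\mapsto f\otimes h$: both sides send $f_1\otimes\cdots\otimes f_k$ to $\varphi(f)\varphi_1(f_1)\cdots\varphi_k(f_k)\Phi$. Now $\iota_f$ is continuous, arising by fixing the first slot of the canonical jointly continuous multilinear map into $C^\infty(S^1)^{\otimes(k+1)}$ and applying the universal property of the projective topology, while $S_{\varphi,\varphi_1,\ldots,\varphi_k,\Phi}$ is continuous by the definition of the $\cF$-strong topology; hence the composite is continuous. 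I expect this insertion identity — recognizing $\varphi(f)$ applied to an $S$-map as a reindexing of a longer $S$-map — to be the one ingredient that is not pure unwinding of definitions, although it is elementary.

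Finally, for continuity in the vector for the $\cF$-weak topology, I would use Remark~\ref{rem: weak of F strong is F weak}, which identifies $\cD_\cF^*$ with the continuous dual of $(\cD,\cF\text{-strong})$ and the $\cF$-weak topology with $\sigma(\cD,\cD_\cF^*)$. A continuous linear map between locally convex spaces is continuous for the respective weak topologies, so the $\cF$-strong continuity of $\varphi(f)$ from the previous step immediately yields its $\cF$-weak continuity. Alternatively, one checks directly that for each $\lambda\in\cD_\cF^*$ the functional $\Phi\mapsto\lambda(\varphi(f)\Phi)$ again lies in $\cD_\cF^*$: fixing the first argument of the separately continuous multilinear map \eqref{eqn: preliminary regular action} attached to the list $(\varphi,\varphi_1,\ldots,\varphi_k)$ leaves a continuous $k$-linear map, which is precisely the statement that $\lambda\circ\varphi(f)$ is $\cF$-weakly continuous, as required.
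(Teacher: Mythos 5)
Your proposal is correct and follows essentially the same route as the paper: continuity in the test function is immediate from the defining maps $S_{\varphi,\Phi}$ (respectively from compatibility of each $\lambda\in\cD_\cF^*$), continuity of $\varphi(f)$ in the vector for the $\cF$-strong topology is obtained from the universal property of the colimit topology together with the insertion identity $\varphi(f)\circ S_{\varphi_1,\ldots,\varphi_k,\Phi}=S_{\varphi,\varphi_1,\ldots,\varphi_k,\Phi}\circ\iota_f$, and the $\cF$-weak case follows because the adjoint $\varphi(f)^*$ preserves $\cD_\cF^*$ (equivalently, a strongly continuous linear map is weakly continuous). One small wording point: your parenthetical ``the identity is continuous from $\cF$-weak to $\cF$-strong'' has the direction reversed relative to your (correct) claim that the $\cF$-weak topology is coarser --- the correct statement is that the identity $(\cD,\cF\text{-strong})\to(\cD,\cF\text{-weak})$ is continuous --- but this does not affect your argument, since you also give the direct verification via $\lambda\in\cD_\cF^*$.
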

\begin{proof}
First, fix $f \in C^\infty(S^1)$.
For any $\varphi_1, \ldots, \varphi_k \in \cF$ and $\Phi \in \cD$, the expression 
\[
\varphi(f)\varphi_1(f_1) \cdots \varphi_k(f_k)\Phi
\] 
is continuous in the $f_k$ by the definition of the $\cF$-strong topology, and thus $\varphi(f):\cD \to \cD$ is continuous by Remark~\ref{rem: weak of F strong is F weak}.
Similarly, for fixed $\Phi \in \cD$ expressions $\varphi(f)\Phi$ are continuous in $f$, and we conclude that $\varphi$ is separately continuous.

Now consider if $\cD$ is given the $\cF$-weak topology.
For $\lambda \in \cD_\cF^*$ the expression $\lambda(\varphi(f)\Phi)$ is evidently continuous in $f$, and it just remains to show that $\varphi(f)$ acts continuously on $(\cD,\cF\mbox{-weak})$.
Let $\cD^\sharp$ be the algebraic dual of $\cD$, and let $\varphi(f)^*:\cD^\sharp \to \cD^\sharp$ be the adjoint action.
If $\lambda \in \cD_\cF^*$ then 
\begin{equation}\label{eqn: adjoint action of varphif}
(\varphi(f)^*\lambda)(\varphi_1(f_1) \cdots \varphi_k(f_k)\Phi) = \lambda(\varphi(f)\varphi_1(f_1) \cdots \varphi_k(f_k)\Phi)
\end{equation}
is continuous in the functions $f_j$, so $\varphi(f)^*$ leaves $\cD_\cF^*$ invariant.
Thus if $\Phi_n$ is a net in $\cD$ converging $\cF$-weakly to $\Phi$ then 
\[
\lambda(\varphi(f)\Phi_n) = (\varphi(f)^*\lambda)\Phi_n \to (\varphi(f)^*\lambda)\Phi = \lambda(\varphi(f)\Phi).
\]
Hence $\varphi(f)\Phi_n$ converges $\cF$-weakly to $\varphi(f)\Phi$ and $\varphi(f)$ acts continuously on $\cD$.
\end{proof} 

We now assume our vector space $\cD$ is equipped with a family $\cF$ of operator-valued distributions on $S^1$ as well as a representation $U:\Mob \to \End(\cD)$, where $\Mob=\mathrm{PSU}(1,1)$ is the group of holomorphic automorphisms of the closed unit disk. We mention here for clarity that we will use the notation $U$ for the representation throughout the paper, in particular, for when the theories are not unitary. As this notation is commonly used for representations of the global symmetry group we continue to use it as the setting of our paper is clear. 
As in \cite[\S 6]{CKLW18}, for $\gamma \in \Mob$ we denote by $X_\gamma \in C^\infty(S^1)$ the function
\[
X_\gamma(\rme^{\rmi\vartheta}) = -\rmi \frac{d}{d \vartheta} \log(\gamma(\rme^{\rmi \vartheta})),
\]
which takes positive real values since $\gamma$ is an orientation-preserving diffeomorphism of $S^1$.
For $f \in C^\infty(S^1)$ and $d \in \bbZ_{\ge 0}$ we denote by $\beta_d(\gamma)f \in C^\infty(S^1)$ the function
\begin{equation}\label{eq:beta}
(\beta_d(\gamma)f)(z) = (X_\gamma(\gamma^{-1}(z)))^{d-1} f(\gamma^{-1}(z)).
\end{equation}
An operator-valued distribution with domain $\cD$ is called \textbf{M\"obius-covariant with conformal dimension $d$} under the representation $U$
if for every $\gamma \in \Mob$ and every $f \in C^\infty(S^1)$ we have 
\[
U(\gamma)\varphi(f)U(\gamma)^{-1} = \varphi(\beta_d(\gamma)f)
\]
as endomorphisms of $\cD$.
We say that a vector $\Phi \in \cD$ has \textbf{conformal dimension $d \in \bbZ$} if $U(R_\vartheta)\Phi = \rme^{\rmi d\vartheta}\Phi$ for all rotations $R_\vartheta \in \Mob$.

We now present the not-necessarily-unitary version of the Wightman axioms for two-dimensional chiral conformal field theories on the circle $S^1$.
Historically, the Wightman axioms have been closely entwined with \emph{unitary} theories, where the space of states $\cD$ possess an appropriate inner product.
Non-unitary versions of the Wightman axioms have also appeared in various contexts such as the mathematical description of gauge fields, see e.g.\! \cite[\S 6.4]{Strocchi93}. 
In this article we will generally refer to the non-unitary theories in question simply as Wightman conformal field theories for the sake of brevity.

\begin{defn}\label{def: Wightman}
Let $\cD$ be a vector space equipped with a representation $U$ of $\Mob$ and a choice of non-zero vector $\Omega \in \cD$.
Let $\mathcal{F}$ be a set of operator-valued distributions on $S^1$ acting regularly on their common domain $\mathcal{D}$.
This data forms a (not-necessarily-unitary) \textbf{M\"obius-covariant Wightman CFT} on $S^1$ if they satisfy the following axioms:
\begin{enumerate}[{(W}1{)}]
\item \textbf{M\"obius covariance}:  For each $\varphi \in \mathcal{F}$ there is $d \in \mathbb{Z}_{\ge 0}$ such that $\varphi$ is M\"obius-covariant with conformal dimension $d$ under the representation $U$.\label{itm: W Mob covariance}
\item \textbf{Locality}: If $f$ and $g$ have disjoint supports, then $\varphi_1(f)$ and $\varphi_2(g)$ commute
for any pair $\varphi_1, \varphi_2 \in \mathcal{F}$.
\label{itm: W locality}
\item \textbf{Spectrum condition}: If $\Phi \in \cD$ has conformal dimension $d < 0$ then $\Phi = 0$.
\label{itm: W spec}
\item \textbf{Vacuum}: The vector $\Omega$ is invariant under $U$, and $\cD$ is spanned by vectors of the form $\varphi_1(f_1) \cdots \varphi_k(f_k)\Omega$.
\label{itm: W Vacuum}
\setcounter{wightman}{\value{enumi}}
\end{enumerate}
\end{defn}

A M\"obius-covariant Wightman CFT is a quadruple $(\cF,\cD,U,\Omega)$, but we will frequently refer to the family $\cF$ of fields or the domain $\cD$ as a (M\"obius-covariant) Wightman CFT when the remaining data is clear from context.

Let $e_j(z) = z^j \in C^\infty(S^1)$, and let 
\[
\cV(n) = \spann \{\varphi_1(e_{j_1}) \cdots \varphi_k(e_{j_k})\Omega \, | \, j_1 + \cdots + j_k = -n\}.
\]

By M\"obius covariance (W\ref{itm: W Mob covariance}) the vectors in $\cV$ have conformal dimension $n$, or in other words $U(R_{\vartheta})$ acts on $\cV(n)$ as multiplication by $\rme^{\rmi n \vartheta}$.

\begin{lem}\label{lem:V(n) dense}
Let ($\cF$,$\cD$,$U$,$\Omega$) be a M\"obius-covariant Wightman CFT, and suppose $\cD$ is equipped with the $\cF$-strong topology.
\begin{enumerate}[i)]
\item The map $U:\Mob \times \cD \to \cD$ is separately continuous.
\item  $\bigoplus_{n \ge 0} \cV(n)$ is dense in $\cD$.
\end{enumerate}
The same holds for the $\cF$-weak topology.
\end{lem}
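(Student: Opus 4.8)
The plan is to derive both parts from the continuity of the $\Mob$-action on $C^\infty(S^1)$ together with the universal properties of the $\cF$-strong and projective tensor topologies, reducing everything to the spanning vectors $\varphi_1(f_1)\cdots\varphi_k(f_k)\Omega$ supplied by (W\ref{itm: W Vacuum}). Two standard facts about the maps $\beta_d(\gamma)$ of \eqref{eq:beta} will be used repeatedly: for fixed $\gamma$ the linear map $\beta_d(\gamma):C^\infty(S^1)\to C^\infty(S^1)$ is continuous, and for fixed $f$ the map $\Mob\to C^\infty(S^1)$, $\gamma\mapsto\beta_d(\gamma)f$, is continuous; both follow from smoothness of $\Mob\times S^1\to S^1$ and joint continuity of multiplication on $C^\infty(S^1)$, as in \cite[\S 6]{CKLW18}. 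From the universal property of the projective topology I also get that a tensor product of continuous linear maps is continuous and that $\gamma\mapsto\beta_{d_1}(\gamma)f_1\otimes\cdots\otimes\beta_{d_k}(\gamma)f_k$ is continuous $\Mob\to C^\infty(S^1)^{\otimes k}$. Finally, since by Remark~\ref{rem: weak of F strong is F weak} the $\cF$-weak topology is the weak topology generated by the $\cF$-strongly continuous functionals, it is coarser than the $\cF$-strong topology; this settles the $\cF$-weak version of (ii) at once, while (i) for the $\cF$-weak topology needs a short extra argument.

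For part (i), I would first fix $\gamma\in\Mob$ and show $U(\gamma):\cD\to\cD$ is $\cF$-strongly continuous via Remark~\ref{rem: weak of F strong is F weak}: inserting factors $U(\gamma)^{-1}U(\gamma)$ and using (W\ref{itm: W Mob covariance}) yields the identity of linear maps
\[
U(\gamma)\circ S_{\varphi_1,\ldots,\varphi_k,\Phi}=S_{\varphi_1,\ldots,\varphi_k,U(\gamma)\Phi}\circ\big(\beta_{d_1}(\gamma)\otimes\cdots\otimes\beta_{d_k}(\gamma)\big),
\]
whose right-hand side is continuous. Next, fixing $\Phi\in\cD$, it suffices by linearity and (W\ref{itm: W Vacuum}) to check continuity of $\gamma\mapsto U(\gamma)\Phi$ for $\Phi=\varphi_1(f_1)\cdots\varphi_k(f_k)\Omega$; since $U(\gamma)\Omega=\Omega$ this equals $S_{\varphi_1,\ldots,\varphi_k,\Omega}(\beta_{d_1}(\gamma)f_1\otimes\cdots\otimes\beta_{d_k}(\gamma)f_k)$, the composite of the continuous $\Mob$-family $\gamma\mapsto\bigotimes_i\beta_{d_i}(\gamma)f_i$ with the continuous map $S_{\varphi_1,\ldots,\varphi_k,\Omega}$. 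For the $\cF$-weak topology I would argue as in Lemma~\ref{lem: continuity of field operators}: the same displayed identity shows $(U(\gamma)^*\lambda)(\varphi_1(f_1)\cdots\varphi_k(f_k)\Phi)=\lambda(\varphi_1(\beta_{d_1}(\gamma)f_1)\cdots\varphi_k(\beta_{d_k}(\gamma)f_k)U(\gamma)\Phi)$ is continuous in the $f_j$, so $U(\gamma)^*$ preserves $\cD_\cF^*$ and $U(\gamma)$ is $\cF$-weakly continuous; and for $\Phi=\varphi_1(f_1)\cdots\varphi_k(f_k)\Omega$ and $\lambda\in\cD_\cF^*$ the scalar $\lambda(U(\gamma)\Phi)=\lambda(\varphi_1(\beta_{d_1}(\gamma)f_1)\cdots\varphi_k(\beta_{d_k}(\gamma)f_k)\Omega)$ is continuous in $\gamma$.

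For part (ii), by (W\ref{itm: W Vacuum}) the space $\cD$ is spanned by the vectors $\varphi_1(f_1)\cdots\varphi_k(f_k)\Omega=S_{\varphi_1,\ldots,\varphi_k,\Omega}(f_1\otimes\cdots\otimes f_k)$, so it is enough to place each of these in the $\cF$-strong closure of $\bigoplus_{n\ge0}\cV(n)$. Trigonometric polynomials are dense in $C^\infty(S^1)$ (e.g.\ Ces\`aro means of Fourier series converge in every $C^m$-norm), so approximating each factor of a simple tensor and using joint continuity of the canonical $k$-linear map $C^\infty(S^1)^k\to C^\infty(S^1)^{\otimes k}$ shows $\spann\{e_{j_1}\otimes\cdots\otimes e_{j_k}:j_i\in\bbZ\}$ is dense in $C^\infty(S^1)^{\otimes k}$; applying the continuous map $S_{\varphi_1,\ldots,\varphi_k,\Omega}$ then places $\varphi_1(f_1)\cdots\varphi_k(f_k)\Omega$ in the closure of $\spann\{\varphi_1(e_{j_1})\cdots\varphi_k(e_{j_k})\Omega:j_i\in\bbZ\}$. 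Finally, by (W\ref{itm: W Mob covariance}) applied to rotations, $\varphi_1(e_{j_1})\cdots\varphi_k(e_{j_k})\Omega$ has conformal dimension $-(j_1+\cdots+j_k)$; if this is negative the vector vanishes by (W\ref{itm: W spec}), and otherwise it lies in $\cV(-(j_1+\cdots+j_k))$ by definition, so this last span is contained in $\bigoplus_{n\ge0}\cV(n)$ and hence $\cD\subseteq\overline{\bigoplus_{n\ge0}\cV(n)}$. The $\cF$-weak case is then immediate since that topology is coarser.

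I expect the only real subtlety to be the continuity of $\gamma\mapsto U(\gamma)\Phi$ into $(\cD,\cF\text{-strong})$: since the $\cF$-strong topology is a colimit topology there is no universal property for maps \emph{into} it, so this continuity is obtained not abstractly but by using the vacuum axiom to rewrite $U(\gamma)\Phi$ as the image under the fixed continuous map $S_{\varphi_1,\ldots,\varphi_k,\Omega}$ of a manifestly continuous $\Mob$-parametrized family of test functions. The remaining steps are a routine assembly of the continuity facts listed at the outset.
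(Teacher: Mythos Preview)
Your proposal is correct and follows essentially the same approach as the paper: for (i) you use M\"obius covariance to rewrite $U(\gamma)\varphi_1(f_1)\cdots\varphi_k(f_k)\Omega$ as $\varphi_1(\beta_{d_1}(\gamma)f_1)\cdots\varphi_k(\beta_{d_k}(\gamma)f_k)\Omega$ and invoke continuity of $\beta_d$ together with the universal property in Remark~\ref{rem: weak of F strong is F weak}, and for (ii) you use density of Laurent polynomials plus the spectrum condition. Your treatment is somewhat more explicit than the paper's (which handles the $\cF$-weak case with ``the argument is similar''), particularly in spelling out the adjoint argument $U(\gamma)^*\cD_\cF^*\subset\cD_\cF^*$ for $\cF$-weak continuity of $U(\gamma)$ and in noting that the $\cF$-weak version of (ii) is immediate from coarseness, but the underlying ideas are identical.
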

\begin{proof}
First consider the $\cF$-strong topology.
Fix $\Phi \in \cD$, and we will show that $U(\gamma)\Phi$ is continuous in $\gamma$.
By the vacuum axiom we may assume without loss of generality that $\Phi = \varphi_1(f_1) \cdots \varphi_k(f_k)\Omega$.
We have
\[
U(\gamma)\varphi_1(f_1) \cdots \varphi_k(f_k)\Omega = \varphi_1(\beta_{d_1}(\gamma)f_1) \cdots \varphi_k(\beta_{d_k}(\gamma)f_k)\Omega.
\]
The smooth function $\beta_d(\gamma)f$ depends continuously on $\gamma$, and thus  $U(\gamma)\varphi_1(f_1) \cdots \varphi_k(f_k)\Omega$ depends continuously on $\gamma$ as well.
Now consider a fixed $\gamma \in \Mob$, and by the same argument $U(\gamma)\varphi_1(f_1) \cdots \varphi_k(f_k)\Omega$ depends continuously on the $f_j$.
Hence by Remark~\ref{rem: weak of F strong is F weak} and the vacuum axiom $U(\gamma)\Phi$ depends continuously on $\Phi$, proving (i).
The argument is similar for the $\cF$-weak topology.

For (ii), note that $\bigoplus_{n \in \bbZ} \cV(n)$ is dense since Laurent polynomials are dense in $C^\infty(S^1)$, and $\cV(n) = 0$ for $n < 0$ by the spectrum condition.
\end{proof}

Let $\cD_\cF^* \cap \cV(n)^*$ denote the space of linear functionals $\lambda \in \cD_\cF^*$ such that $\lambda|_{\cV(m)} = 0$ when $m\ne n$.
The following technical observations will be essential in constructing vertex algebras from Wightman CFTs.

\begin{lem}\label{lem: pos eigenvalues dense implies spectrum condition}
Suppose that ($\cF$,$\cD$,$U$,$\Omega$) satisfies all of the axioms of a M\"obius-covariant Wightman CFT except perhaps for the spectrum condition.
\begin{enumerate}[i)]
\item \label{itm: Vnstar separates points} For $n \in \bbZ$, $\cD_\cF^* \cap \cV(n)^*$ separates points in $\cV(n)$.
\item \label{itm: dense implies positive energy} If $\bigoplus_{n \ge 0} \cV(n)$ is $\cF$-strongly (or $\cF$-weakly) dense then the spectrum condition holds.
\end{enumerate}
\end{lem}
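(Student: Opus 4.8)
The plan is to prove both parts by exploiting the rotation action $U(R_\vartheta)$ and averaging over rotations to project onto the conformal dimension subspaces $\cV(n)$, working throughout with the $\cF$-weak topology (which suffices for (ii) since $\cF$-strong density implies $\cF$-weak density).

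For part (\ref{itm: Vnstar separates points}), I would start from the fact that $\cD_\cF^*$ separates points on all of $\cD$ (regularity), and the goal is to produce, for each nonzero $\Phi \in \cV(n)$, a functional in $\cD_\cF^*$ that kills $\cV(m)$ for $m \ne n$ yet is nonzero on $\Phi$. Pick any $\mu \in \cD_\cF^*$ with $\mu(\Phi) \ne 0$. Then define $\lambda$ by averaging the twisted translates: formally $\lambda = \frac{1}{2\pi}\int_0^{2\pi} \rme^{\rmi n\vartheta}\, (U(R_{-\vartheta})^*\mu)\, d\vartheta$. One must check: (a) the integrand lies in $\cD_\cF^*$ — this follows because, as in the proof of Lemma~\ref{lem: continuity of field operators}, $U(R_{-\vartheta})^*$ preserves $\cD_\cF^*$, using M\"obius covariance to rewrite $U(R_{-\vartheta})\varphi_1(f_1)\cdots\varphi_k(f_k)\Phi$ as $\varphi_1(\beta_{d_1}(R_{-\vartheta})f_1)\cdots\Omega$ and noting that $\beta_{d}(R_{-\vartheta})f$ depends continuously (indeed smoothly) on $\vartheta$ with values in $C^\infty(S^1)$; (b) the integral makes sense evaluated against any fixed vector $\Psi \in \cD$, since $\vartheta \mapsto (U(R_{-\vartheta})^*\mu)(\Psi) = \mu(U(R_{-\vartheta})\Psi)$ is a continuous ($2\pi$-periodic) scalar function by Lemma~\ref{lem:V(n) dense}(i), so the scalar integral converges and the resulting functional $\lambda$ is again in $\cD_\cF^*$ (continuity in the $f_j$ passes through the integral by dominated convergence / uniform continuity on the compact interval). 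Then on $\cV(m)$ we have $U(R_{-\vartheta})\Psi = \rme^{-\rmi m\vartheta}\Psi$, so $\lambda|_{\cV(m)} = \big(\frac{1}{2\pi}\int_0^{2\pi}\rme^{\rmi(n-m)\vartheta}d\vartheta\big)\mu|_{\cV(m)} = \delta_{nm}\,\mu|_{\cV(m)}$; in particular $\lambda|_{\cV(m)} = 0$ for $m \ne n$ and $\lambda(\Phi) = \mu(\Phi) \ne 0$. This gives $\lambda \in \cD_\cF^* \cap \cV(n)^*$ separating the chosen point.

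For part (\ref{itm: dense implies positive energy}), suppose $\bigoplus_{n\ge 0}\cV(n)$ is $\cF$-weakly dense and let $\Phi \in \cD$ have conformal dimension $d < 0$; I must show $\Phi = 0$. By regularity it suffices to show $\lambda(\Phi) = 0$ for every $\lambda \in \cD_\cF^*$. Fix such a $\lambda$. Since $\Phi$ has conformal dimension $d$, $U(R_\vartheta)\Phi = \rme^{\rmi d\vartheta}\Phi$, so for every $\mu \in \cD_\cF^*$, the "$d$-th rotation Fourier component" functional $\mu_d := \frac{1}{2\pi}\int_0^{2\pi}\rme^{\rmi d\vartheta}(U(R_{-\vartheta})^*\mu)\,d\vartheta$ (which lies in $\cD_\cF^*$ by the same argument as above) satisfies $\mu_d(\Phi) = \mu(\Phi)$, while $\mu_d$ vanishes on $\cV(m)$ for every $m \ge 0$ because $d < 0$ and hence $d \ne m$. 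Thus $\mu_d$ vanishes on the dense subspace $\bigoplus_{n\ge 0}\cV(n)$, and since $\mu_d$ is $\cF$-weakly continuous it vanishes on all of $\cD$; in particular $\mu(\Phi) = \mu_d(\Phi) = 0$. As $\mu \in \cD_\cF^*$ was arbitrary and $\cD_\cF^*$ separates points, $\Phi = 0$, which is the spectrum condition (W\ref{itm: W spec}).

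I expect the main obstacle to be the rigorous justification of the averaging construction: one must carefully verify that the rotation-averaged functional genuinely lies in $\cD_\cF^*$, i.e.\! that continuity in the smearing functions $f_1,\dots,f_k$ survives the integration over $\vartheta$. The cleanest route is to observe that for fixed $\Psi = \varphi_1(f_1)\cdots\varphi_k(f_k)$ applied to varying arguments, the map $(\vartheta, f_1,\dots,f_k) \mapsto \mu(U(R_{-\vartheta})\varphi_1(f_1)\cdots\varphi_k(f_k)\Omega) = \mu(\varphi_1(\beta_{d_1}(R_{-\vartheta})f_1)\cdots\varphi_k(\beta_{d_k}(R_{-\vartheta})f_k)\Omega)$ is jointly continuous — because $\beta_{d}(R_{-\vartheta})$ is jointly continuous in $(\vartheta,f)$ into $C^\infty(S^1)$ and $\mu$ is jointly continuous in the $k$ arguments (joint = separate continuity by \cite[Cor. \S34.2]{Treves67}) — and then integrate in $\vartheta$ over the compact circle, which preserves joint continuity in the $f_j$ by a standard uniform-continuity argument. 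Everything else is a direct computation with the rotation eigenvalue relations.
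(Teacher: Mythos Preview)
Your proposal is correct and is essentially the same argument as the paper's: both parts are proved by taking an arbitrary $\mu \in \cD_\cF^*$ and forming the rotation-averaged functional $\lambda_n = \tfrac{1}{2\pi}\int_0^{2\pi} e^{in\vartheta}\,U(R_{-\vartheta})^*\mu\,d\vartheta$, checking via the joint continuity of $(\vartheta,f_1,\ldots,f_k)\mapsto \mu(\varphi_1(\beta_{d_1}(R_{-\vartheta})f_1)\cdots\varphi_k(\beta_{d_k}(R_{-\vartheta})f_k)\Omega)$ that $\lambda_n \in \cD_\cF^*$, and then using the orthogonality relation $\lambda_n|_{\cV(m)} = \delta_{nm}\mu|_{\cV(m)}$. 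Your identification of the main technical point (that the averaged functional remains in $\cD_\cF^*$) and the route to resolving it match the paper exactly.
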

\begin{proof}
First we prove (\ref{itm: Vnstar separates points}).
From the definition of a Wightman CFT, $\cD_\cF^*$ separates points so given $v \in \cV(n)$ we may choose $\lambda \in \cD_\cF^*$ such that $\lambda(v)\ne 0$.
For $z=\rme^{\rmi \vartheta}$, let $r_z = U(R_\vartheta) \in \Mob$ be rotation by $z \in S^1$.
If $r_z^*$ is the adjoint operator, then $r_z^*\lambda \in \cD_\cF^*$.
Let $\lambda_n: \cD \to \bbC$ be given by
\[
\lambda_n(\Phi) = \frac{1}{2\pi\rmi}\int_{S^1} z^{-n-1} \bil{\Phi, r_z^*\lambda}_{\cD,\cD_\cF^*} \, dz.
\]
To see that the integral exists, observe that
\[
\bil{\varphi_1(f_1) \cdots \varphi_k(f_k)\Omega, r_z^*\lambda}_{\cD,\cD_\cF^*}
=
\bil{\varphi_1(\beta_{d_1}(r_z)f_1) \cdots \varphi_k(\beta_{d_k}(r_z)f_k)\Omega,\lambda}_{\cD,\cD_\cF^*}.
\]
Since $(z,f) \mapsto \beta_d(r_z)f$ is jointly continuous $S^1 \times C^\infty(S^1) \to C^\infty(S^1)$, and $\lambda \in \cD_\cF^*$, the expression $\bil{\varphi_1(f_1) \cdots \varphi_k(f_k)\Omega, r_z^*\lambda}$ is jointly continuous $S^1 \times C^\infty(S^1)^k \to \bbC$, and thus the integral defining $\lambda_n$ exists.
Moreover, we see that $\lambda \in \cD_\cF^*$.
If $u \in \cV(m)$, then $\lambda_n(u)=\delta_{n,m} \lambda(u)$, so $\lambda_n \in \cD_\cF^* \cap \cV(n)^*$ and $\lambda_n(v)=\lambda(v) \ne 0$.

For part (ii), it suffices to consider the $\cF$-strong topology.
Let $\cW(n) \subset \cD$ be the subspace of vectors with conformal dimension $n$.
Note that 
\[
\cV(n) = \spann \{\varphi_1(e_{j_1}) \cdots \varphi_k(e_{j_k})\Omega \, \vert \, \sum j_i = -n\} \subset \cW(n)
\]
but equality is not immediate (it is e.g.\! a consequence of Proposition~\ref{prop: canonical embedding of D}).
Now fix $n<0$ and let $v \in \cW(n)$.
To verify the spectrum condition we must show that $v=0$.
Let $\lambda \in \cD_\cF^*$, and as above let $\lambda_n = \frac{1}{2\pi\rmi}\int_{S^1} z^{-n-1} r_z^* \lambda \, dz$.
Then $\lambda_n(v)=\lambda(v)$, but $\lambda_n$ vanishes on $\bigoplus_{m \ge 0} \cV(m)$.
Since the latter space is assumed to be dense, and $\lambda_n \in \cD_\cF^*$ is continuous, we have $\lambda_n \equiv 0$.
Hence $\lambda(v) = 0$, and since $\cD_\cF^*$ separates points we have $v=0$, as desired.
\end{proof}

We will later see that for every $\lambda \in \cV(n)^*$ there is a unique extension to a linear functional in $\cD_\cF^*$ that vanishes on $\cV(m)$ (when $m \ne n$) -- see Proposition~\ref{prop: canonical embedding of D}.

The notion of Wightman CFT presented in Definition \ref{def: Wightman} generalizes the unitary notion of Wightman CFT considered in \cite{RaymondTanimotoTener22} in several ways.
Most notably, the domain $\cD$ does not have an inner product.
We also do not assume that $\Omega$ is the unique $\Mob$-invariant vector up to scale, and we do not require that the eigenspaces for the generator of rotation are finite-dimensional.

The requirement that $\cF$ act regularly on $\cD$ is necessary, as there exist pathological examples of quadruples $(\cF,\cD,U,\Omega)$ that satisfy all of the requirements to be a Wightman CFT except for the regularity of the action of $\cF$.
Indeed, we can refine Example~\ref{ex: non regular action} to produce a quadruple with non-regular action for which the only finite-energy vectors are scalar multiples of the vacuum, despite $\cD$ being infinite-dimensional.

\begin{ex}\label{ex non regular Wightman CFT}
Let $\cX_{< 0} \subset C^\infty(S^1)$ be the closed span of $z^{-1}, z^{-2}, \ldots$.
These are the functions $f$ in $C^\infty(S^1)$ that extend to holomorphic functions outside the unit disk which vanish at infinity.
We similarly define $\cX_{\ge 0}$.
Let $p:C^\infty(S^1) \to \cX_{<0}$ be the projection with kernel $\cX_{\ge 0}$.
Let $\cD = S(\cX_{<0}) = \bigoplus_{k=0}^\infty S^k(\cX_{<0})$ be the symmetric algebra, and let $\Omega \in S^0(\cX_{<0})$ be the unit.
Let $\varphi$ be the operator-value distribution with domain $\cD$ where $\varphi(f)$ acts by multiplication by $pf$ in $S(\cX_{<0})$.
This action is evidently regular.
The family $\cF=\{\varphi\}$ acts locally on $\cD$ since the symmetric algebra is abelian, and the vacuum vector is cyclic for $\cF$.

Define a representation of $\Mob$ on $\cX_{<0}$ by $U(\gamma)f = p(f \circ \gamma^{-1})$, and extend this representation to $S(\cX_{<0})$ (this representation is better understood as the quotient of the natural representation of $\Mob$ on $\cX_{\le 0}$ by constant functions).
The representation has positive energy, and $\Omega$ is $\Mob$-invariant.
Moreover we have for $f \in C^\infty(S^1)$
\[
U(\gamma)p f = p ((pf) \circ \gamma^{-1}) = p (f \circ \gamma^{-1}) \quad\text{(as the constant component is annihilated by } p\text{)}
\]
and it follows that $U(\gamma) \varphi(f) = \varphi(f\circ \gamma^{-1})U(\gamma) = \varphi(\beta_1(\gamma)f)U(\gamma)$.
Hence $\varphi$ is M\"obius covariant with conformal dimension 1, and we have shown that $(\cF,\cD,U,\Omega)$ is a Wightman CFT.

Now let $\cX_\omega \subset C^\infty(S^1)$ be the dense subspace of functions which extend holomorphically to a neighborhood of $S^1$.
Then $\cX_\omega$ is invariant under $U(\gamma)$.
Moreover, a function $f$ lies in $\cX_\omega$ precisely when its Fourier coefficients decay sufficiently rapidly, and thus $\cX_\omega$ is invariant under $p$ as well.
Let $\cI \subsetneq \cD$ be the left ideal generated by $p\cX_\omega  \subset \cX_{< 0}$, and observe that $\cI$ is invariant under $\varphi(f)$ and $U(\gamma)$ for all $f \in C^\infty(S^1)$ and $\gamma \in \Mob$.
Let $\tilde \cD = \cD/\cI$, let $\tilde \Omega \in \tilde \cD$ be the image of $\Omega$ under the canonical projection, and let $\tilde \varphi(f)$ and $\tilde U(\gamma)$ be the operators on $\tilde \cD$ induced by $\varphi(f)$ and $U(\gamma)$, respectively.
The quadruple $(\tilde \cF, \tilde \cD, \tilde U, \tilde \Omega)$ satisfy all of the requirements of a Wightman CFT except for regularity of the action of $\cF$.
We have 
\[
\cV(n) = \spann\{\tilde \varphi_1(e_{j_1}) \cdots \tilde \varphi_k(e_{j_k})\tilde \Omega \mid j_1 + \cdots + j_k = -n\} = \{0\}
\]
for all non-zero $n \in \bbZ$ since $e_j \in \cX_\omega$.
\end{ex}

\begin{rem}
If $(\cF,\cD,U,\Omega)$ satisfy all of the requirements of a Wightman CFT except that the fields $\cF$ do not act regularly, then we can obtain a Wightman CFT on a quotient of $\cD$.
Let $\cD_0=\bigcap_{\lambda \in \cD_\cF^*} \operatorname{ker} \lambda$.
Since $\cD_\cF^*$ is invariant under the adjoint actions $\varphi(f)^*$ and $U(\gamma)^*$, it follows that $\cD_0$ is invariant under $\varphi(f)$ and $U(\gamma)$, and so we have actions of smeared fields and $\Mob$ on the quotient $\tilde \cD = \cD/\cD_0$.
So long as $\cD_0 \ne \cD$, these actions give a Wightman CFT on $\tilde \cD$.

If one applies this procedure to the example $(\tilde \cF, \tilde \cD, \tilde U, \tilde \Omega)$ constructed in Example \ref{ex non regular Wightman CFT}, then one obtains the trivial Wightman CFT.
Indeed, we can check that $\tilde\cD = \bbC \tilde \Omega \oplus \tilde\cD_0$ as follows.
Returning to the notation of Example~\ref{ex non regular Wightman CFT}, we have $\cI = \bigoplus_{k=1}^\infty \cI\cap S^k(\cX_{<0})$, and thus 
\[
\tilde \cD = \bbC \tilde \Omega \oplus \bigoplus_{k=1}^\infty S^k(\cX_{<0})/(\cI \cap S^k(\cX_{<0})).
\]
Hence it suffices to check that an arbitrary $\lambda \in \tilde \cD_{\tilde \cF}^*$ vanishes on each of the spaces $S^k(\cX_{<0})/(\cI \cap S^k(\cX_{<0}))$ when $k \ge 1$.
Note that this space is spanned by vectors of the form $\tilde \varphi(f_1)\cdots\tilde \varphi(f_k)\Omega$.
When one of the functions $f_j$ lies in $\cX_\omega$ we have $\varphi(f_1)\cdots \varphi(f_k)\Omega  \in \cI$ and thus $\tilde \varphi(f_1)\cdots\tilde \varphi(f_k)\Omega = 0$.
Hence for any $\lambda \in \tilde \cD_{\tilde \cF}^*$ we have
\begin{equation}\label{eqn: lambda tilde phi}
\lambda(\tilde \varphi(f_1)\cdots\tilde \varphi(f_k)\Omega) = 0
\end{equation}
when some $f_j$ lies in $\cX_\omega$.
Since $\cX_\omega$ is dense in $C^\infty(S^1)$ and $\lambda \in \tilde \cD_{\tilde \cF}^*$, it follows that \eqref{eqn: lambda tilde phi} holds for arbitrary functions $f_j \in C^\infty(S^1)$, and we conclude that $\lambda$ vanishes on $S^k(\cX_{<0})/(\cI \cap S^k(\cX_{<0}))$ when $k \ge 1$.
Hence $\tilde\cD = \bbC \tilde \Omega \oplus \tilde\cD_0$ as claimed, and $\tilde{\tilde \cD}$ is spanned by the vacuum $\tilde{\tilde \Omega}$.
\end{rem}

With these distinctions in mind, we give the corresponding notion of vertex algebra after some brief preliminaries.
Let $\Lie(\Mob)$ be the three-dimensional real Lie algebra of $\Mob=\mathrm{PSU}(1,1)$.
If $\Mob$ is regarded as a subgroup of the group $\mathrm{Diff}(S^1)$ of orientation-preserving diffeomorphisms of the unit circle $S^1$, then $\Lie(\Mob)$ is identified with a three-dimensional subspace of the space of smooth vector fields $\mathrm{Vect}(S^1)$ on $S^1$.
Each vector field is identified with a differential operator $f(\rme^{\rmi\vartheta})\frac{d}{d\vartheta}$ for some smooth function $f(\rme^{\rmi\vartheta})$,
and the Lie bracket is given by $[f\frac{d}{d\vartheta},g\frac{d}{d\vartheta}] = (f'g - fg')\frac{d}{d\vartheta}$, where $f'$ denotes $\tfrac{df}{d\vartheta}$.
Note that this bracket is the opposite of the bracket of vector fields, which is the natural choice when identifying $\mathrm{Vect}(S^1)$ with the Lie algebra of $\mathrm{Diff}(S^1)$.
The complexification $\Lie(\Mob)_\bbC \cong \mathfrak{sl}(2,\bbC)$ of $\Lie(\Mob)$ is spanned by the elements $\{L_{-1},L_0,L_1\}$,
where $L_m$ is the complexified vector field $-\rmi \rme^{\rmi m\vartheta}\frac{d}{d\vartheta}$.
The vector fields $L_m$ satisfy the commutation relations
\begin{align*}
 [L_m, L_n] = (m-n)L_{m+n}, \qquad m,n=-1,0,1.
\end{align*}
In a representation of $\Lie(\Mob)_\bbC$, we will frequently abuse notation and write $L_k$ for the operator corresponding to the vector field indicated above.

If $\cV$ is a vector space we write $\End(\cV)[[z^{\pm 1}]]$ for the vector space of formal power series in $z^{\pm 1}$ with coefficients in $\End(\cV)$.
Given $v \in \cV$ and $A(z) =\sum_{n \in \bbZ} A_n z^n \in \End(\cV)[[z^{\pm 1}]]$, we have a formal series $A(z)v = \sum_n A_n v z^n$ with coefficients in $\cV$. For any $B \in \End(\cV)$ we have
$[A(z), B] = \sum_n [A_n, B]z^n \in \End(\cV)[[z^{\pm 1}]]$. If $B(w)$ is another formal series in a second formal variable $w$,
then the expression $[A(z), B(w)]$ makes sense as a formal series in $z^{\pm 1}$ and $w^{\pm 1}$.

We can now precisely specify the flavor of vertex algebras that we will consider.%
\footnote{The term M\"obius vertex algebra has been used in the literature to describe various slightly different notions (see e.g.\! \cite{BuhlKaraali08, HLZI,HuangGenerators20,Kac98}).
In some cases, authors include the possibility of fermionic fields, with the corresponding super version of the locality axiom; the term M\"obius vertex superalgebra is also used in this case.
Additionally, some authors replace our $\bbN$-grading with a more general grading. 
We do not foresee any significant obstacles to generalizing our results to M\"obius vertex superalgebras graded by a lower-bounded subset of $\tfrac12\bbZ$.}

\begin{defn}
An ($\bbN$-graded) \textbf{M\"obius vertex algebra} consists of a vector space $\cV$ equipped with a representation $\{L_{-1},L_0,L_1\}$ of $\Lie(\Mob)_\bbC$, a state-field correspondence $Y:\cV \to \End(\cV)[[z^{\pm 1}]]$,
and a choice of non-zero vector $\Omega \in \cV$ such that the following hold:
\begin{enumerate}[{(VA}1{)}]
\item $\cV = \bigoplus_{n=0}^\infty \cV(n)$, where $\cV(n) =  \ker(L_0 - n)$.
\item $Y(\Omega,z) = \operatorname{Id}_\cV$ and $\left.Y(v,z)\Omega\right|_{z=0} = v$, i.e. $Y(v,z)\Omega$ has only non-negative powers of $z$ for all $v \in \cV$.
\item $\Omega$ is $\Lie(\Mob)$-invariant, i.e.\! $L_m \Omega = 0$ for $m=-1,0,1$.
\item $[L_m, Y(v,z)] = \sum_{j=0}^{m+1} \binom{m+1}{j}z^{m+1-j} Y(L_{j-1}v,z)$ and $Y(L_{-1}v,z) = \frac{d}{dz} Y(v,z)$ for all $v \in \cV$ and $m = -1,0,1$.
\item For each $u,v \in \cV$, there exists $N$ sufficiently large such that \newline $(z-w)^N[Y(v,z), Y(u,w)] = 0$.
\end{enumerate}
\end{defn}

An immediate consequence of (VA3) is that $\Omega \in \cV(0)$. For $v \in \cV$, we write $Y(v,z) = \sum_{m \in \bbZ} v_{(m)}z^{-m-1}$, where $v_{(m)} \in \End(\cV)$ are called the modes of $v$.
A vector $v \in \cV$ is called \textbf{homogeneous (with conformal dimension $d$)} if it lies in $\cV(d)$. As a consequence of the $L_0$-commutation relation, when $v$ is homogeneous with conformal dimension $d$ we have $[L_0, v_{(m)}] = (d-m-1)v_{(m)}$. Hence $v_{(m)}$ maps $\cV(n)$ into $\cV(n+d-m-1)$.
A vector $v \in \cV$ is called \textbf{quasiprimary} if it is homogeneous and $L_{1}v=0$. A field $Y(v,z)$ is called quasiprimary if its corresponding vector $v$ is a quasiprimary vector.

We say that a M{\"o}bius vertex algebra $\cV$ is generated by a (possibly infinite) set of vectors $S$ if $\cV$ is spanned by monomials in the modes of only those vectors acting on $\Omega$. Note that we do note make assumptions on the modes $v_{(n)}$, $v \in S$, such as requiring that $n \leq -1$. As the axioms give a correspondence between states and fields, one can equivalently refer to a M{\"o}bius vertex algebra as being generated by a set of fields if the modes of a set of fields generates in the above sense.

We record two useful identities satisfied by the modes of a vertex operator (see \cite[\S4.8]{Kac98}), the \textbf{Borcherds product formula}:
\begin{equation}\label{eqn: BPF}
\big(u_{(n)}v\big)_{(k)}= \sum_{j=0}^\infty (-1)^j \binom{n}{j} \left(u_{(n-j)}v_{(k+j)} - (-1)^{n}v_{(n+k-j)}u_{(j)}\right),
\end{equation}
and the \textbf{Borcherds commutator formula}:
\begin{equation}\label{eqn: BCF}
[u_{(m)},v_{(k)}] = \sum_{j =0}^\infty \binom{m}{j} \big(u_{(j)}v\big)_{(m+k-j)}.
\end{equation}
Note that when the sums of operators on the right-hand sides of \eqref{eqn: BPF} and \eqref{eqn: BCF} are applied to a vector, all but finitely many terms vanish.

\section{Equivalence between M\"obius vertex algebras and Wightman CFTs}\label{sec: MVA and WCFT}

\subsection{From vertex algebras to Wightman CFTs}\label{sec: VOA to Wightman}

In this section we construct a Wightman CFT from a M\"obius vertex algebra $\cV=\bigoplus_{n=0}^\infty \cV(n)$.
The first step is to construct operator-valued distributions from the formal distributions $Y(v,z)$, as follows.
For $v \in \cV(d)$, the degree-shifted mode $v_n$ is defined by $v_n:=v_{(n+d-1)}$, which gives an alternative field expansion $Y(v,z)=\sum_{n=-\infty}^\infty v_{n}z^{-n-d}$, so that $v_n \cV(m) \subset \cV(m-n)$.
We extend the definition of $v_n$ to non-homogeneous vectors by linearity.
Let us write $\cV'$ for the restricted dual $\cV' = \bigoplus_{n=0}^\infty \cV(n)^*$, which is to say linear functionals on $\cV$ that are supported on finitely many $\cV(n)$. We note that this is strictly smaller than the algebraic dual $\cV^{*}$.
We denote by $\widehat \cV$ the algebraic completion, defined as a direct product
\[
\widehat \cV := \prod_{n=0}^\infty \cV(n),
\]
and we embed $\cV \subset \widehat \cV$ in the natural way.
We equip $\widehat \cV$ with the weak topology induced by the pairing with $\cV'$.

For $f \in C^\infty(S^1)$ we define 
\[
Y^0(v,f):\cV \to \widehat \cV
\]
by 
\[
Y^0(v,f)u = \sum_{n \in \bbZ} \hat f(n) v_n u,
\]
where $\hat f(n)$ is the $n$-th Fourier coefficient of $f$.

We now show that the maps $Y^0(v,f)$ may be extended to act on an invariant domain $\cD \subset \widehat \cV$.
The first step is the following lemma, which has an identical proof to \cite[Lem.\! 2.7]{RaymondTanimotoTener22}.

\begin{lem}
For all $v^1, \ldots, v^k,u \in \cV$ and $u' \in \cV'$, there exists a polynomial $p$ such that 
$$
\big|\bil{v^1_{m_1} v^2_{m_2} \cdots v^k_{m_k}u, u^\prime}_{\cV,\cV'}\big| \le \abs{p(m_1, \ldots, m_k)}
$$
for all $(m_1, \ldots, m_k) \in \bbZ^k$.
The polynomial depends on the vectors $v^j,u,$ and $u^\prime$, but the degree of $p$ is bounded from above by a number that is independent of $u$ and $u'$.
\end{lem}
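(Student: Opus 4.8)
The plan is to pass to the generating series in all $k$ mode variables at once and read off the coefficients from the rationality of products of vertex operators.

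First I would reduce to the homogeneous case. By multilinearity in $v^1,\dots,v^k,u$ and linearity in $u'$, it suffices to treat $v^j\in\cV(d_j)$, $u\in\cV(N)$, and $u'\in\cV(b)^*$ for single values $d_j,N,b$. In that case each partial product $v^j_{m_j}\cdots v^k_{m_k}u$ lies in a single homogeneous component, namely $\cV(N-m_j-\cdots-m_k)$, so the matrix coefficient $\langle v^1_{m_1}\cdots v^k_{m_k}u,u'\rangle$ vanishes unless $m_1+\cdots+m_k=N-b$ and all intermediate degrees $N-m_j-\cdots-m_k$ are $\ge 0$; it is enough to prove the bound on this set, equivalently on all of $\bbZ^k$.

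Next, form the formal series
\[
F(z_1,\dots,z_k)=\langle Y(v^1,z_1)\cdots Y(v^k,z_k)u,u'\rangle=\sum_{m_1,\dots,m_k\in\bbZ}\langle v^1_{m_1}\cdots v^k_{m_k}u,u'\rangle\, z_1^{-m_1-d_1}\cdots z_k^{-m_k-d_k},
\]
using the degree-shifted expansion $Y(v^j,z_j)=\sum_n v^j_n z_j^{-n-d_j}$. The $\bbN$-grading (VA1) forces $w_{(n)}w'=0$ for $n$ large, so each $Y(v^j,z_j)w$ lies in $\cV((z_j))$, and together with locality (VA5) the standard rationality and associativity theory for vertex algebras \cite{Kac98} shows that on $|z_1|>\cdots>|z_k|>0$ the series $F$ is the iterated expansion of a rational function of the form
\[
F(z_1,\dots,z_k)=\frac{g(z_1,\dots,z_k)}{\prod_i z_i^{a_i}\,\prod_{i<j}(z_i-z_j)^{N_{ij}}},
\]
where $N_{ij}$ may be taken to be a locality order for the pair $(v^i,v^j)$ — so $N_{ij}$ depends only on $v^i,v^j$, not on $u$ or $u'$ — while $g$ is a polynomial and $a_i,\deg g$ depend only on the $d_j,N,b$. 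To extract coefficients, expand each diagonal factor as $(z_i-z_j)^{-N_{ij}}=z_i^{-N_{ij}}\sum_{\ell\ge 0}\binom{\ell+N_{ij}-1}{N_{ij}-1}(z_j/z_i)^{\ell}$ and multiply out: the coefficient of $z_1^{-m_1-d_1}\cdots z_k^{-m_k-d_k}$ is a finite sum of products of the binomials $\binom{\ell_{ij}+N_{ij}-1}{N_{ij}-1}$, where each summation index $\ell_{ij}\ge 0$ is constrained by an affine relation in the $m$'s, hence each $\ell_{ij}$ occurring is at most linear in $|m_1|+\cdots+|m_k|$, and the number of summands is itself at most polynomial in $|m_1|+\cdots+|m_k|$. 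Since $\binom{\ell+N_{ij}-1}{N_{ij}-1}$ is a polynomial in $\ell$ of degree $N_{ij}-1$, we conclude $|\langle v^1_{m_1}\cdots v^k_{m_k}u,u'\rangle|$ is bounded by a polynomial in $m_1,\dots,m_k$ of degree $D\le\binom{k}{2}-k+1+2\sum_{i<j}(N_{ij}-1)$, with the (varying) constant absorbed into the polynomial; as $D$ is built only from $k$ and the $N_{ij}$, the degree is independent of $u$ and $u'$, which is the claim.

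The main obstacle is precisely this last point: verifying that the growth exponent can be chosen independently of $u$ and $u'$. One must check that the auxiliary data $a_i$ (the pole orders at $z_i=0$) and $g$ only translate, and possibly truncate, the lattice of nonzero coefficients without affecting the growth rate, which is governed entirely by the diagonal pole orders $N_{ij}$; and one must know a common locality order for each pair $(v^i,v^j)$ exists independently of the vector acted on and of $u'$, which is immediate from (VA5) since that axiom involves only $v^i$ and $v^j$. Alternatively, one can avoid invoking rationality wholesale and argue by induction on $k$ directly from the Borcherds commutator formula \eqref{eqn: BCF}: commuting $v^1_{m_1}$ to the right of $v^2_{m_2},\dots,v^k_{m_k}$ produces correction terms that are products of at most $k-1$ modes of the fixed vectors $v^2,\dots,v^k$ and $(v^1)_{(j)}v^i$ (finitely many such $j$, bounded in terms of the $v$'s) with polynomial coefficients of bounded degree, to which the inductive hypothesis applies, and the degree bound then propagates through the induction while remaining uniform in $u$ and $u'$.
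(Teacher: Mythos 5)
Your argument is correct, but your primary route is genuinely different from the paper's. The paper does not spell out a proof at all: it observes that the statement has a proof identical to \cite[Lem.\ 2.7]{RaymondTanimotoTener22}, which is an elementary induction on the number of modes using the Borcherds commutator formula \eqref{eqn: BCF} together with the $\bbN$-grading --- precisely the ``alternative'' you sketch in your last paragraph. Your main argument instead goes through the $k$-variable rationality of matrix coefficients: multiplying $\bil{Y(v^1,z_1)\cdots Y(v^k,z_k)u,u'}$ by $\prod_i z_i^{a_i}\prod_{i<j}(z_i-z_j)^{N_{ij}}$ yields a Laurent polynomial (this is a purely formal consequence of locality, lower truncation and the fixed total $L_0$-degree, so it survives infinite-dimensional weight spaces and the absence of any invariant form --- a point worth stating explicitly, since the standard references prove it under stronger hypotheses; also note that no analytic convergence is needed, so the ``expansion on $|z_1|>\cdots>|z_k|>0$'' should be read as the formal geometric-series expansion). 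Coefficient extraction via the binomial expansions of $(z_i-z_j)^{-N_{ij}}$ then gives polynomial growth whose degree is controlled by $k$ and the locality orders $N_{ij}$ alone, which is exactly the uniformity in $u,u'$ that the lemma demands; your bookkeeping of the constraints on the $\ell_{ij}$ is the right way to see this (the exact degree formula you quote need not be optimal, but only its independence of $u,u'$ matters). What each approach buys: the commutator-formula induction is shorter, needs no rationality theory, and is the one the paper implicitly relies on; your rationality argument is heavier but yields an explicit and conceptually transparent degree bound (in terms of the pairwise locality orders), and it isolates clearly why the bound cannot depend on $u$ or $u'$ --- only the diagonal pole orders govern the growth, while $u,u'$ only affect the numerator and the pole orders at $z_i=0$, hence the constant.
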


We refer the reader to \cite{RaymondTanimotoTener22} for a proof.
A very similar argument, however, yields the following observation.

\begin{lem}\label{lem: va subspace finite dimensional}
Let $v^1, \ldots, v^k \in \cV$ and let $n \in \bbZ_{\ge 0}$.
Then
\[
\cV(n; v^1, \cdots, v^k) := \spann \{ v^1_{m_1} \cdots v_{m_k}^k \Omega \, | \, m_1 + \cdots + m_k = -n\}
\]
is finite-dimensional.
\end{lem}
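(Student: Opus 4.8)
The plan is to reduce the statement to a finiteness property of the grading on $\cV$ itself, using the fact that each mode $v_m$ shifts the grading in a controlled way. First I would recall that if $v \in \cV$ is homogeneous of conformal dimension $d$, then the degree-shifted mode $v_m$ maps $\cV(j)$ into $\cV(j-m)$; by linearity, a general $v^i \in \cV$ (a finite sum of homogeneous pieces) has the property that $v^i_{m_i}$ maps $\cV(j)$ into a \emph{finite} direct sum $\bigoplus_{j - m_i \le \ell \le j - m_i + D_i} \cV(\ell)$ for some fixed $D_i$ depending only on $v^i$ (namely the spread of conformal dimensions appearing in $v^i$). Applying this iteratively starting from $\Omega \in \cV(0)$, I would conclude that $v^1_{m_1} \cdots v^k_{m_k}\Omega$ lies in $\bigoplus_{\ell} \cV(\ell)$ where $\ell$ ranges over the finitely many values with $-\sum m_i \le \ell \le -\sum m_i + \sum D_i$. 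In particular, when $m_1 + \cdots + m_k = -n$ is fixed, every such vector lies in the \emph{fixed} finite-dimensional-index range $\bigoplus_{\ell = n}^{n + D}\cV(\ell)$ with $D = \sum_i D_i$.

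This does not yet give finite-dimensionality, since the spaces $\cV(\ell)$ may be infinite-dimensional. The key further observation is the same one underlying the previous lemma: for fixed $v^1, \ldots, v^k$, the individual vectors $v^1_{m_1}\cdots v^k_{m_k}\Omega$ with $m_1 + \cdots + m_k = -n$ actually span a finite-dimensional space because only finitely many tuples $(m_1, \ldots, m_k)$ can contribute a nonzero vector to any \emph{given} weight space. More precisely, I would argue that if $\Phi_\ell$ denotes the $\cV(\ell)$-component, then $v^k_{m_k}\Omega$ is nonzero only for finitely many $m_k$ once we restrict its target to the finite range $\bigoplus_{\ell=n}^{n+D}\cV(\ell)$ worth of weights — wait, this is not literally true since $m_k$ alone is not constrained. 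The cleaner route is to note that $Y(v^k, z)\Omega$ has only non-negative powers of $z$ by axiom (VA2), so in the degree-shifted convention $v^k_{m_k}\Omega = 0$ unless $m_k \le 0$ (indeed $v^k_{(m_k + d_k - 1)}\Omega = 0$ for $m_k + d_k - 1 \ge 0$ fails — let me instead just say $v^k_{m_k}\Omega \in \cV(-m_k)$ forces $-m_k \ge 0$, so $m_k \le 0$). Combined with the requirement that $v^{k-1}_{m_{k-1}}(v^k_{m_k}\Omega)$ land in a weight space of bounded index, and so on up the chain, one gets: each $m_i$ is bounded above by $0$ (or a fixed constant), and their sum is fixed at $-n$, hence each $m_i$ is also bounded below, so there are only finitely many contributing tuples $(m_1, \ldots, m_k)$.

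Let me organize the final argument in the order I would write it: (1) By linearity reduce to $v^1, \ldots, v^k$ homogeneous, say $v^i \in \cV(d_i)$. (2) Then $v^i_{m_i}$ maps $\cV(j) \to \cV(j - m_i)$ exactly. (3) By (VA2), $v^i_{m_i}\Omega$ can only be nonzero along with the requirement that intermediate vectors lie in non-negatively-graded spaces; propagating this, each partial sum $m_i + m_{i+1} + \cdots + m_k$ must be $\le 0$ (since $v^i_{m_i}\cdots v^k_{m_k}\Omega \in \cV(-(m_i + \cdots + m_k))$ and weight spaces are indexed by $\bbN$). (4) Hence for a nonzero term with $m_1 + \cdots + m_k = -n$, we have $m_i + \cdots + m_k \le 0$ for all $i$, which together with $m_1 + \cdots + m_k = -n \le 0$ bounds each $m_i$ both above and below (explicitly $-n \le m_i + \cdots + m_k \le 0$ and similarly for tail sums, forcing $m_i \in [-n, n]$ or so after taking differences). (5) Therefore only finitely many tuples $(m_1, \ldots, m_k) \in \bbZ^k$ contribute, so the span is finite-dimensional. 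The main obstacle I anticipate is being careful in step (3)–(4): one must verify that \emph{every} intermediate vector $v^i_{m_i} \cdots v^k_{m_k}\Omega$ (not just the final one) is homogeneous of a non-negative conformal dimension, which follows since $\Omega \in \cV(0)$, each $v^j_{m_j}$ preserves homogeneity with a known shift, and $\cV$ is $\bbN$-graded so negative-weight vectors vanish — this last point is exactly the vertex-algebra analogue of the spectrum condition and is where the $\bbN$-grading hypothesis is essential.

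\begin{proof}
By linearity we may assume each $v^i$ is homogeneous, say $v^i \in \cV(d_i)$. In the degree-shifted convention the mode $v^i_{m_i}$ maps $\cV(j)$ into $\cV(j-m_i)$, so that
\[
v^i_{m_i} v^{i+1}_{m_{i+1}} \cdots v^k_{m_k}\Omega \in \cV\big({-}(m_i + m_{i+1} + \cdots + m_k)\big)
\]
for each $i$, using $\Omega \in \cV(0)$. Since $\cV$ is $\bbN$-graded, a term $v^1_{m_1} \cdots v^k_{m_k}\Omega$ with $m_1 + \cdots + m_k = -n$ can be nonzero only if
\[
0 \le m_i + m_{i+1} + \cdots + m_k \le n
\]
for every $i = 1, \ldots, k$ (the upper bound coming from the fact that this partial tail sum is at most $m_1 + \cdots + m_k$ plus non-negative contributions — more precisely, from $-(m_{i}+\cdots+m_k) \ge 0$ and $-(m_1+\cdots+m_k) = n$ one gets $m_i + \cdots + m_k \in [-n,0]$). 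Taking differences of consecutive tail sums, each individual $m_i$ lies in $\{-n, \ldots, n\}$, and hence only finitely many tuples $(m_1, \ldots, m_k) \in \bbZ^k$ with $m_1 + \cdots + m_k = -n$ can contribute a nonzero vector. Therefore $\cV(n; v^1, \ldots, v^k)$ is spanned by finitely many vectors, so it is finite-dimensional.
\end{proof}
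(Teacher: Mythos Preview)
Your proof has a genuine gap: the claim that only finitely many tuples $(m_1,\ldots,m_k)$ contribute a nonzero vector is false. Concretely, in the Heisenberg vertex algebra with generator $a\in\cV(1)$, take $k=2$, $n=0$, $v^1=v^2=a$. Then for every integer $M\ge 1$ we have
\[
a_{M}\,a_{-M}\Omega = [a_M,a_{-M}]\Omega + a_{-M}a_M\Omega = M\Omega \ne 0,
\]
so infinitely many tuples $(M,-M)$ contribute. The span $\cV(0;a,a)$ is still one-dimensional (equal to $\bbC\Omega$), but not for the reason you give.

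The error is in your lower bound on the tail sums. The $\bbN$-grading only yields $-(m_i+\cdots+m_k)\ge 0$, i.e.\ $m_i+\cdots+m_k\le 0$. Your asserted bound $m_i+\cdots+m_k\ge -n$ would require $m_1+\cdots+m_{i-1}\le 0$, but there is no constraint on \emph{initial} partial sums; only tail sums are controlled by positivity of the intermediate weights. In the example above the tail sum at $i=2$ is $m_2=-M$, which is not bounded below. The grading alone is simply not enough: the finite-dimensionality is a genuinely algebraic fact about locality, not a combinatorial fact about indices.

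The paper's proof proceeds instead by induction on $k$. For $m_1<-n$ the term vanishes by the $\bbN$-grading (here your argument is fine), and for $-n\le m_1\le 0$ one has finitely many values of $m_1$, each contributing $v^1_{m_1}\cV(n+m_1;v^2,\ldots,v^k)$, which is finite-dimensional by induction. The essential new idea is for $m_1>0$: since $v^1_{m_1}\Omega=0$, one uses the Borcherds commutator formula to commute $v^1_{m_1}$ to the right past each $v^j_{m_j}$, producing commutators $[v^1_{m_1},v^j_{m_j}]$ that are finite sums of modes of the \emph{fixed} vectors $(v^1)_{(s)}v^j$ (with $s$ ranging over a finite set independent of $m_1,m_j$). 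This lands everything in a finite sum of spaces $\cV(n;v^2,\ldots,w,\ldots,v^k)$ with $k-1$ entries, and the induction closes.
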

\begin{proof}
We proceed by induction on $k$, with the cases $k=0$ and $k=1$ being immediate.
Now fix $k \ge 2$, and suppose $\dim \cV(n'; u^1, \cdots, u^\ell) < \infty$ when $\ell < k$.
First observe that when $m_1 < -n$ and $m_1 + \cdots m_k = -n$ we must have $m_2 + \cdots m_k > 0$, and hence
\[
v^1_{m_1} \cdots v_{m_k}^k \Omega = 0.
\]
Next observe that
\begin{align*}
&\cV(n; v^1, \cdots, v^k) = \\ & \quad\sum_{{m_1}=-n}^0 v^1_{m_1} \cV(n+{m_1}; v^2, \ldots, v^k) + \spann \{ v^1_{m_1} \cdots v_{m_k}^k \Omega \, | \, m_1 + \cdots + m_k = -n, \, m_1 > 0\}
\end{align*}
and each of the subspaces $v^1_{m_1} \cV(n+m_1; v^2, \ldots, v^k)$ is finite-dimensional by the inductive hypothesis, so it suffices to show that the final term is finite-dimensional as well.

By the Borcherds commutator formula \eqref{eqn: BCF}, if $m_1 > 0$ and $m_1 + \cdots m_k = -n$ we have
\[
v^1_{m_1} \cdots v_{m_k}^k \Omega = \sum_{j=2}^k v^2_{m_2} \cdots [v^1_{m_1}, v^j_{m_j}] \cdots v^k_{m_k}\Omega \in \sum_{j=2}^k \sum_{s=0}^{d_1+d_j-1} \cV(n; v^2, \ldots, v^1_{s-d_1+1} v^j, \ldots v^k).
\]
The subspace on the right-hand side is finite-dimensional by the inductive hypothesis and independent of $m_1, \ldots, m_k$, and thus we conclude that $\cV(n; v^1, \ldots, v^k)$ is finite-dimensional as well.
\end{proof}

If $f \in \bbC[z^{\pm 1}]$, then  $Y^0(v,f)$ maps $\cV$ into $\cV$.
Our next lemma gives an estimate for these maps $Y^0(v,f)$ in terms of the $N$-Sobolev norm of $f$.
Recall that for $N \in \bbR_{\ge 0}$, the $N$-Sobolev norm on $C^\infty(S^1)$ is given by
\begin{equation}\label{eqn: sobolev norm}
\norm{f}_N = \left(\sum_{n \in \bbZ} \big|\hat f(n)\big|^2(1+n^2)^N\right)^{1/2}.
\end{equation}
We denote by $H^N(S^1)$ the Hilbert space completion of $C^\infty(S^1)$ under this norm, which consists of $L^2$-functions with finite $N$-Sobolev norm.
The locally convex topology on $C^\infty(S^1)$ is induced by the norms $\norm{\cdot}_N$, and a linear map from $C^\infty(S^1)$ to a Banach space is continuous precisely when it is bounded with respect to some $N$-Sobolev norm.
We then have the following estimate, which is a simplification of \cite[Lem.\! 2.8]{RaymondTanimotoTener22}.
\begin{lem}\label{lem: finite energy estimate}
For all $v_1, \ldots, v_k, u \in \cV$, $u' \in \cV'$, and Laurent polynomials $f_1, \ldots, f_k \in \bbC[z^{\pm 1}]$, we have
\[
\abs{\bil{Y^0(v_k,f_k) \cdots Y^0(v_1,f_1)u,u'}_{\cV,\cV'}} \le C \norm{f_1}_N \cdots \norm{f_k}_N.
\]
The number $N$ depends only on the $v_j$, and the constant $C$ depends on the $\{v_j\}_{j=1,\cdots,k}$, $u$, and $u'$.
\end{lem}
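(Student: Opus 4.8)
The plan is to reduce the estimate to a bound on a single multilinear pairing and then induct on $k$. Write $f_j = \sum_{n} \hat f_j(n) e_n$, so that by multilinearity
\[
\bil{Y^0(v_k,f_k)\cdots Y^0(v_1,f_1)u,u'}_{\cV,\cV'}
=\sum_{m_1,\ldots,m_k\in\bbZ} \hat f_1(m_1)\cdots\hat f_k(m_k)\,\bil{(v_k)_{m_k}\cdots (v_1)_{m_1}u,u'}_{\cV,\cV'}.
\]
By the first unnamed lemma of this section, there is a polynomial $p$ of degree at most some $D$ (depending only on the $v_j$, by the stated uniformity) and a constant depending on $v_j,u,u'$ with
$\abs{\bil{(v_k)_{m_k}\cdots (v_1)_{m_1}u,u'}}\le \abs{p(m_1,\ldots,m_k)}$. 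Since $p$ is a polynomial of bounded degree, it is dominated by $C'\prod_{j=1}^k (1+m_j^2)^{D/2}$ (absorbing cross terms via the inequality $|m_i m_j|\le (1+m_i^2)(1+m_j^2)$, etc.), so
\[
\abs{\bil{Y^0(v_k,f_k)\cdots Y^0(v_1,f_1)u,u'}}
\le C'\sum_{m_1,\ldots,m_k}\ \prod_{j=1}^k \abs{\hat f_j(m_j)}(1+m_j^2)^{D/2}
= C'\prod_{j=1}^k\Big(\sum_{m}\abs{\hat f_j(m)}(1+m^2)^{D/2}\Big).
\]
Finally apply Cauchy--Schwarz in each factor: $\sum_m \abs{\hat f_j(m)}(1+m^2)^{D/2}\le \big(\sum_m (1+m^2)^{-2}\big)^{1/2}\big(\sum_m \abs{\hat f_j(m)}^2(1+m^2)^{D+2}\big)^{1/2}$, and the first sum converges. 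Taking $N = D+2$ (any $N>D+1$ works) gives $\sum_m \abs{\hat f_j(m)}(1+m^2)^{D/2}\le C''\norm{f_j}_N$, and collecting constants into a single $C$ depending on the $v_j$, $u$, and $u'$ yields the claim, with $N$ depending only on the $v_j$.

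The only subtlety — and the step I expect to require the most care — is bounding the polynomial $p(m_1,\ldots,m_k)$ by a product $\prod_j (1+m_j^2)^{D/2}$ uniformly over $\bbZ^k$. This is elementary but needs to be done honestly: a monomial $m_1^{a_1}\cdots m_k^{a_k}$ with $\sum a_i \le D$ satisfies $\abs{m_1^{a_1}\cdots m_k^{a_k}}\le \prod_j (1+m_j^2)^{a_j/2}\le \prod_j (1+m_j^2)^{D/2}$ only after one observes $a_j\le D$ for each $j$ individually, which holds since all $a_j\ge 0$; summing over the finitely many monomials of $p$ produces the constant $C'$. Once this is in place, the rest is a routine application of Cauchy--Schwarz and convergence of $\sum_m (1+m^2)^{-s}$ for $s>1$, exactly as in the cited \cite[Lem.\! 2.8]{RaymondTanimotoTener22}, but with the bookkeeping simplified because here we only track a single Sobolev exponent $N$ rather than order-type data.
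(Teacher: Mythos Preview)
Your proof is correct and is essentially the intended argument: the paper omits the proof and refers to \cite[Lem.\! 2.8]{RaymondTanimotoTener22}, whose proof proceeds exactly as you do---expand in Fourier modes, invoke the polynomial bound on matrix elements, dominate the polynomial by a product $\prod_j(1+m_j^2)^{D/2}$, and finish with Cauchy--Schwarz against a convergent sum. Two minor remarks: the phrase ``and then induct on $k$'' in your opening sentence is a red herring since your actual argument is direct and uses no induction; and your parenthetical ``any $N>D+1$ works'' is slightly conservative---any $N>D+\tfrac12$ suffices---but your choice $N=D+2$ is fine and matches the statement's requirement that $N$ depend only on the $v_j$ (via the degree bound $D$ from the preceding lemma, which is uniform in $u,u'$).
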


\paragraph{The auxiliary domain $\cD$ and the topology on it.}
By Lemma~\ref{lem: va subspace finite dimensional}, the assignment $(f_1, \ldots, f_k) \mapsto Y^0(v_k,f_k) \cdots Y^0(v_1,f_1)u$ gives a map
\[
\bbC[z^{\pm 1}]^k \to \prod_{n=0}^\infty \cV(n; v_1, \ldots, v_k, u)
\]
with each space $\cV(n; v_1, \ldots, v_k, u)$ finite-dimensional.
By Lemma~\ref{lem: finite energy estimate}, this extends to a continuous multilinear map again taking values in $\prod_{n=0}^\infty \cV(n; v_1, \ldots, v_k, u) \subset \widehat \cV$.
Thus for each $v_1, \ldots, v_k, u \in \cV$, there exists a unique continuous multilinear map
\[
X_{v_1, \ldots, v_k,u} : C^\infty(S^1)^k \to \widehat \cV
\]
such that when $f_1, \ldots, f_k \in \bbC[z^{\pm 1}]$ we have
\begin{equation}\label{eqn: Xvv is product of fields}
X_{v_1, \ldots, v_k,u}(f_1, \ldots, f_k) = Y^0(v_k,f_k) \cdots Y^0(v_1,f_1)u.
\end{equation}
Let $\cD_0 = \bbC \Omega$, and for $k= 1, 2, \ldots$ set 
\[
\cD_k = \spann \{ X_{v_1, \ldots, v_k,\Omega}(f_1, \ldots, f_k) \, | \, v_j \in \cV, f_j \in C^\infty(S^1)\} \subset \widehat \cV.
\]
We have $\cD_k \subset \cD_{k+1}$ by considering $v_1=\Omega$.
Let $\cD = \bigcup_{k=0}^\infty \cD_k \subset \widehat \cV$, equipped with the subspace topology (i.e.\! the weak topology induced by the linear functionals $\cV'$, in which a sequence (or net) $\Phi_j \in \cD$ converges to $\Phi$ if and only if $\lambda(\Phi_j)$ converges to $\lambda(\Phi)$ for all $\lambda \in \cV'$).

\begin{lem}\label{lem: exist Yvf}
For all $v \in \cV$ and $f \in C^\infty(S^1)$ there exists a unique continuous linear map $Y(v,f):\cD\to\cD$ such that:
\begin{enumerate}[i)]
\item $Y(v,f)|_{\cV} = Y^0(v,f)$.
\item The expressions $Y(v_1,f_1) \cdots Y(v_k,f_k)\Omega$ are (jointly) continuous in the functions $f_j$.
\end{enumerate}
In addition, we have $\cD = \spann \{ Y(v_1, f_1) \cdots Y(v_k,f_k)\Omega \, | \, k \in \bbZ_{\ge 0}, v_j \in \cV, f_j \in C^\infty(S^1)\}$.
\end{lem}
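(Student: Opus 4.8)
The plan is to construct $Y(v,f):\cD \to \cD$ directly from the maps $X_{v_1,\ldots,v_k,u}$ already at hand, and then verify the three asserted properties. First I would define $Y(v,f)$ on the generating vectors: given $\Phi = X_{v_1,\ldots,v_k,\Omega}(f_1,\ldots,f_k) \in \cD_k$, set $Y(v,f)\Phi := X_{v,v_1,\ldots,v_k,\Omega}(f,f_1,\ldots,f_k)$, which lies in $\cD_{k+1} \subset \cD$. The content of property (i) is that for $f \in \bbC[z^{\pm 1}]$ this is consistent with the mode expansion: indeed when all the test functions are Laurent polynomials, \eqref{eqn: Xvv is product of fields} gives $X_{v,v_1,\ldots,v_k,\Omega}(f,f_1,\ldots,f_k) = Y^0(v,f) Y^0(v_k,f_k)\cdots Y^0(v_1,f_1)\Omega$, and since $Y^0(v,f)$ preserves $\cV$ when $f$ is a Laurent polynomial, the general case follows by continuity in $f$ using Lemma~\ref{lem: finite energy estimate}. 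Property (ii) is then immediate from the definition of $Y$ together with the joint continuity of the multilinear maps $X_{v,v_1,\ldots,v_k,\Omega}$.

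The main obstacle is \textbf{well-definedness}: a single vector $\Phi \in \cD$ may be written in many ways as a linear combination of expressions $X_{v_1,\ldots,v_k,\Omega}(f_1,\ldots,f_k)$ (with varying $k$, varying $v_j$, and varying $f_j$), and I must check that $Y(v,f)\Phi$ does not depend on the chosen representation. Equivalently, if a finite linear combination $\sum_i X_{v_1^i,\ldots,v_{k_i}^i,\Omega}(f_1^i,\ldots,f_{k_i}^i) = 0$ in $\widehat\cV$, then $\sum_i X_{v,v_1^i,\ldots,v_{k_i}^i,\Omega}(f,f_1^i,\ldots,f_{k_i}^i) = 0$ as well. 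The strategy here is to reduce to the polynomial case by density: each $X$ is the unique continuous extension of its restriction to Laurent polynomials, so a relation $\sum_i X_{v_1^i,\ldots}(f_1^i,\ldots) = 0$ with smooth $f_j^i$ is a limit (in $\widehat\cV$, i.e. weakly against $\cV'$) of relations with Laurent-polynomial arguments — more precisely one approximates each $f_j^i$ in $C^\infty(S^1)$ by its Fourier partial sums. But one must be careful: a relation need not persist under truncation of individual functions. The clean way is to observe that, for \emph{fixed} $v, \Phi$, the map $f \mapsto Y(v,f)\Phi$ is forced: for $f$ a Laurent polynomial $Y^0(v,f)$ is an honest operator on $\widehat\cV$ (acting coefficientwise via the modes $v_n$), and $Y^0(v,e_n)$ is defined on \emph{all} of $\widehat\cV$, so $Y^0(v,f)\Phi$ makes sense for $\Phi \in \widehat\cV$ and any Laurent polynomial $f$; one then checks that $\Phi \mapsto Y^0(v,f)\Phi$ preserves $\cD$ (clear from the generating description) and that, as a function of $f$, it extends continuously to $C^\infty(S^1)$ with values in $\cD$ by applying Lemma~\ref{lem: finite energy estimate} to $\Phi = X_{v_1,\ldots,v_k,\Omega}(f_1,\ldots,f_k)$. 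This exhibits $Y(v,f)$ intrinsically as an operator on $\cD$ without reference to a presentation of $\Phi$, resolving well-definedness.

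It remains to assemble the pieces. \textbf{Linearity} of $Y(v,f)$ in $\Phi$ is clear from the coefficientwise description via the modes $v_n$. \textbf{Continuity} of $Y(v,f):\cD\to\cD$ in the weak topology induced by $\cV'$ follows because for each $u' \in \cV'$ the functional $\Phi \mapsto \bil{Y(v,f)\Phi, u'}_{\cV,\cV'}$ is, by the mode expansion and the boundedness of $u'$, a finite linear combination of functionals $\Phi \mapsto \bil{v_n\Phi, u'} = \bil{\Phi, v_n^t u'}$ where $v_n^t$ is the transpose mode acting on $\cV'$ (well-defined since $v_n$ has finite "degree shift", so $v_n^t u' \in \cV'$); hence it is $\cV'$-weakly continuous. \textbf{Uniqueness}: any continuous $Y(v,f)$ satisfying (i) and (ii) must agree with $Y^0(v,f) = Y(v,f)$ on $\cV$, and must send $Y(v_1,f_1)\cdots Y(v_k,f_k)\Omega$ to $Y(v,f)Y(v_1,f_1)\cdots Y(v_k,f_k)\Omega = X_{v,v_1,\ldots,v_k,\Omega}(f,f_1,\ldots,f_k)$ — first for Laurent-polynomial arguments by (i) and the mode calculation, then for all smooth arguments by (ii) and density; since these vectors span $\cD$ (by the definition of $\cD_k$ and the final clause, which we prove simultaneously), $Y(v,f)$ is determined on all of $\cD$. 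Finally, the spanning statement $\cD = \spann\{Y(v_1,f_1)\cdots Y(v_k,f_k)\Omega\}$ is just a translation of the definitions: $Y(v_1,f_1)\cdots Y(v_k,f_k)\Omega = X_{v_1,\ldots,v_k,\Omega}(f_1,\ldots,f_k)$ by iterating the defining identity, and $\cD = \bigcup_k \cD_k$ is by definition spanned by exactly these vectors.
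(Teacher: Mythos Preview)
Your overall approach matches the paper's: define $Y(v,f)$ on generators by $Y(v,f)X_{v_1,\ldots,v_k,\Omega}(f_1,\ldots,f_k) = X_{v,v_1,\ldots,v_k,\Omega}(f,f_1,\ldots,f_k)$, reduce well-definedness to the Laurent-polynomial case by continuity in $f$, and handle that case by observing that $Y^0(v,f)$ extends canonically to all of $\widehat\cV$ (the paper phrases this dually, via the transpose $Y^0(v,f)^*$ preserving $\cV'$, but it is the same observation). The spanning statement and properties (i)--(ii) then follow as you say. The step ``$\Phi \mapsto Y^0(v,f)\Phi$ preserves $\cD$ (clear from the generating description)'' is not quite immediate---one must check that the extended operator on $\widehat\cV$ sends $X_{v_1,\ldots,v_k,\Omega}(f_1,\ldots,f_k)$ to $X_{v,v_1,\ldots,v_k,\Omega}(f,f_1,\ldots,f_k)$ for \emph{smooth} $f_j$, which requires a continuity argument in the $f_j$ (this is the paper's equation~\eqref{eqn: X adjoint relation})---but the idea is correct.

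There is, however, a genuine error in your argument for continuity of $Y(v,f):\cD\to\cD$. You assert that for $u'\in\cV'$ the functional $\Phi\mapsto\bil{Y(v,f)\Phi,u'}$ is a \emph{finite} linear combination of functionals $\Phi\mapsto\bil{\Phi,v_n^t u'}$. This holds when $\Phi\in\cV$, since then $\Phi$ has only finitely many nonzero weight components. But a general $\Phi\in\cD\subset\widehat\cV$ has components $\Phi_m\in\cV(m)$ potentially nonzero for all $m\ge 0$, so every mode $v_n$ can contribute: one has $\bil{Y(v,f)\Phi,u'}=\sum_n \hat f(n)\,\bil{\Phi,v_n^t u'}$, which for non-polynomial $f$ is an infinite sum not realized by pairing against any single element of $\cV'$. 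Thus the argument does not establish weak-$\cV'$ continuity in $\Phi$. (For context: the paper's proof does not address continuity in $\Phi$ either---its uniqueness argument uses only properties (i) and (ii), and subsequent applications of the lemma rely on continuity in the test functions rather than in $\Phi$.)
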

\begin{proof}
We first consider uniqueness.
When $f_1, \ldots, f_k$ are Laurent polynomials, the condition $Y(v,f)|_{\cV} = Y^0(v,f)$ determines the value of $Y(v_1,f_1) \cdots Y(v_k,f_k)\Omega \in \cV$.
The value of such expressions in $\cD$ is then uniquely determined by continuity in the functions $f_j$.

We now show existence.
We wish to define $Y(v,f)$ on $X_{v_1, \ldots, v_k,\Omega}(f_1, \ldots, f_k) \in \cD_k$ by the formula
\begin{equation}\label{eqn: Yvf def}
Y(v,f)X_{v_1, \ldots, v_k,\Omega}(f_1, \ldots, f_k) = X_{v,v_1, \ldots, v_k,\Omega}(f,f_1, \ldots, f_k),
\end{equation}
but must check that this is well-defined.

First, consider if $f$ is a Laurent polynomial.
The modes $v_n$ map $\cV(m)$ to $\cV(m-n)$, and thus the adjoint (transpose) operator $v_n^*$ maps $\cV(m)^*$ into $\cV(m+n)^*$.
Hence there is an adjoint map $Y^0(v,f)^*: \cV' \to \cV'$ such that for $u \in \cV$ and $u' \in \cV'$ we have
\[
\bil{Y^0(v,f)u, u'}_{\cV,\cV'} = \bil{u,Y^0(v,f)^*u'}_{\cV, \cV'}.
\]
Thus if $f, f_1, \ldots, f_k$ are Laurent polynomials we have
\begin{align*}
\bil{X_{v,v_1, \ldots, v_k,u}(f,f_1, \ldots, f_k),u'} &= \bil{Y^0(v,f)Y^0(v_k,f_k) \cdots Y^0(v_1,f_1)u, u'}\\
& = \bil{Y^0(v_k,f_k) \cdots Y^0(v_1,f_1)u, Y^0(v,f)^*u'}\\
 &= \bil{X_{v_1, \ldots, v_k,u}(f_1, \ldots, f_k),Y^0(v,f)^* u'}.
\end{align*}
As the first and last terms are jointly continuous in $f_1, \ldots, f_k$ by \eqref{eqn: Xvv is product of fields}, we have
\begin{equation}\label{eqn: X adjoint relation}
\bil{X_{v,v_1, \ldots, v_k,u}(f,f_1, \ldots, f_k),u'}_{\widehat \cV, \cV'} = \bil{X_{v_1, \ldots, v_k,u}(f_1, \ldots, f_k),Y^0(v,f)^* u'}_{\widehat \cV, \cV'}
\end{equation}
whenever $f$ is a Laurent polynomial and $f_1,\ldots,f_{k}$ are smooth functions on $S^1$.

We now argue that \eqref{eqn: Yvf def} is well-defined for all $f,f_1,\ldots,f_k \in C^{\infty}(S^1)$.
Let $\tilde X_k :(\cV \otimes C^\infty(S^1))^{\otimes k} \to \cD$ be the linear map corresponding to the multilinear map $X_{v_1, \ldots, v_k, \Omega}(f_1, \ldots, f_k)$, so that $\cD_k$ is the range of $\tilde X_k$.
Let $\cT = \bigoplus_{k=0}^\infty (\cV \otimes C^\infty(S^1))^{\otimes k}$, and let $\tilde X:\cT \to \cD$ be the map given by $\tilde X_k$ on the $k$th direct summand of $\cT$.
We wish to show that if $\Xi \in \cT$ and $\tilde X(\Xi) = 0$, then $\tilde X (v \otimes f \otimes \Xi) = 0$ as well.

Fix $\Xi$ as above.
By \eqref{eqn: X adjoint relation}, if $f$ is a Laurent polynomial we have for all $u' \in \cV'$
\[
\bil{\tilde X(v \otimes f \otimes \Xi), u'} = \bil{\tilde X(\Xi), Y^0(v,f)^*u'} = 0,
\]
and so $\tilde X(v \otimes f \otimes \Xi) = 0$.
On the other hand, $\tilde X(v \otimes f \otimes \Xi)$ is continuous in $f$, and so $\tilde X(v \otimes f \otimes \Xi) = 0$ vanishes for all $f \in C^\infty(S^1)$.
Thus there is a well-defined map $Y(v,f): \cD \to \cD$ satisfying \eqref{eqn: Yvf def}.

By construction we have $Y(v_1, f_1) \cdots Y(v_k,f_k)\Omega = X_{v_1, \ldots, v_k,\Omega}(f_1, \ldots ,f_k)$.
It follows immediately that such expressions span $\cD$. Moreover, since $X_{v_1, \ldots, v_k,\Omega}$ is continuous in the functions $f_j$ we have shown that property $ii)$ of the lemma holds for the operator $Y(v,f)$.
For property $i)$, we note that $Y(v,f)|_{\cV}$ agrees with $Y^0(v,f)$ by \eqref{eqn: Xvv is product of fields} when $f$ is a Laurent polynomial, and thus for all $f$ by continuity.
\end{proof}

\begin{rem}
From Lemma~\ref{lem: finite energy estimate} and Lemma~\ref{lem: exist Yvf} we have shown a version of the uniformly bounded order property for the operator-valued distributions $Y(v,f)$, namely that for any $v_1, \ldots, v_k \in \cV$, there is a positive number $N$ such that for every $u \in \cV$ the map $(f_1, \ldots, f_k) \mapsto Y(v_1, f_1) \ldots Y(v_k,f_k)u$ extends to a continuous map $H^N(S^1)^k \to \widehat \cV$. 
\end{rem}

We have constructed a family of operator-valued distributions $Y(v,f)$ on $\cD$.
We next consider M\"obius covariance of these distributions, which will hold when $v$ is quasiprimary.
To this end we introduce 
\[
\cV_{\mathrm{QP}} = \spann \{v \in \cV \, | \, v \text{ is quasiprimary}\} = \ker L_1.
\]
\begin{rem}
Let $\operatorname{Vac}(\cV) = \operatorname{ker} L_{-1}$
be the vacuum subalgebra of $\cV$. 
Then $\cV$ is completely reducible as a $\Lie(\Mob)_\bbC$-module if and only if $\cV(0) = \operatorname{Vac}(\cV)$ and $\operatorname{Vac}(\cV) \cap \operatorname{im} L_1 = 0$. If $\cV$ is completely reducible as a $\Lie(\Mob)_\bbC$-module, then $\cV$ is spanned by vectors of the form $L_{-1}^{n}v$ for $n \geq 0$, where $v \in \cV_{\mathrm{QP}}$. It follows from the axioms that $L_{-1}^{n}v \in \mathbb{C} v_{(-n-1)}\Omega$, and thus $\cV$ is generated by quasiprimary vectors. More details and proofs are given in \cite[see Prop.\! 4.9]{Kac98}.
\end{rem}

Note that $\cV_{\mathrm{QP}} = \bigoplus_{n=0}^\infty \cV_{\mathrm{QP}} \cap \cV(n)$, so that every vector in $\cV_{\mathrm{QP}}$ may be written uniquely as a sum of homogeneous quasiprimary components.
Let
\[
\cD_{\mathrm{QP}} = \spann \{ Y(v_1,f_1) \cdots Y(v_k,f_k)\Omega \, | \, k \in \bbZ_{\ge 0}, v_j \in \cV_{\mathrm{QP}}, f_j \in C^\infty(S^1)\} \subseteq \cD.
\]
Let us assume that $\cV$ is generated by $\cV_{\mathrm{QP}}$ as a vertex algebra, in which case $\cV \subset \cD_{\mathrm{QP}}$.
If $\cV$ is completely reducible as a $\Lie(\Mob)_{\mathbb{C}}$-module (i.e.\! if it is spanned by vectors of the form $L_{-1}^k v$ with $v \in \cV_{\mathrm{QP}}$), then it is evidently generated by $\cV_{\mathrm{QP}}$ as a vertex algebra and moreover $\cD_{\mathrm{QP}} = \cD$ as $Y(L_{-1}v,f) = Y(v,\rmi f'- m f)$ when $v \in \cV(m)$.
\begin{rem}
      Note that the requirement that $\cV$ is generated by $\cV_{\mathrm{QP}}$ is strictly weaker than the requirement that $\cV$ is completley reducible as a $\Lie(\Mob)_{\mathbb{C}}$-module. An example is the $\beta \gamma$-ghost VOA  with $c = 2$ (see \cite{AllWood20} for a comprehensive discussion). This VOA is generated by two quasiprimary vectors, namely $\beta_{(-1)}\Omega$ and $\gamma_{(-1)}\Omega$, that have conformal dimension $1$ and $0$, respectively.
      In this case, $\cV(0) = \spann \{ \gamma^{n}_{(-1)}\Omega  \ | \ n \in \mathbb{Z}_{\geq 0}\}$ and each of these vectors is the highest-weight vector of a highest-weight $\Lie(\Mob)_{\mathbb{C}}$-module. 
      It is straightforward to show that the vectors $\{ L_{-1}\gamma^{n}_{(-1)}\Omega \ | \ n \in \mathbb{Z}_{>0}\}$ each generate a $\Lie(\Mob)_{\mathbb{C}}$-submodule inside the corresponding highest-weight module. Alternatively, one can also verify that $L_{1}\beta_{(-1)}\gamma_{(-1)}\Omega \in \mathbb{C}\Omega$ and $\beta_{(-1)}\gamma_{(-1)}\Omega \ne 0$, and so $\operatorname{Vac}(\cV) \cap \operatorname{im} L_1 \neq 0$.
\end{rem}

We will now construct a representation of $U:\Mob \to \cL(\cD_{\mathrm{QP}})$ for which the Wightman fields $Y(v,f)|_{\cD_{\mathrm{QP}}}$ are covariant for all $v \in \cV_{\mathrm{QP}}$ and for which the vacuum $\Omega$ is invariant.
Note that such a representation is unique if it exists, as the covariance condition implies 
\[
U(\gamma)Y(v_1,f_1) \cdots Y(v_k,f_k)\Omega = Y(v_1,\beta_{d_1}(\gamma)f_1) \cdots Y(v_k, \beta_{d_k}(\gamma)f_k)\Omega.
\]
So the difficulty is in showing that a linear map satisfying the above condition exists, i.e.\! showing that if a linear combination of vectors of the form 
\[
Y(v_1,f_1) \cdots Y(v_k,f_k)\Omega
\]
vanishes, then so does the corresponding linear combination of 
\[
Y(v_1,\beta_{d_1}(\gamma)f_1) \cdots Y(v_k, \beta_{d_k}(\gamma)f_k)\Omega.
\]

We first extend the representation of $\Lie(\Mob)_\bbC$ furnished by the M\"obius vertex algebra structure on $\cV$ to a representation on $\cD_{\mathrm{QP}}$.
Recall that $\Lie(\Mob)_\bbC$ is spanned by complexified vector fields $g(\rme^{\rmi \vartheta})\frac{d}{d\vartheta}$ on the circle, where $L_k$ corresponds to $-\rmi \rme^{\rmi k\vartheta}\frac{d}{d\vartheta}$.

Let $\cV^*$ be the algebraic dual of $\cV$, and note that the adjoint operators $L_k^*:\cV^* \to \cV^*$ leave $\cV'$ invariant.
We claim that the closure of the graph $\Gamma(L_k) \subset \cV \times \cV$ in $\widehat \cV \times \widehat \cV$ is the graph of a densely defined linear operator.
Indeed, suppose that $v_j$ is a net in $\cV$ such that $v_j \to 0$ and $L_kv_j \to v$ in $\widehat\cV$.
Then for any $\lambda \in \cV'$ it holds that
\[
\lambda(v) = \lim_j \lambda(L_k v_j) = \lim_j (L_k^*\lambda)(v_j) = 0.
\]
As $\cV'$ separates points in $\widehat \cV$, we conclude that $v=0$ and that the closure of $\Gamma(L_k)$ is the graph of a densely defined operator as claimed.
Taking linear combinations we obtain a densely-defined operator on $\widehat \cV$ for every $X \in \Lie(\Mob)$, which we denote by $\pi(X)$.

\begin{lem}\label{lem: LieMob rep extends to D}
Let $\cV$ be a M\"obius vertex algebra that is generated by a set of quasiprimary vectors.
Then for any $g\tfrac{d}{d\vartheta} \in \Lie(\Mob)$ the domain of $\pi(g\tfrac{d}{d\vartheta})$ contains $\cD_{\mathrm{QP}}$ and $\pi(g\tfrac{d}{d\vartheta})$ leaves $\cD_{\mathrm{QP}}$ invariant.
Moreover, if $v$ is quasiprimary with conformal dimension $d$ we have 
\[
[\pi(g\tfrac{d}{d\vartheta}),Y(v,f)] = Y(v,(d-1)\tfrac{dg}{d\vartheta}f - g \tfrac{df}{d\vartheta})
\]
as endomorphisms of $\cD_{\mathrm{QP}}$.
\end{lem}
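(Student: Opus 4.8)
The plan is to prove the commutator identity first for the three generators $L_{-1},L_0,L_1$ of $\Lie(\Mob)_\bbC$ acting on $\cD_{\mathrm{QP}}$, and then to deduce it for an arbitrary $g\tfrac{d}{d\vartheta}\in\Lie(\Mob)$ by linearity, using that the operation $f\mapsto (d-1)\tfrac{dg}{d\vartheta}f-g\tfrac{df}{d\vartheta}$ on test functions is linear in the vector field $g\tfrac{d}{d\vartheta}$. Recall that $\pi(L_j)$ is by definition the operator whose graph is the closure of $\Gamma(L_j)\subset\cV\times\cV$ inside $\widehat\cV\times\widehat\cV$. The overall strategy is therefore: (a) establish the required identities \emph{algebraically} on $\cV$, where every expression lies in $\cV$ and $L_j$ acts honestly; and (b) propagate them to $\cD_{\mathrm{QP}}$ by approximating smooth test functions by Laurent polynomials and invoking the joint continuity in the test functions of iterated smeared vertex operators from Lemma~\ref{lem: exist Yvf}.

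For (a), fix $j\in\{-1,0,1\}$ and let $g_j\tfrac{d}{d\vartheta}$ be the corresponding vector field, so $g_j(\rme^{\rmi\vartheta})=-\rmi\rme^{\rmi j\vartheta}$. If $v$ is quasiprimary of conformal dimension $d$ then $L_1 v=0$, $L_0 v=dv$, and $Y(L_{-1}v,z)=\tfrac{d}{dz}Y(v,z)$, so specializing axiom (VA4) and comparing Fourier coefficients gives
\[
[L_j,Y^0(v,f)] = Y^0\big(v,\, h_j(f)\big),\qquad h_j(f):=(d-1)\tfrac{dg_j}{d\vartheta}f-g_j\tfrac{df}{d\vartheta},
\]
as operators on $\cV$, for every Laurent polynomial $f$; quasiprimarity is essential here, since for a general homogeneous $v$ the right-hand side would also contain a term $Y^0(L_1 v,\,\cdot\,)$. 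Iterating this commutator through a product $Y^0(v_1,f_1)\cdots Y^0(v_k,f_k)\Omega$ and using $L_j\Omega=0$ (axiom (VA3)) yields the derivation identity
\[
L_j\big(Y^0(v_1,f_1)\cdots Y^0(v_k,f_k)\Omega\big)=\sum_{i=1}^k Y^0(v_1,f_1)\cdots Y^0(v_i,h_j(f_i))\cdots Y^0(v_k,f_k)\Omega
\]
in $\cV$, valid for $v_i\in\cV_{\mathrm{QP}}$ and Laurent polynomials $f_i$ (note $h_j(f_i)$ is again a Laurent polynomial).

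For (b), fix a generating vector $\Phi=Y(v_1,f_1)\cdots Y(v_k,f_k)\Omega\in\cD_{\mathrm{QP}}$ with arbitrary $f_i\in C^\infty(S^1)$ and choose Laurent polynomials $f_i^{(\ell)}\to f_i$ in $C^\infty(S^1)$. Since $h_j$ is a fixed first-order differential operator with smooth coefficients it is continuous on $C^\infty(S^1)$, so $h_j(f_i^{(\ell)})\to h_j(f_i)$. Putting $\Phi_\ell:=Y(v_1,f_1^{(\ell)})\cdots Y(v_k,f_k^{(\ell)})\Omega\in\cV$, Lemma~\ref{lem: exist Yvf} gives $\Phi_\ell\to\Phi$ in $\widehat\cV$, and likewise the right-hand side of the derivation identity with $f_i$ replaced by $f_i^{(\ell)}$ converges in $\widehat\cV$ to $C:=\sum_{i=1}^k Y(v_1,f_1)\cdots Y(v_i,h_j(f_i))\cdots Y(v_k,f_k)\Omega\in\cD_{\mathrm{QP}}$. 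As $\Phi_\ell\in\cV$ and $L_j\Phi_\ell$ equals that right-hand side, the pair $(\Phi_\ell,L_j\Phi_\ell)$ converges to $(\Phi,C)$ in $\widehat\cV\times\widehat\cV$, hence $(\Phi,C)$ lies in the closure of $\Gamma(L_j)$, i.e.\! $\Phi\in\dom\pi(L_j)$ and $\pi(L_j)\Phi=C$. This shows $\cD_{\mathrm{QP}}\subseteq\dom\pi(L_j)$ and that $\pi(L_j)$ leaves $\cD_{\mathrm{QP}}$ invariant. Finally, applying the same formula to $Y(v,f)\Phi$ (a generating vector with one extra factor) and subtracting $Y(v,f)\pi(L_j)\Phi$ leaves exactly $Y(v,h_j(f))\Phi$, which is the claimed commutator identity for the generator $L_j$; linearity in the vector field then gives the general statement.

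I expect the main obstacle to be step (b): verifying that the closure operation assigns the expected value. One must ensure that the weak topology on $\widehat\cV$ determined by the pairing with $\cV'$ is fine enough for $\Phi_\ell\to\Phi$ and for the right-hand side of the derivation identity to converge to $C$ simultaneously — which is precisely what the joint continuity in Lemma~\ref{lem: exist Yvf} supplies, once one also knows $h_j:C^\infty(S^1)\to C^\infty(S^1)$ is continuous — and that $\pi(L_j)$ is single-valued, so that ``$\pi(L_j)\Phi=C$'' is meaningful; the latter was already established when $\pi$ was defined. By contrast, the algebraic step (a) is routine bookkeeping with axiom (VA4) and Fourier expansions.
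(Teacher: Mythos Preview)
Your proof is correct and follows essentially the same approach as the paper: establish the commutator identity algebraically on $\cV$ via axiom (VA4) for Laurent polynomial test functions, obtain the derivation formula for iterated smeared operators applied to $\Omega$, and then pass to general $f_i\in C^\infty(S^1)$ by approximating with Laurent polynomials and invoking the joint continuity of Lemma~\ref{lem: exist Yvf} together with the graph-closure definition of $\pi$. The only cosmetic difference is that the paper works directly with a general $g\tfrac{d}{d\vartheta}\in\Lie(\Mob)$ throughout rather than first treating $L_{-1},L_0,L_1$ and then invoking linearity, but the content is identical.
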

\begin{proof}
If $v \in \cV_{\mathrm{QP}}(d)$ then the commutation relations between $Y(v,z)$ and the $L_k$ from the definition of a M\"obius vertex algebra imply that when $f \in \bbC[z^{\pm 1}]$ is a Laurent polynomial we have
\[
[\pi(g \tfrac{d}{d\vartheta}), Y(v,f)] = Y\big(v, (d-1)\tfrac{dg}{d\vartheta}f - g \tfrac{df}{d\vartheta}\big)
\]
as endomorphisms of $\cV$.
Thus if $f_1, \ldots, f_k$ are Laurent polynomials and $v_1, \ldots, v_k \in \cV_{\mathrm{QP}}$, then we have 
\begin{align}\label{eqn: pi(g) on smeared fields}
\pi(g\tfrac{d}{d\vartheta})Y(v_1,f_1) \cdots Y(v_k,f_k)\Omega = 
\sum_{j=1}^k Y(v_1,f_1) \cdots Y(v_j,(d_j-1)\tfrac{dg}{d\vartheta}f_j - g \tfrac{df_j}{d\vartheta}) \cdots Y(v_k,f_k)\Omega
\end{align}
where $d_j$ is the conformal dimension of $v_j$.
For arbitrary $f_1, \ldots, f_k \in C^\infty(S^1)$, choose sequences of Laurent polynomials\footnote{$f_{j,n} \in C^\infty(S^1)$
and it is not the $n$-th Fourier coefficient of $f_j$.} $f_{j,n}$ such that $\lim_{n\to\infty} f_{j,n} = f_j$ in $C^\infty(S^1)$, and observe that
\[
\lim_{n \to \infty} Y(v_1,f_{1,n}) \cdots Y(v_k,f_{k,n})\Omega = Y(v_1,f_1) \cdots Y(v_k,f_k)\Omega
\]
and
\begin{align*}
&\lim_{n \to \infty} Y(v_1,f_{1,n}) \cdots Y(v_j,(d_j-1)\tfrac{dg}{d\vartheta}f_{j,n} - g \tfrac{df_{j,n}}{d\vartheta}) \cdots Y(v_k,f_{k,n})\Omega
\\
=&\quad Y(v_1,f_1) \cdots Y(v_j,(d_j-1)\tfrac{dg}{d\vartheta}f_j - g \tfrac{df_j}{d\vartheta}) \cdots Y(v_k,f_k)\Omega
\end{align*}
in $\widehat \cV$ by Lemma~\ref{lem: exist Yvf}.
Hence $Y(v_1,f_1) \cdots Y(v_k,f_k)\Omega$ lies in the domain of $\pi(g \tfrac{d}{d\vartheta})$ and \eqref{eqn: pi(g) on smeared fields} holds for $f_j \in C^\infty(S^1)$.
It follows that $\pi(g \tfrac{d}{d\vartheta})$ leaves $\cD_{\mathrm{QP}}$ invariant and we have the desired commutation relation with smeared fields.
\end{proof}

We now turn to constructing the desired representation $U:\Mob \to \cL(\cD_{\mathrm{QP}})$.
The following lemma will allow us to define $U(\gamma)$ on a $\cV \subset \cD$.

\begin{lem}\label{lem: Laurent polynomial smearing vertex fields are Mobius covariant}
Let $\cV$ be a M\"obius vertex algebra which is generated by a set of quasiprimary vectors.
Then for any $\gamma \in \Mob$ there exists a unique linear map $U^0(\gamma):\cV \to \cD_{\mathrm{QP}}$ such that 
\[
U^0(\gamma)Y(v_1,f_1) \cdots Y(v_k,f_k)\Omega = Y(v_1,\beta_{d_1}(\gamma)f_1) \cdots Y(v_k, \beta_{d_k}(\gamma)f_k)\Omega.
\]
for all $v_1, \ldots, v_k \in \cV_{\mathrm{QP}}$ with conformal dimensions $d_j$ and all $f_j \in \bbC[z^{\pm 1}]$.
\end{lem}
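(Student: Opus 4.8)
The strategy is to reduce the existence of $U^0(\gamma)$ to the same kind of well-definedness argument used in Lemma~\ref{lem: exist Yvf}, now tracking the $\Mob$-action on the smearing functions. Uniqueness is immediate: the vectors $Y(v_1,f_1)\cdots Y(v_k,f_k)\Omega$ with $v_j\in\cV_{\mathrm{QP}}$ and $f_j\in\bbC[z^{\pm1}]$ span a subspace $\cV\subset\cD_{\mathrm{QP}}$ (in fact all of $\cV$, since $\cV$ is generated by $\cV_{\mathrm{QP}}$ and $Y^0(v,f)$ preserves $\cV$ for Laurent polynomial $f$), and the displayed formula determines $U^0(\gamma)$ on this spanning set. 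For existence, I would again pass to the free object: let $\cT_{\mathrm{QP}} = \bigoplus_{k\ge0}(\cV_{\mathrm{QP}}\otimes\bbC[z^{\pm1}])^{\otimes k}$, let $\tilde X:\cT_{\mathrm{QP}}\to\cV$ be the linear map sending a decomposable tensor to $Y^0(v_k,f_k)\cdots Y^0(v_1,f_1)\Omega$ (using the ordering conventions of \eqref{eqn: Xvv is product of fields}), and let $B_\gamma:\cT_{\mathrm{QP}}\to\cT_{\mathrm{QP}}$ be the linear map implementing $f_j\mapsto\beta_{d_j}(\gamma)f_j$ in each tensor slot — this is well-defined because $\beta_d(\gamma)$ preserves $\bbC[z^{\pm1}]$ (as $\gamma$ is a Möbius transformation, $\beta_d(\gamma)e_n$ is again a Laurent polynomial). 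It then suffices to show that $\Ker\tilde X$ is invariant under $B_\gamma$; granting this, $U^0(\gamma):=\tilde X\circ B_\gamma\circ(\tilde X|_{\text{a section}})$ descends to a well-defined map $\cV\to\cV\subset\cD_{\mathrm{QP}}$, which is the required $U^0(\gamma)$.

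The heart of the matter is therefore the invariance of $\Ker\tilde X$ under $B_\gamma$, and I expect this to be the main obstacle. The natural route is to first establish the infinitesimal version and then integrate. Concretely: since $\Mob=\mathrm{PSU}(1,1)$ is connected and generated by the one-parameter subgroups $\exp(tL_m)$, $m=-1,0,1$ (equivalently, exponentials of the real vector fields spanning $\Lie(\Mob)$), it is enough to handle $\gamma=\exp(t\,g\tfrac{d}{d\vartheta})$ for these generators. For such $\gamma$, the map $t\mapsto\beta_d(\gamma_t)f$ is a smooth curve in $\bbC[z^{\pm1}]$ through $f$ (polynomial entries stay polynomial, and only finitely many Fourier modes are involved for fixed $f$), with derivative at $t=0$ given by the first-order differential operator $f\mapsto (d-1)\tfrac{dg}{d\vartheta}f - g\tfrac{df}{d\vartheta}$ appearing in Lemma~\ref{lem: LieMob rep extends to D}. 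Differentiating $\tilde X(B_{\gamma_t}\Xi)$ in $t$ and using the Leibniz rule across the $k$ tensor slots, together with \eqref{eqn: pi(g) on smeared fields} from Lemma~\ref{lem: LieMob rep extends to D}, gives
\[
\frac{d}{dt}\,\tilde X(B_{\gamma_t}\Xi) = \pi(g\tfrac{d}{d\vartheta})\,\tilde X(B_{\gamma_t}\Xi),
\]
valid as an identity in $\widehat\cV$ (or already in $\cV$, since everything is polynomial and $\pi(g\tfrac{d}{d\vartheta})$ preserves $\cV$ here by the commutation relations). If $\Xi\in\Ker\tilde X$, then $w(t):=\tilde X(B_{\gamma_t}\Xi)$ solves the linear ODE $w'(t)=\pi(g\tfrac{d}{d\vartheta})w(t)$ with $w(0)=0$.

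To conclude $w\equiv0$ I would test against $\cV'$: for each homogeneous $\lambda\in\cV'$, the quantity $\lambda(w(t))$ is a smooth scalar function, and because $w(t)$ lives in a fixed finite-dimensional subspace — namely $\sum_n\cV(n;v^{(1)},\dots,v^{(r)})$ for the finitely many quasiprimary vectors $v^{(i)}$ occurring in $\Xi$, which is finite-dimensional by Lemma~\ref{lem: va subspace finite dimensional} and manifestly $\pi(g\tfrac{d}{d\vartheta})$-invariant by the commutation relations — the ODE restricts to a genuine linear ODE on a finite-dimensional space with a bona fide bounded generator. Uniqueness of solutions then forces $w(t)=0$ for all $t$, so $B_{\gamma_t}$ preserves $\Ker\tilde X$ for all $t$ and all generators, hence $B_\gamma$ does for all $\gamma\in\Mob$. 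This finite-dimensionality reduction via Lemma~\ref{lem: va subspace finite dimensional} is exactly what lets the argument run without any norm/completeness input, and is the one genuinely delicate point; the rest is bookkeeping with the $\beta_d(\gamma)$ cocycle and the ordering conventions.
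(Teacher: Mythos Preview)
Your overall architecture is fine, but there are two concrete errors that make the argument break down as written.

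First, the claim that $\beta_d(\gamma)$ preserves $\bbC[z^{\pm1}]$ is false for any $\gamma\in\Mob$ that is not a rotation. For $\gamma(z)=\tfrac{az+b}{\bar bz+\bar a}$ one has $X_\gamma(z)=\tfrac{z}{(az+b)(\bar bz+\bar a)}$ and $(\beta_d(\gamma)e_n)(z)$ involves $(\gamma^{-1}(z))^n$, which is rational but not a Laurent polynomial when $b\ne0$. So your operator $B_\gamma$ does not act on $\cT_{\mathrm{QP}}$ as defined. This is not fatal in itself---you could enlarge the tensor algebra to use $C^\infty(S^1)$ and land in $\cD_{\mathrm{QP}}$ rather than $\cV$---but it means the curve $w(t)=\tilde X(B_{\gamma_t}\Xi)$ does not stay inside $\cV$.

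Second, and this is the real gap, the space $\sum_n\cV(n;v^{(1)},\dots,v^{(r)})$ is \emph{not} finite-dimensional: Lemma~\ref{lem: va subspace finite dimensional} gives only that each graded piece $\cV(n;\dots)$ is finite-dimensional, while the sum over $n$ is typically infinite (already for $k=1$ and a single nonzero quasiprimary). So your ODE $w'(t)=\pi(g\tfrac{d}{d\vartheta})w(t)$ lives on an infinite-dimensional space with an unbounded ``tridiagonal'' generator (since $L_{\pm1}$ shift the grading), and uniqueness for $w(0)=0$ is not automatic. The paper circumvents this by proving that, paired with any $\lambda\in\cV'$, the function $t\mapsto\lambda(w(t))$ extends \emph{holomorphically} to a neighbourhood of $\bbR$ (because $t\mapsto\gamma_t$ extends into $\mathrm{PSL}_2(\bbC)$ and $t\mapsto\beta_d(\gamma_t)f$ is analytic in $C^\infty(S^1)$, then use the continuity of Lemma~\ref{lem: exist Yvf}). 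One then computes all $t$-derivatives at $0$ via the commutation relation---exactly your formula $\tfrac{d^m}{dt^m}w(t)=\pi(g\tfrac{d}{d\vartheta})^m w(t)$---so if $w(0)=0$ the Taylor series vanishes and holomorphy forces $w\equiv0$. Replacing your finite-dimensional ODE step by this analyticity argument fixes the proof.
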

\begin{proof}
Uniqueness is clear as the required formula for $U^0(\gamma)$ determines its value on $\cV$.
In order to show existence of $U^0(\gamma)$, we must show that if a linear combination of vectors of the form $Y(v^1,f_1) \cdots Y(v^k, f_k)\Omega$ vanishes, then so does the corresponding linear combination of vectors of the form $Y(v^1, \beta_{d_1}(\gamma)f_1) \cdots Y(v^k, \beta_{d_k}(\gamma)f_k)\Omega$.
We use standard ODE techniques.

The exponential map $\exp:\Lie(\Mob) \to \Mob$ is surjective, so we may choose $g\tfrac{d}{d\vartheta} \in \Lie(\Mob)$ such that $\exp(g \tfrac{d}{d\vartheta}) = \gamma$.
Let $\gamma_t = \exp(t g \tfrac{d}{d\vartheta})$ be the corresponding one-parameter subgroup of $\Mob$.
Let $v^1, \ldots, v^k \in \cV_{\mathrm{QP}}$ and consider the function $u:\bbR \to \cD_{\mathrm{QP}}$ given by
\[
u(t) = Y(v^1, \beta_{d_1}(\gamma_t)f_1) \cdots Y(v^k, \beta_{d_k}(\gamma_t)f_k)\Omega.
\]
We will now show that $u$ extends holomorphically to a neighborhood of $\bbR$ (when $\cD_{\mathrm{QP}}$ is given the weak topology induced by $\cV'$), and compute its derivative.

The map $\bbR \to \Mob \cong \mathrm{PSU}(1,1)$ given by $t \mapsto \gamma_t$ extends holomorphically to a neighborhood of $\bbR$ (taking values in \emph{complex} M\"obius transformations of the Riemann sphere $\cong \mathrm{PSL}_2(\bbC)$).
For each $t \in \bbR$, the M\"obius transformation $\gamma_t$ leaves $S^1$ invariant, and thus for a sufficiently small neighborhood of $\bbR$ the corresponding M\"obius transformations map $S^1$ into $\bbC^\times$.
Thus if $f \in \bbC[z^{\pm 1}]$ is a Laurent polynomial, the function $\bbR \times S^1 \to \bbC$ given by $(t,z) \mapsto (\beta_d(\gamma_t)f)(z)$ extends holomorphically to a neighborhood of $\bbR \times S^1$.
It follows that the map $\bbR \to C^\infty(S^1)$ sending $t \mapsto \beta_d(\gamma_t)f$ extends holomorphically to a neighborhood of $\bbR$.

Fix $\lambda \in \cV(n)^*$.
By Lemma~\ref{lem: exist Yvf}, the expressions $\lambda(Y(v^1,f_1) \cdots Y(v^k,f_k)\Omega)$ are jointly continuous in $f_j \in C^\infty(S^1)$.
Thus for fixed Laurent polynomials $f_1, \ldots, f_k \in \bbC[z^{\pm 1}]$, the function
\[
t \mapsto \lambda\big(Y(v^1,\beta_{d_1}(\gamma_t)f_1)\cdots Y(v^k,\beta_{d_k}(\gamma_t)f_k)\Omega\big)
\]
extends holomorphically to a neighborhood of $\bbR$.
As this neighborhood is independent of $\lambda$, the function 
\[
Y(v^1,\beta_{d_1}(\gamma_t)f_1)\cdots Y(v^k,\beta_{d_k}(\gamma_t)f_k)\Omega
\]
extends holomorphically to a neighborhood of $\bbR$, as previously claimed.

We now differentiate the above function of $t$.
A straightforward computation \cite[Eqn.\! (3.4)]{RaymondTanimotoTener22} shows that 
\begin{equation}\label{eqn: ddt betad in covariance proof}
\frac{d}{dt} \beta_d(\gamma_t)f = (d-1)\tfrac{dg}{d\vartheta}\beta_d(\gamma_t)f - g\tfrac{d}{d\vartheta}[\beta_d(\gamma_t)f]
\end{equation}
with the derivative taken in $C^\infty(S^1)$.
Comparing \eqref{eqn: ddt betad in covariance proof} with the commutation relation of Lemma~\ref{lem: LieMob rep extends to D} we obtain for any $\lambda \in \cV'$
\begin{align*}
\frac{d}{dt} &\lambda\big(Y(v^1,\beta_{d_1}(\gamma_t)f_1)\cdots Y(v^k,\beta_{d_k}(\gamma_t)f_k)\Omega\big)=\\ 
&= \sum_{j=1}^k \lambda\big(Y(v^1,\beta_{d_1}(\gamma_t)f_1)\cdots Y(v^j,\tfrac{d}{dt}\beta_{d_j}(\gamma_t)f_j) \cdots Y(v^k,\beta_{d_k}(\gamma_t)f_k)\Omega\big)\\
&= \sum_{j=1}^k \lambda\big(Y(v^1,\beta_{d_1}(\gamma_t)f_1)\cdots [\pi(g\tfrac{d}{d\vartheta}),Y(v^j,f_j)] \cdots Y(v^k,\beta_{d_k}(\gamma_t)f_k)\Omega\big)\\
&= \lambda\big(\pi(g\tfrac{d}{d\vartheta}) Y(v^1,\beta_{d_1}(\gamma_t)f_1)\cdots Y(v^k,\beta_{d_k}(\gamma_t)f_k)\Omega\big).
\end{align*}
Since the adjoint operator $\pi(g\tfrac{d}{d\vartheta})^*$ leaves $\cV'$ invariant, we may iterate the above argument to obtain
\begin{align*}
\frac{d^m}{dt^m} &\lambda\big(Y(v^1,\beta_{d_1}(\gamma_t)f_1)\cdots Y(v^k,\beta_{d_k}(\gamma_t)f_k)\Omega\big) = \\
&= \lambda\big(\pi(g\tfrac{d}{d\vartheta})^m Y(v^1,\beta_{d_1}(\gamma_t)f_1)\cdots Y(v^k,\beta_{d_k}(\gamma_t)f_k)\Omega\big).
\end{align*}
Since $\lambda$ was arbitrary we have
\begin{align}\label{eqn: repeated derivatives under Mobius transformation}
\frac{d^m}{dt^m} &Y(v^1,\beta_{d_1}(\gamma_t)f_1)\cdots Y(v^k,\beta_{d_k}(\gamma_t)f_k)\Omega = \\
&= \pi(g\tfrac{d}{d\vartheta})^m Y(v^1,\beta_{d_1}(\gamma_t)f_1)\cdots Y(v^k,\beta_{d_k}(\gamma_t)f_k)\Omega. \nonumber
\end{align}

We now complete the proof of existence of the map $U^0(\gamma):\cV \to \cD_{\mathrm{QP}}$.
Suppose that a certain linear combination of vectors of the form $Y(v^1,f_1)\cdots Y(v^k,f_k)\Omega$ vanishes.
That is, suppose we have
\[
\sum_{i=1}^\ell Y(v^{i,1}, f_{i,1}) \cdots Y(v^{i,k_i}, f_{i,k_i})\Omega = 0
\]
for $\ell,k_i \in \bbZ_{\ge 0}$, $v^{i,j}$ quasiprimary vectors with conformal dimension $d_{i,j}$, and $f_{i,j}$ Laurent polynomials.
Then, by the above,  the function
\begin{equation}\label{eqn: Mobius transformed vanished linear combo}
t \mapsto \sum_{i=1}^\ell Y(v^{i,1}, \beta_{d_{i,1}}(\gamma_t)f_{i,1}) \cdots Y(v^{i,k_i}, \beta_{d_{i,k_i}}(\gamma_t)f_{i,k_i})\Omega
\end{equation}
extends holomorphically to a neighborhood of $\bbR$, and by \eqref{eqn: repeated derivatives under Mobius transformation} the derivatives of \eqref{eqn: Mobius transformed vanished linear combo} at $t=0$ are given by
\begin{align*}
\frac{d^m}{dt^m} &\left. \sum_{i=1}^\ell Y(v^{i,1}, \beta_{d_{i,1}}(\gamma_t)f_{i,1}) \cdots Y(v^{i,k_i}, \beta_{d_{i,k_i}}(\gamma_t)f_{i,k_i})\Omega \, \right|_{t=0} =\\
&= \pi(g\tfrac{d}{d\vartheta})^m \sum_{i=1}^\ell Y(v^{i,1}, f_{i,1}) \cdots Y(v^{i,k_i}, f_{i,k_i})\Omega\\ 
&= 0.
\end{align*}
Since the Taylor series of \eqref{eqn: Mobius transformed vanished linear combo} at $t=0$ is identically zero, the function vanishes identically.
In particular, specializing to $t=1$ yields
\[
\sum_{i=1}^\ell Y(v^{i,1}, \beta_{d_{i,1}}(\gamma)f_{i,1}) \cdots Y(v^{i,k_i}, \beta_{d_{i,k_i}}(\gamma)f_{i,k_i})\Omega = 0.
\]
We conclude that the desired map $U^0(\gamma)$ is well-defined, as required.
\end{proof}

We now address the problem of extending $U^0(\gamma)$ to an endomorphism of $\cD_{\mathrm{QP}}$.

\begin{lem}\label{lem: vertex fields are Mobius covariant}
Let $\cV$ be a M\"obius vertex algebra that is generated by a set of quasiprimary vectors.
Then for any $\gamma \in \Mob$ there exists a unique linear map $U(\gamma) \in \cL(\cD_{\mathrm{QP}})$ such that 
\[
U(\gamma)Y(v_1,f_1) \cdots Y(v_k,f_k)\Omega = Y(v_1,\beta_{d_1}(\gamma)f_1) \cdots Y(v_k, \beta_{d_k}(\gamma)f_k)\Omega.
\]
for all $v_1, \ldots, v_k \in \cV_{\mathrm{QP}}$ with conformal dimensions $d_j$ and all $f_j \in C^\infty(S^1)$.
\end{lem}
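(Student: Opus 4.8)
The map $U(\gamma)$ is unique if it exists, since the prescribed formula determines its values on the vectors $Y(v_1,f_1)\cdots Y(v_k,f_k)\Omega$ with $v_j\in\cV_{\mathrm{QP}}$, $f_j\in C^\infty(S^1)$, and these span $\cD_{\mathrm{QP}}$; applied to Laurent-polynomial smearings the formula also forces $U(\gamma)|_{\cV}=U^0(\gamma)$, by Lemma~\ref{lem: exist Yvf}. So the whole content is well-definedness. I would reformulate this as follows: let $\cT=\bigoplus_{k\ge 0}(\cV_{\mathrm{QP}}\otimes C^\infty(S^1))^{\otimes k}$, let $\tilde X:\cT\to\cD_{\mathrm{QP}}$ send $v_1\otimes f_1\otimes\cdots\otimes v_k\otimes f_k$ to $Y(v_1,f_1)\cdots Y(v_k,f_k)\Omega$, and let $\Mob$ act on $\cT$ by acting on each factor $\cV_{\mathrm{QP}}(d)\otimes C^\infty(S^1)$ through $\mathrm{id}\otimes\beta_d(\gamma)$; this is a genuine action because $\beta_d(\gamma_1\gamma_2)=\beta_d(\gamma_1)\beta_d(\gamma_2)$. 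The covariance identity says precisely that $U(\gamma)$ is induced on $\cD_{\mathrm{QP}}=\cT/\ker\tilde X$ by $\tilde X\circ(\gamma\cdot\,)$, and so $U(\gamma)$ exists — and is invertible, by applying this to $\gamma^{-1}$ — as soon as $\cK:=\ker\tilde X$ is $\Mob$-invariant.

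Since invariance of $\cK$ under $\gamma_1\gamma_2$ follows from invariance under $\gamma_1$ and $\gamma_2$, and $\Mob$ is generated by the rotation subgroup together with one hyperbolic one-parameter subgroup (Cartan decomposition $\Mob=KAK$), it suffices to treat $\gamma$ in a fixed one-parameter subgroup $\gamma_t=\exp(tg\tfrac{d}{d\vartheta})$. Fix $\Xi\in\cK$ and put $H(t):=\tilde X(\gamma_t\cdot\Xi)$. Using $\tfrac{d}{dt}\beta_d(\gamma_t)f=(d-1)\tfrac{dg}{d\vartheta}\beta_d(\gamma_t)f-g\tfrac{d}{d\vartheta}[\beta_d(\gamma_t)f]$ and the commutation relation of Lemma~\ref{lem: LieMob rep extends to D} — the same computation as in the proof of Lemma~\ref{lem: Laurent polynomial smearing vertex fields are Mobius covariant}, now carried out for arbitrary smooth smearings — one finds that $H$ is a smooth $\cD_{\mathrm{QP}}$-valued curve obeying $H'(t)=\pi(g\tfrac{d}{d\vartheta})H(t)$, with $H(0)=\tilde X(\Xi)=0$. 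In the proof of Lemma~\ref{lem: Laurent polynomial smearing vertex fields are Mobius covariant}, with $f_j$ Laurent polynomials (so $t\mapsto\beta_d(\gamma_t)f_j$ is holomorphic), one concluded $H\equiv 0$ from analyticity; for general smooth $f_j$ that route is unavailable, and one must instead argue uniqueness for the differential equation $H'=\pi(g\tfrac{d}{d\vartheta})H$ directly. Granting $H\equiv 0$, we obtain $H(1)=\tilde X_\gamma(\Xi)=0$, which is the desired $\gamma$-invariance.

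The structural input that makes this uniqueness tractable is that $H$ stays in a space with finite-dimensional weight components. By Lemma~\ref{lem: va subspace finite dimensional}, if $\vec v^{\,1},\ldots,\vec v^{\,\ell}$ are the quasiprimary vectors appearing in $\Xi$, then $H(t)\in\sum_i\prod_n\cV(n;\vec v^{\,i})\subseteq\widehat\cV$, and each weight-$n$ component space $W_n:=\sum_i\cV(n;\vec v^{\,i})$ (enlarged slightly to be stable under $L_{\pm1}$) is finite-dimensional; moreover $\pi(g\tfrac{d}{d\vartheta})$ is a combination $aL_{-1}+bL_0+cL_1$ of operators with $L_{-1}$ raising weight by $1$, $L_0$ fixing it, and $L_1$ lowering it by $1$, so it maps $W_n$ into $W_{n-1}\oplus W_n\oplus W_{n+1}$. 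Writing $H_n(t)\in W_n$ for the components, the equation reads $H_n'=aL_{-1}H_{n-1}+bnH_n+cL_1H_{n+1}$ with $H_n(0)=0$. For the rotation subgroup $a=c=0$ and $b$ is imaginary, so $H_n(t)=e^{bnt}H_n(0)=0$: the elliptic case is immediate. For the hyperbolic subgroup all three terms survive, and here I would invoke the polynomial mode bounds (the unnamed lemma preceding Lemma~\ref{lem: va subspace finite dimensional}) to control $\|L_{\pm1}|_{W_n}\|$ by a polynomial in $n$, and Lemma~\ref{lem: finite energy estimate} to control $\|H_n(t)\|$ on compact $t$-intervals, and then run a Grönwall/Duhamel argument — comparing $H$ with the solutions of the finite-rank truncations of $\pi(g\tfrac{d}{d\vartheta})$, exploiting that each $W_n$ is finite-dimensional — to propagate the vanishing of $H$ at $t=0$, first to a neighbourhood of $0$ and then, by translating the initial time, to all of $\bbR$.

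This hyperbolic integration step is the main obstacle. In the unitary, finite-dimensional-weight-space setting of \cite{RaymondTanimotoTener22} the representation $\pi$ of $\Lie(\Mob)$ integrates to $U$ via (essential) self-adjointness of the conformal generators and Nelson's analytic-vector theorem; without an invariant inner product, and allowing infinite-dimensional weight spaces, that machinery is gone, and uniqueness for the $\cD_{\mathrm{QP}}$-valued equation $H'=\pi(g\tfrac{d}{d\vartheta})H$ must be extracted by hand from the quantitative mode estimates and the finite-dimensionality of the weight components $W_n$ — which is where the genuinely new work lies compared to the previous lemma.
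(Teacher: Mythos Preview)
Your reformulation via $\cK=\ker\tilde X$ and the evolution $H'=\pi(g\tfrac{d}{d\vartheta})H$ is a natural line of attack, and you correctly locate the obstruction: the analyticity argument of Lemma~\ref{lem: Laurent polynomial smearing vertex fields are Mobius covariant} is unavailable for smooth smearings, so one would need uniqueness for this ODE directly. But the hyperbolic step, which you yourself flag as the main obstacle, is a genuine gap, and the sketched Gr\"onwall route does not obviously close it. The lemma preceding Lemma~\ref{lem: va subspace finite dimensional} bounds individual matrix elements $\bil{v^1_{m_1}\cdots v^k_{m_k}u,u'}$ for \emph{fixed} $u'$; it does not furnish operator norms $\|L_{\pm1}|_{W_n}\|$, since there is no invariant norm on $W_n$ and the natural spanning vectors involve individual indices $m_j$ that are not bounded in terms of $n$. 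Lemma~\ref{lem: finite energy estimate} is likewise a functional-by-functional estimate with constants depending on $u'$ in a way that is not uniform across weight levels. Without uniform-in-$n$ control of both the coupling and the solution, the infinite tridiagonal system $H_n'=aL_{-1}H_{n-1}+bnH_n+cL_1H_{n+1}$ need not have a unique solution with $H_n(0)=0$; already the scalar toy model $h_n'=h_{n+1}$ admits $h_n=f^{(n)}$ for any smooth $f$ flat at $0$.

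The paper sidesteps the ODE entirely and takes a quite different route. Since $U^0(\gamma):\cV\to\cD_{\mathrm{QP}}$ is already well-defined by the previous lemma, and $\cD_{\mathrm{QP}}\subset\widehat\cV=\prod_n\cV(n)$, one writes $\Phi=\sum_n u^n$ with $u^n\in\cV(n)$ and \emph{defines} $U(\gamma)\Phi:=\sum_n U^0(\gamma)u^n$. Well-definedness is automatic, because the weight decomposition $(u^n)$ is canonical and $U^0(\gamma)$ is already known on each $u^n$; the only work is to show the series converges in $\widehat\cV$ and lands on the right-hand side of the covariance formula. Both follow in one stroke by extending $(f_1,\ldots,f_k)\mapsto Y(v_1,f_1)\cdots Y(v_k,f_k)\Omega$ to a continuous linear map $S:C^\infty((S^1)^k)\to\prod_n\cV(n;v_1,\ldots,v_k)$ via the completed projective tensor product identification $C^\infty(S^1)^{\hat\otimes_\pi k}\cong C^\infty((S^1)^k)$, letting $P_n$ be the Fourier projection onto the sector $n_1+\cdots+n_k=-n$ and $V(\gamma)$ the continuous operator on $C^\infty((S^1)^k)$ induced by $\beta_{d_j}(\gamma)$ on each factor. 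Then $u^n=S(P_nF)$, $U^0(\gamma)u^n=S(V(\gamma)P_nF)$, and since $\sum_n P_n=\mathrm{id}$ pointwise in $C^\infty((S^1)^k)$ and $S$, $V(\gamma)$ are continuous, one gets $\sum_n U^0(\gamma)u^n=S(V(\gamma)F)=Y(v_1,\beta_{d_1}(\gamma)f_1)\cdots Y(v_k,\beta_{d_k}(\gamma)f_k)\Omega$. In short, the paper trades your uniqueness problem for an unbounded evolution on $\cD_{\mathrm{QP}}$ for a continuity statement about a map out of a nuclear Fr\'echet space.
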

\begin{proof}
Let $\Phi \in \cD_{\mathrm{QP}}$, and recall that $\cD_{\mathrm{QP}} \subset \widehat \cV = \prod_{n=0}^\infty \cV(n)$.
Thus we may canonically write $\Phi = \sum_{n=0}^\infty u^n$ with $u^n \in \cV(n)$ (and the sum converging in the weak topology induced by $\cV'$).
We would like to define $U(\gamma)\Phi = \sum_{n=0}^\infty U^0(\gamma)u^n$, but first must check convergence of the sum.
It suffices to consider a vector $\Phi = Y(v^1,f_1) \cdots Y(v^k,f_k)\Omega$ with $v_j \in \cV_{\mathrm{QP}}$ and $f_j \in C^\infty(S^1)$.
Recall from Lemma~\ref{lem: va subspace finite dimensional} that the continuous map multilinear map $C^\infty(S^1)^k \to \cD_{\mathrm{QP}}$ given by 
\[
(f_1, \ldots, f_k) \mapsto Y(v^1,f_1) \cdots Y(v^k,f_k)\Omega
\]
takes values in $\prod_{n=0}^\infty \cV(n; v^1, \ldots, v^k)$ with $\cV(n; v^1, \ldots, v^k)$ a finite-dimensional subspace of $\cV(n)$.
Thus by the universal property of the projective tensor product $\otimes_\pi$ \cite[Prop. 43.4]{Treves67} we have a continuous linear map
\[
C^\infty(S^1) \otimes_\pi \cdots \otimes_\pi C^\infty(S^1) \to \prod_{n=0}^\infty \cV(n; v^1, \ldots, v^k).
\]
As $\prod_{n=0}^\infty \cV(n; v^1, \ldots, v^k)$ is complete, by \cite[Thm. 5.2]{Treves67} (see also Appendix \ref{app: TVS and LCS} for the completion of topological vector spaces) we may extend this map to a continuous linear map 
\[
C^\infty(S^1) \hat \otimes_\pi \cdots \hat \otimes_\pi C^\infty(S^1) \to \prod_{n=0}^\infty \cV(n; v^1, \ldots, v^k)
\]
where $\hat \otimes_\pi$ is the completed projective tensor product (see \cite[\S 43]{Treves67}).
We have a natural isomorphism of topological vector spaces 
\[
C^\infty(S^1) \hat \otimes_\pi \cdots \hat \otimes_\pi C^\infty(S^1) \cong C^\infty((S^1)^k)
\]
by \cite[Thm. 56.1]{Treves67} (extended to the manifold $S^1$ via partition of unity).
Thus we conclude that there exists a continuous linear map
\[
S:C^\infty((S^1)^k) \to \prod_{n=0}^\infty \cV(n; v^1, \ldots, v^k)
\]
characterized by 
\[
S(f_1\otimes_\pi \cdots \otimes _\pi f_k) = Y(v^1,f_1) \cdots Y(v^k,f_k)\Omega.
\]

Now fix $v^1, \ldots, v^k \in \cV$ and $f_1, \ldots, f_k \in C^\infty(S^1)$, and consider
\[
Y(v^1,f_1) \cdots Y(v^k,f_k)\Omega = \Phi = \sum_{n=0}^\infty u^n,
\]
again with $u^n \in \cV(n)$.
Let $V(\gamma):C^\infty((S^1)^k) \to C^\infty((S^1)^k)$ be the continuous linear map such that
\[
V(\gamma)g_1(z_1) \cdots g_k(z_k) = (\beta_{d_1}(\gamma)g_1)(z_1) \cdots (\beta_{d_1}(\gamma)g_k)(z_k)
\]
for all $g_j \in C^\infty(S^1)$.
For $n \in \bbZ$, let $P_n: C^\infty((S^1)^k) \to C^\infty((S^1)^k)$ be the natural projection onto the closed span of monomials $z_1^{n_1} \cdots z_k^{n_k}$ with $n_1 + \cdots + n_k = -n$ (whose kernel is spanned by monomials with $n_1 + \cdots + n_k \ne -n$).
For $F(z_1, \ldots, z_k) = f_1(z_1) \cdots f_k(z_k)$,
by construction we have $S(P_n F) = u^n$ and $S(V(\gamma)P_n F) = U^0(\gamma)u^n$,
where $U^0$ is defined in Lemma \ref{lem: Laurent polynomial smearing vertex fields are Mobius covariant}.
Since $\sum_{n\in \bbZ} P_n = \operatorname{id}$ (with convergence pointwise as operators on $C^\infty((S^1)^k)$) and both $S$ and $V(\gamma)$ are continuous, we have convergence of the sum
\[
\sum_{n=0}^\infty U^0(\gamma)u^n = \sum_{n=0}^\infty S(V(\gamma)P_nF)
\]
to $S(V(\gamma)F) = Y(v_1,\beta_{d_1}(\gamma)f_1) \cdots Y(v_k, \beta_{d_k}(\gamma)f_k)\Omega$.

As the action of $U^0(\gamma)$ on $u^n \in \cV(n)$ is well-defined by Lemma \ref{lem: Laurent polynomial smearing vertex fields are Mobius covariant}
and does not depend on the choice of $\{v^j\}$,
we have obtained both a well-defined map $U(\gamma)$ given by $U(\gamma)\Phi = \sum_{n=0}^\infty U^0(\gamma) u^n$ along with the required covariance relation.
\end{proof}

Since $\beta_d(\gamma_1)\beta_d(\gamma_2) = \beta_d(\gamma_1 \circ \gamma_2)$, the maps $U(\gamma)$ furnish a representation of $\Mob$ on $\cD_{\mathrm{QP}}$. 
For $v \in \cV_{\mathrm{QP}}$ the operator-valued distribution $Y(v,f)|_{\cD_{\mathrm{QP}}}$ is evidently covariant with respect to this representation.

\begin{thm}\label{thm: VA to Wightman}
Let $\cV$ be a M\"obius vertex algebra, let $S \subset \cV$ be a set of quasiprimary vectors that generate $\cV$ as a vertex algebra, and let 
\[
\cD_S = \spann \{ Y(v_1, f_1) \cdots Y(v_k,f_k)\Omega \, | \, k \in \bbZ_{\ge 0}, v_j \in S, f_j \in C^\infty(S^1) \}.
\]
Let $\cF_S = \{ Y(v,f)|_{\cD_S} \, | \, v \in S, f \in C^\infty(S^1) \}$ and for $\gamma \in \Mob$ let $U_S(\gamma) = U(\gamma)|_S$.
Then $(\cF_S,\cD_S,U_S,\Omega)$ is a (not-necessarily-unitary) M\"obius-covariant Wightman CFT.
\end{thm}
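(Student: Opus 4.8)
The plan is to verify the regularity condition together with the four axioms (W\ref{itm: W Mob covariance})--(W\ref{itm: W Vacuum}) of Definition~\ref{def: Wightman}, most of which follow formally from the results already in hand. First I would dispatch the structural points: each map $f \mapsto Y(v,f)$ with $v \in S$ is linear, hence an operator-valued distribution, and $\cD_S$ is invariant under all $Y(v,f)$, $v \in S$, directly from its definition as the span of the vectors $Y(v_1,f_1)\cdots Y(v_k,f_k)\Omega$ with $v_j \in S$. The covariance identity of Lemma~\ref{lem: vertex fields are Mobius covariant} shows that $U(\gamma)$ maps these spanning vectors back into $\cD_S$, so $\cD_S$ is $U$-invariant and $U_S$ is a representation of $\Mob$ on $\cD_S$; specializing that identity to $k=0$ gives $U_S(\gamma)\Omega = \Omega$, and the spanning statement defining $\cD_S$ is precisely the vacuum axiom (W\ref{itm: W Vacuum}). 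For M\"obius covariance (W\ref{itm: W Mob covariance}), I would apply the covariance identity with an extra factor $Y(v,f)$, $v \in S$ homogeneous of conformal dimension $d \in \bbZ_{\ge 0}$, in front of a spanning vector; this yields $U_S(\gamma)Y(v,f) = Y(v,\beta_d(\gamma)f)U_S(\gamma)$ on $\cD_S$, i.e.\ covariance with dimension $d$. For the spectrum condition (W\ref{itm: W spec}), one uses $\cD_S \subset \widehat\cV = \prod_{n\ge 0}\cV(n)$: since $\beta_d(R_\vartheta)e_j = \rme^{-\rmi j\vartheta}e_j$, the rotation $U(R_\vartheta)$ acts on the $n$-th factor of $\widehat\cV$ as $\rme^{\rmi n\vartheta}$, so any $\Phi \in \cD_S$ of conformal dimension $d$ lies in $\cV(d)$, which is $\{0\}$ when $d<0$.

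For regularity I would exhibit enough compatible functionals using $\cV'$. Every $\lambda \in \cV'$ restricts to an element of $(\cD_S)^*_{\cF_S}$: for a spanning vector $\Phi = Y(u_1,h_1)\cdots Y(u_\ell,h_\ell)\Omega$ the map $(f_1,\dots,f_k) \mapsto \lambda\big(Y(v_1,f_1)\cdots Y(v_k,f_k)\Phi\big)$ is $\lambda$ composed with the expression $Y(v_1,f_1)\cdots Y(v_k,f_k)Y(u_1,h_1)\cdots Y(u_\ell,h_\ell)\Omega$, which is jointly continuous in all its function arguments by Lemma~\ref{lem: exist Yvf}, and general $\Phi \in \cD_S$ follow by linearity. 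Since $\cD_S \subset \widehat\cV$ and $\cV'$ separates points of $\widehat\cV$ by the very definition of its topology, $(\cD_S)^*_{\cF_S} \supseteq \cV'|_{\cD_S}$ separates points of $\cD_S$, so $\cF_S$ acts regularly.

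The substantive step is locality (W\ref{itm: W locality}), where I expect the main obstacle: converting the formal-variable locality axiom (VA5) into disjoint-support commutativity of the smeared fields. Fix $v,w \in S$, vectors $u_1,\dots,u_\ell \in S$, functions $h_1,\dots,h_\ell \in C^\infty(S^1)$, and $\lambda \in \cV'$, and consider
\[
T_\lambda(f,g,h_1,\dots,h_\ell) = \lambda\big([Y(v,f),Y(w,g)]\,Y(u_1,h_1)\cdots Y(u_\ell,h_\ell)\Omega\big).
\]
By Lemma~\ref{lem: exist Yvf} this is a separately (hence jointly) continuous multilinear form on $C^\infty(S^1)^{\ell+2}$, so it extends to a distribution $T_\lambda$ on $(S^1)^{\ell+2}$ via the completed projective tensor product and the kernels theorem, exactly as in the passages above. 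I would then show that the smooth function $(z-\zeta)^N$ in the first two circle variables annihilates $T_\lambda$ for $N$ large: since trigonometric polynomials are dense and $T_\lambda$ is continuous, it suffices to check this after evaluating on monomials $f = e_a$, $g = e_b$, $h_j = e_{c_j}$, where it reduces to the identity $\sum_{i=0}^N(-1)^i\binom{N}{i}[v_{a+N-i},\,w_{b+i}] = 0$ on $\cV$, valid for all $a,b$ once $N$ is large enough; this is simply the degree-shifted-mode reformulation of (VA5). Here it is essential that on monomials all the vectors $Y(u_j,e_{c_j})\cdots\Omega$ lie in $\cV$, so that the vertex-algebra identities apply, and that the reduction to monomials is legitimate, which it is by joint continuity. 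Consequently $(z-\zeta)^N T_\lambda = 0$, hence $\supp T_\lambda \subseteq \{z=\zeta\}$, and therefore $T_\lambda(f,g,h_1,\dots,h_\ell) = 0$ whenever $f$ and $g$ have disjoint supports. As $\lambda \in \cV'$ was arbitrary, $\cV'$ separates points of $\cD_S$, and the vectors $Y(u_1,h_1)\cdots Y(u_\ell,h_\ell)\Omega$ span $\cD_S$, we conclude that $[Y(v,f),Y(w,g)] = 0$ on $\cD_S$ when $\supp f \cap \supp g = \emptyset$, which is (W\ref{itm: W locality}). Having checked regularity and (W\ref{itm: W Mob covariance})--(W\ref{itm: W Vacuum}), $(\cF_S,\cD_S,U_S,\Omega)$ is a M\"obius-covariant Wightman CFT.
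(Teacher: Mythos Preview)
Your proof is correct and follows essentially the same approach as the paper: regularity via $\cV' \subset (\cD_S)^*_{\cF_S}$, covariance and vacuum from Lemma~\ref{lem: vertex fields are Mobius covariant}, and locality by showing the smeared commutator is a distribution annihilated by $(z-\zeta)^N$ and hence supported on the diagonal. The only minor differences are that you verify the spectrum condition directly from the inclusion $\cD_S \subset \widehat\cV$ (the paper instead appeals to Lemma~\ref{lem: pos eigenvalues dense implies spectrum condition}), and you package the locality argument with all $\ell+2$ test-function variables at once rather than first establishing $[Y(v,f),Y(w,g)]u=0$ for $u\in\cV$ and then extending by density; both are equivalent reorganizations of the same argument.
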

\begin{proof}
We have a family of operator-valued distributions $\cF_S$ on $\cD_S \subset \cD_{\mathrm{QP}} \subset \widehat \cV$.
Note that $\cV \subset \cD_S$ since $S$ generates $\cV$.
By Lemma~\ref{lem: exist Yvf} we have $\cV' \subset \cD_{\cF,S}^*$, where we note that it suffices to check continuity of $\bil{Y(v_1,f_1) \cdots Y(v_k,f_k)\Phi,u'}$ in the special case $\Phi=\Omega$ since $\cD_S$ is generated from $\Omega$ by $\cF_S$.
Hence $\cD_{\cF,S}^*$ separates points, as $\cV'$ separates points in $\widehat \cV$, and so $\cF_{S}$ acts regularly.
The subspace $\cD_\cS$ is invariant under $U$ by Lemma~\ref{lem: vertex fields are Mobius covariant}, and by the same lemma the fields $Y(v,f)$ are M\"obius covariant, which verifies the first axiom of a Wightman CFT.

We now check the locality axiom.
Let $v_1,v_2 \in S$, let $u \in \cV$ and let $u' \in \cV'$.
By the vertex algebra locality axiom, the formal distribution $(z_1 - z_2)^N \bil{ [Y(v_1,z_1), Y(v_2,z_2)]u,u'}$ vanishes for $N$ sufficiently large, and thus the corresponding distribution $(f_1,f_2) \mapsto \bil{[Y(v_1,f_1),Y(v_2,f_2)]u,u'}$ is supported on the diagonal $z_1=z_2$ (see \cite[Cor. 2.2]{Kac98} and \cite[Prop. A.1]{CKLW18}).
Hence when $f_1$ and $f_2$ have disjoint support we have 
\begin{equation*}
[Y(v_1,f_1), Y(v_2,f_2)]u=0 \quad \mathrm{for \ all} \ u \in \cV.
\end{equation*}
That is,
\[
[Y(v_1,f_1),Y(v_2,f_2)]Y(a_1, g_1) \cdots Y(a_k,g_k)\Omega = 0
\]
for all $a_j \in S$ and $g_j \in \bbC[z^{\pm 1}]$.
By the joint continuity of such expressions in $g_j$ (which shows that $\cV$ is $\cF$-weakly dense in $\cD_S$) and the cyclicity of $\Omega$, we see that $[Y(v_1,f_1),Y(v_2,f_2)]\Phi = 0$ for all $\Phi \in \cD_S$, and thus the locality axiom holds.

The vacuum axiom holds by construction, and the spectrum condition holds by Lemma~\ref{lem: pos eigenvalues dense implies spectrum condition}.
\end{proof}

\subsection{From Wightman CFTs to vertex algebras}\label{sec: Wightman to VOA}

Let $\cF$ be a Wightman CFT with domain $\cD$, with vacuum vector $\Omega$ and representation $U:\Mob \to \cL(\cD)$.
Let $\cV(n) \subset \cD$ be the finite energy subspace
\[
\cV(n) = \spann \{ \varphi_1(e_{j_1}) \cdots \varphi_k(e_{j_k})\Omega \, | \, k \in \bbZ_{\ge 0}, \sum j_i = -n, \varphi_i \in \cF\},
\]
where $e_j(z)=z^j$, and let $\cV = \bigoplus_{n \ge 0} \cV(n) \subset \cD$.
Note that when $n<0$ we have $\cV(n)=0$ by the spectrum condition of a Wightman CFT, and $\cV$ is $\cF$-strongly dense in $\cD$.

We will show that $\cV$ carries the structure of a M\"obius vertex algebra generated by the point-like quasiprimary fields corresponding to $\cF$.
For $\varphi \in \cF$ with conformal dimension $d$, the corresponding point-like field is a formal sum
\[
\hat \varphi(z) = \sum_{n \in \bbZ} \varphi(e_n)z^{-n-d} .
\]
The key steps are to establish the vertex algebra locality condition
\begin{equation*}\label{eqn: todo vertex locality}
(z-w)^N[\hat \varphi(z), \hat \psi(w)] = 0
\end{equation*}
for $N$ sufficiently large, as well as differentiating the representation $U$ to a representation of $\Lie(\Mob)_\bbC$ for which we have the infinitesimal M\"obius covariance condition
\begin{equation*}\label{eqn: todo mobius covariance}
[L_m, \hat \varphi(z)] = \big(z^{m+1} \tfrac{d}{dz} + (m+1)z^m d\big) \hat \varphi(z), \quad m = -1,0,1.
\end{equation*}
From there, we will invoke general results that say that families of covariant local fields produce vertex algebras (see \cite[Thm. 4.5]{Kac98} for the case of vertex algebras, or more specifically \cite[Thm. A.1]{RaymondTanimotoTener22} for a slight variant for M\"obius vertex algebras).

We begin by establishing M\"obius covariance.

\begin{lem}\label{lem: Lie Mob covariance}
There is a unique representation $\pi:\Lie(\Mob)_\bbC \to\cL(\cD)$ such that for all $\varphi \in \cF$ with conformal dimension $d$ and all $g\tfrac{d}{d\vartheta} \in \Lie(\Mob)$ we have $\pi(g\tfrac{d}{d\vartheta})\Omega=0$ and 
\[
[\pi(g\tfrac{d}{d\vartheta}), \varphi(f)] = \varphi\big((d-1)\tfrac{dg}{d\vartheta} - g \tfrac{df}{d\vartheta}\big).
\]
\end{lem}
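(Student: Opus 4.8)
The plan is to construct $\pi$ on $\cD$ by differentiating the one-parameter rotation and boost subgroups of $\Mob$, using the $\cF$-weak topology (equivalently the weak topology induced by $\cD_\cF^*$) to control limits. First I would note uniqueness: the Lie algebra $\Lie(\Mob)_\bbC$ is spanned by $L_{-1},L_0,L_1$, the corresponding vector fields $g\tfrac{d}{d\vartheta}$ generate one-parameter subgroups $\gamma_t = \exp(tg\tfrac{d}{d\vartheta}) \in \Mob$, and any $\pi$ with the stated properties must satisfy $\pi(g\tfrac{d}{d\vartheta})\Phi = \lim_{t\to 0}\tfrac1t(U(\gamma_t)-\operatorname{id})\Phi$ on vectors of the form $\Phi = \varphi_1(f_1)\cdots\varphi_k(f_k)\Omega$, since by M\"obius covariance (W\ref{itm: W Mob covariance}) and $U$-invariance of $\Omega$ we have $U(\gamma_t)\Phi = \varphi_1(\beta_{d_1}(\gamma_t)f_1)\cdots\varphi_k(\beta_{d_k}(\gamma_t)f_k)\Omega$; such vectors span $\cD$ by the vacuum axiom (W\ref{itm: W Vacuum}), so $\pi$ is determined.

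For existence, I would fix $g\tfrac{d}{d\vartheta}$ and define a candidate operator on the spanning vectors by the Leibniz-type formula obtained from \eqref{eqn: ddt betad in covariance proof}, namely
\[
\pi(g\tfrac{d}{d\vartheta})\,\varphi_1(f_1)\cdots\varphi_k(f_k)\Omega = \sum_{j=1}^k \varphi_1(f_1)\cdots\varphi_j\big((d_j-1)\tfrac{dg}{d\vartheta}f_j - g\tfrac{df_j}{d\vartheta}\big)\cdots\varphi_k(f_k)\Omega,
\]
where $d_j$ is the conformal dimension of $\varphi_j$. The content is well-definedness: if a linear combination of such vectors vanishes, so does the corresponding combination of right-hand sides. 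I would prove this by the same ODE argument used in Lemma~\ref{lem: Laurent polynomial smearing vertex fields are Mobius covariant}: for $\lambda \in \cD_\cF^*$, the function $t \mapsto \lambda(U(\gamma_t)\Phi)$ is the composition of $\lambda$ with $t\mapsto \beta_{d_j}(\gamma_t)f_j$, which extends holomorphically in $t$ to a neighborhood of $\bbR$ (since $\gamma_t$ extends to complex M\"obius transformations mapping $S^1$ into $\bbC^\times$ for $t$ near $\bbR$, and $\beta_d(\gamma_t)f$ depends continuously — indeed holomorphically — on $t$ in $C^\infty(S^1)$, by continuity of $\varphi$ in its test function via regularity, Lemma~\ref{lem: continuity of field operators}). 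Differentiating at $t=0$ and using \eqref{eqn: ddt betad in covariance proof} together with separate continuity of each $\varphi_j$ shows the $t$-derivative of $\lambda(U(\gamma_t)\Phi)$ at $t=0$ equals $\lambda$ applied to the claimed right-hand side. Since $\lambda \in \cD_\cF^*$ is arbitrary and $\cD_\cF^*$ separates points, a vanishing linear combination of the $\Phi$'s has vanishing $U(\gamma_t)$-images for all small $t$, hence vanishing first $t$-derivatives at $t=0$, which is exactly vanishing of the right-hand-side combination. This makes $\pi(g\tfrac{d}{d\vartheta})$ well-defined on $\cD$; extending by linearity over $\bbR$ in the vector field (and then complexifying to $L_{-1},L_0,L_1$) gives linear operators, and $\pi(g\tfrac{d}{d\vartheta})\Omega=0$ is immediate since $\beta_d(\gamma_t)$ fixes constants appropriately — more simply, $U(\gamma_t)\Omega=\Omega$ so its derivative vanishes.

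Finally I would check that $\pi$ is a Lie algebra homomorphism and derive the commutation relation. The commutation relation $[\pi(g\tfrac{d}{d\vartheta}),\varphi(f)] = \varphi((d-1)\tfrac{dg}{d\vartheta}f - g\tfrac{df}{d\vartheta})$ follows directly by applying both sides to a spanning vector $\varphi_1(f_1)\cdots\varphi_k(f_k)\Omega$ and comparing with the defining formula (the left side produces the $j=0$ term plus all $j\ge 1$ terms; subtracting $\varphi(f)$ times the formula for $\pi$ applied to $\varphi_1(f_1)\cdots\varphi_k(f_k)\Omega$ cancels the $j\ge1$ terms). For the homomorphism property: given two vector fields $g\tfrac{d}{d\vartheta}$, $h\tfrac{d}{d\vartheta}$, the operators $[\pi(g\tfrac{d}{d\vartheta}),\pi(h\tfrac{d}{d\vartheta})]$ and $\pi([g\tfrac{d}{d\vartheta},h\tfrac{d}{d\vartheta}])$ agree on $\Omega$ (both kill it) and have the same commutator with every $\varphi(f)$ — this last is a routine computation using the already-established commutation relations and the identity $[g\tfrac{d}{d\vartheta},h\tfrac{d}{d\vartheta}] = (g'h - gh')\tfrac{d}{d\vartheta}$ (with the sign convention fixed in the preliminaries) — so by cyclicity of $\Omega$ and induction on word length in $\cF$ they agree on all of $\cD$. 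I expect the main obstacle to be the well-definedness step: making rigorous that the holomorphic extension in $t$ is uniform in $\lambda \in \cD_\cF^*$ (so that it descends to a $\cD$-valued holomorphic function in the $\cF$-weak topology and its vanishing Taylor coefficients at $t=0$ genuinely force the derivative vectors to vanish), which is exactly where regularity of the action and the separation property of $\cD_\cF^*$ are essential and where the non-unitary generality differs from the Hilbert-space arguments of \cite{RaymondTanimotoTener22}.
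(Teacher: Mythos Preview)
Your approach is correct but noticeably more circuitous than the paper's. The paper simply defines $\pi(g\tfrac{d}{d\vartheta})\Phi := \left.\tfrac{d}{dt}\right|_{t=0} U(\gamma_t)\Phi$, shows this limit exists in the $\cF$-strong topology (because $t \mapsto \beta_d(\gamma_t)f$ is differentiable in $C^\infty(S^1)$ and expressions $\varphi_1(f_1)\cdots\varphi_k(f_k)\Omega$ are jointly $\cF$-strong continuous in the $f_j$), and reads off both $\pi(g\tfrac{d}{d\vartheta})\Omega=0$ and the commutation relation from \eqref{eqn:dertiv Ugammat on vector}. Because $\pi$ is defined as a limit of already well-defined operators $t^{-1}(U(\gamma_t)-\mathrm{id})$, there is no well-definedness issue to check at all.

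By contrast, you define $\pi$ via the Leibniz formula on spanning vectors and then labor to verify well-definedness, invoking the holomorphic-in-$t$ extension from Lemma~\ref{lem: Laurent polynomial smearing vertex fields are Mobius covariant} and separation of points by $\cD_\cF^*$. This works, but the holomorphy is unnecessary: you only need differentiability at $t=0$, and your own observation that $U(\gamma_t)$ is linear (so a vanishing combination has vanishing $U(\gamma_t)$-image for every $t$, hence vanishing derivative) already closes the argument. Your final worry about ``uniformity in $\lambda$'' is a red herring for the same reason: once you recognize that the $\cF$-weak derivative of $U(\gamma_t)\Phi$ exists and equals the Leibniz expression for each individual $\Phi$, well-definedness is automatic because the left side depends only on $\Phi$, not on its presentation. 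The paper's route buys you a three-line proof; yours buys nothing extra here, though the technique you rehearse is exactly what is needed later in Lemma~\ref{lem: Laurent polynomial smearing vertex fields are Mobius covariant} where one must construct $U^0(\gamma)$ (not a derivative) on $\cV$.
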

\begin{proof}
Uniqueness of such a representation follows immediately from the cyclicity of the vacuum (W\ref{itm: W Vacuum}).
Let $g \tfrac{d}{d\vartheta} \in \Lie(\Mob)$, and let $\gamma_t \in \Mob$ be the associated one-parameter group.
We have
\[
U(\gamma_t)\varphi_1(f_1) \cdots \varphi_n(f_n)\Omega = \varphi_1(\beta_{d_1}(\gamma_t)f_1) \cdots \varphi_n(\beta_{d_n}(\gamma_t)f_n)\Omega,
\]
where $d_i$ is the conformal dimension of $\varphi_i \in \cF$.
The derivative of $\beta_d(\gamma_t)$ is given (as in \cite[Eqn.\! (3.4)]{RaymondTanimotoTener22}) by 
\[
\left.\frac{d}{dt}\right|_{t=0} \beta_d(\gamma_t)f = (d-1)\tfrac{dg}{d\vartheta}f - g\tfrac{df}{d\vartheta},
\]
with the derivative taken in the space of smooth functions on $S^1$.

Give $\cD$ the $\cF$-strong topology.
Since expressions $\varphi_1(f_1) \cdots \varphi_n(f_n)\Omega$ are jointly continuous in the $f_j$, we have
\begin{equation}\label{eqn:dertiv Ugammat on vector}
\left.\frac{d}{dt}\right|_{t=0} U(\gamma_t)\varphi_1(f_1) \cdots \varphi_n(f_n)\Omega
=
\sum_{j=1}^n \varphi_1(f_1) \cdots \varphi_j\big((d_j-1)\tfrac{dg}{d\vartheta}f_j - g \tfrac{df_j}{d\vartheta}\big) \cdots \varphi_n(f_n)\Omega.
\end{equation}
In particular, for every $\Phi \in \cD$ the expression $U(\gamma_t)\Phi$ is differentiable at $t=0$, and we define $\pi(g\tfrac{d}{d\vartheta})\Phi = \left.\tfrac{d}{dt}\right|_{t=0} U(\gamma_t)\Phi$.
We have $\pi(g\tfrac{d}{d\vartheta})\Omega=0$ by the M\"obius invariance of the vacuum, and from \eqref{eqn:dertiv Ugammat on vector} we obtain the desired commutation relation for $[\pi(g \tfrac{d}{d\vartheta}), \varphi(f)]$.
A direct calculation shows that $\pi$ is a Lie algebra representation.
\end{proof}

Recalling that $L_m = \pi( -\rmi \rme^{\rmi m\vartheta} \tfrac{d}{d\vartheta})$ for $m=-1,0,1$, one can apply Lemma~\ref{lem: Lie Mob covariance} term-by-term to the modes of $\hat \varphi(z)$ to deduce the infinitesimal covariance relation
\begin{equation}\label{eqn: pointlike mobius covariance}
[L_m, \hat \varphi(z)] = \big(z^{m+1} \tfrac{d}{dz} + (m+1)z^m d\big) \hat \varphi(z).
\end{equation}

We now turn our attention to establishing the vertex algebra locality condition.
Recall that $\cV'$ denotes $\bigoplus_{n=0}^\infty \cV(n)^*$; that is, the space of linear functionals on $\cV$ that are supported on finitely many $\cV(n)$.
By abuse of notation we write $\cD_\cF^* \cap \cV'$ for the subspace of $\cD_\cF^*$ consisting of linear functionals $\lambda$ such that $\lambda|_\cV \in \cV'$, and similarly for $\cD_\cF^* \cap \cV(n)^*$.
By Lemma~\ref{lem: pos eigenvalues dense implies spectrum condition} $\cD_\cF^* \cap \cV(n)^*$ separates points in $\cV(n)$, and so $\cD_\cF^* \cap \cV'$ separates points in $\cV$.
The endomorphism $L_{-1}=\pi(-\rmi \rme^{-\rmi\vartheta}\tfrac{d}{d\vartheta})$ of \eqref{eqn: pointlike mobius covariance} and Lemma~\ref{lem: Lie Mob covariance} gives an endomorphism of $\cD$ by the lemma.
Moreover, the adjoint (transpose) operator $L_{-1}^*$ leaves $\cD_\cF^*$ invariant, becauase if $\lambda \in \cD_\cF^*$, then we have by the same lemma
\begin{align*}
(L_{-1}^*\lambda)&(\varphi_1(f_1) \cdots \varphi_k(f_k)\Omega) =\\
&= \lambda\left(\sum_{j=1}^k \varphi_1(f_1) \cdots \varphi_j\big(-(d_j-1)\rme^{-\rmi\vartheta} f_j + \rmi \rme^{-\rmi \vartheta} \tfrac{df_j}{d\vartheta}\big) \cdots \varphi_k(f_k)\Omega\right),
\end{align*}
which depends continuously on the $f_j$, so $L_{-1}^* \lambda \in \cD_\cF^*$.
Hence $L_{-1}^*$ leaves $\cD_\cF^* \cap \cV'$ invariant, mapping $\cD_\cF^* \cap \cV(n)^*$ into $\cD_\cF^* \cap \cV(n-1)^*$.

If $\lambda$ is a linear functional on a vector space $V$ and $A(z_1,\ldots,z_k)$ is a formal series with coefficients in $V$, then we write $\lambda(A(z_1,\ldots,z_k))$ for the corresponding formal series with coefficients in $\bbC$.
\begin{lem}\label{lem: order of pole in product of wightman fields}
Let $\varphi_1,\varphi_2 \in \cF$ with conformal dimensions $d_1$ and $d_2$, respectively.
Then for every $\lambda \in \cV'$ the formal series 
\[
(z_1-z_2)^{d_1+d_2}\lambda\big(\hat\varphi_1(z_1)\hat\varphi_2(z_2)\Omega\big)
\]
is a polynomial in $z_1$ and $z_2$ after expanding $(z_1-z_2)^{d_1+d_2}$ using the binomial theorem.
\end{lem}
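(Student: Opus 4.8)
The plan is to read off the bound on the order of the pole at $z_1 = z_2$ directly from the spectrum condition and Möbius covariance, \emph{without} invoking locality, by exploiting that $\hat\varphi_2(z_2)\Omega$ translates like $\rme^{z_2 L_{-1}}$ applied to a single vector.

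First I would upgrade the spectrum condition to the sharper vanishing statement that $\varphi(e_n)\Omega = 0$ whenever $n > -d$, for every $\varphi \in \cF$ of conformal dimension $d$; only the range $1-d \le n \le 0$ needs attention, since for $n \ge 1$ the vector $\varphi(e_n)\Omega$ has negative conformal dimension and vanishes by (W\ref{itm: W spec}). Here one uses Lemma~\ref{lem: Lie Mob covariance}: since $[L_m,\varphi(e_n)] = ((d-1)m-n)\varphi(e_{m+n})$ and $L_m\Omega = 0$, one computes $L_1 L_{-1}\varphi(e_{-j})\Omega$ in two ways --- directly via this relation, and via $[L_1,L_{-1}] = 2L_0$ together with $L_1\varphi(e_{-j})\Omega = 0$ (obtained by an induction on $j$) --- to get $(j-d)(j+d-1)\,\varphi(e_{-j})\Omega = 0$; the scalar is nonzero for $0 \le j \le d-1$, so $\varphi(e_{-j})\Omega = 0$ there (the degenerate low-dimensional cases $d \le 1$ being vacuous or requiring a small separate argument). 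Setting $v := \varphi_2(e_{-d_2})\Omega \in \cV(d_2)$, it follows that $\hat\varphi_2(z_2)\Omega = \sum_{j \ge 0}\varphi_2(e_{-d_2-j})\Omega\, z_2^j$ involves only non-negative powers of $z_2$, and the relation $L_{-1}\varphi_2(e_{-k})\Omega = (k-d_2+1)\varphi_2(e_{-k-1})\Omega$ gives $L_{-1}^j v = j!\,\varphi_2(e_{-d_2-j})\Omega$, so that
\[
\hat\varphi_2(z_2)\Omega = \rme^{z_2 L_{-1}}\, v
\]
as a formal power series in $z_2$ with coefficients in $\widehat\cV$.

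Next I would use that conjugation by $\rme^{z_2 L_{-1}}$ implements translation. From \eqref{eqn: pointlike mobius covariance} with $m=-1$ one has $[L_{-1},\hat\varphi_1(z_1)] = \tfrac{d}{dz_1}\hat\varphi_1(z_1)$, and the standard one-parameter-group argument then yields the formal identity $\hat\varphi_1(z_1)\,\rme^{z_2 L_{-1}} = \rme^{z_2 L_{-1}}\,\hat\varphi_1(z_1-z_2)$, hence
\[
\hat\varphi_1(z_1)\hat\varphi_2(z_2)\Omega = \rme^{z_2 L_{-1}}\,\hat\varphi_1(z_1-z_2)\,v .
\]
Now $\hat\varphi_1(w)v = \sum_{m \in \bbZ}\varphi_1(e_m)v\, w^{-m-d_1}$, and since $v \in \cV(d_2)$ while $\varphi_1(e_m)$ lowers conformal dimension by $m$, the vector $\varphi_1(e_m)v$ lies in $\cV(d_2-m)$, which is $\{0\}$ for $m > d_2$ by (W\ref{itm: W spec}). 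Thus $\hat\varphi_1(w)v$ involves only powers $w^{-m-d_1}$ with $m \le d_2$, i.e.\! powers $\ge -(d_1+d_2)$, so $(z_1-z_2)^{d_1+d_2}\hat\varphi_1(z_1-z_2)v$ is a power series in $z_1-z_2$ with only non-negative powers and coefficients in $\cV$. Applying $\rme^{z_2 L_{-1}}$ we get
\[
(z_1-z_2)^{d_1+d_2}\hat\varphi_1(z_1)\hat\varphi_2(z_2)\Omega = \sum_{m \le d_2}\sum_{i \ge 0}\frac{1}{i!}\,L_{-1}^i\varphi_1(e_m)v\;(z_1-z_2)^{d_2-m}\,z_2^i .
\]
Since $L_{-1}^i\varphi_1(e_m)v \in \cV(d_2-m+i)$ and $\lambda \in \cV'$ is supported on finitely many homogeneous subspaces, only finitely many pairs $(m,i)$ contribute after applying $\lambda$; therefore $(z_1-z_2)^{d_1+d_2}\lambda(\hat\varphi_1(z_1)\hat\varphi_2(z_2)\Omega)$ is a finite sum of monomials $(z_1-z_2)^p z_2^q$ with $p,q \ge 0$, which is a polynomial in $z_1,z_2$ once $(z_1-z_2)^p$ is expanded by the binomial theorem.

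I expect the main obstacles to be: (i) the sharpened vanishing $\varphi(e_n)\Omega = 0$ for $n > -d$, which genuinely goes beyond the bare spectrum condition and rests on Möbius covariance, and whose uniform treatment across all conformal dimensions (in particular $d=1$) needs care; and (ii) justifying the manipulations with $\rme^{\pm z_2 L_{-1}}$ --- namely that $\rme^{z_2 L_{-1}}$ is well defined on $\cV$ and $\widehat\cV$ (which holds because $L_{-1}$ raises conformal dimension on a grading bounded below), that the conjugation identity is a genuine formal identity, and that the rearrangements of the resulting double series and the interchange with $\lambda$ are all legitimate. Notably, locality (W\ref{itm: W locality}) is not needed for this statement; it will enter only later, for the vertex-algebra locality of the fields $\hat\varphi$.
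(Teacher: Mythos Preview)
Your proposal is correct and follows essentially the same route as the paper: both establish $\hat\varphi_2(z_2)\Omega = \rme^{z_2 L_{-1}}v$ with $v\in\cV(d_2)$, use the translation identity $\rme^{-z_2 L_{-1}}\hat\varphi_1(z_1)\rme^{z_2 L_{-1}}=\iota_{z_1,z_2}\hat\varphi_1(z_1-z_2)$, and then exploit that $\lambda\in\cV'$ sees only finitely many graded pieces (you keep $\rme^{z_2L_{-1}}$ on the vector side, the paper passes it to $\lambda$ via the adjoint $L_{-1}^*$, which is nilpotent on $\cV'$---these are dual descriptions of the same computation). The one place where the paper is cleaner is the vanishing $\varphi(e_{-j})\Omega=0$ for $0\le j\le d-1$: rather than your two-sided $L_1L_{-1}$ argument (which, as you flag, degenerates at $d=1$), the paper simply uses the recursion $\varphi(e_{-n})\Omega=\tfrac{1}{n-d}L_{-1}\varphi(e_{-n+1})\Omega$ starting from $\varphi(e_1)\Omega\in\cV(-1)=0$, which handles all $d\ge 1$ uniformly.
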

\begin{proof}

We use standard vertex algebra arguments which go through in the present context. 
From the positivity of the energy and the $L_0$- and $L_{-1}$-commutation relations \eqref{eqn: pointlike mobius covariance}, we can deduce (as in the proof of \cite[Thm. 3.11]{RaymondTanimotoTener22}) that $\hat\varphi_2(z_2)\Omega$ has only non-negative powers of $z_2$, and if $u:=\hat\varphi_2(z_2)\Omega\mid_{z_2=0}$ is the constant term, then $u \in \cV(d_2)$.
The formal power series $\rme^{z_2L_{-1}}u$ and $\hat \varphi(z_2)\Omega$ both solve the initial value problem $\tfrac{d}{dz_2} F(z_2) = L_{-1} F(z_2)$ with $F(0) = u$.
This initial value problem has a unique solution in $\cV[[z_2]]$, and we conclude $\hat\varphi_2(z_2)\Omega = \rme^{z_2L_{-1}}u$ as formal series.

Similarly, we consider the formal series in $z_1^{\pm 1}$ and $z_2$ given by $\rme^{-z_2L_{-1}}\hat\varphi_1(z_1)\rme^{z_2L_{-1}}$.
It satisfies the initial value problem $\tfrac{d}{dz_2} F(z_1,z_2) = -[L_{-1},F(z_1,z_2)]$ with $F(z_1,0)=\hat\varphi_1(z_1)$.
Taking each coefficient of $z_1^m$ separately, it is straightforward to see that this initial value problem has a unique solution in $\End(\cV)[[z_1^{\pm 1},z_2]]$.
Let $\iota_{z_1,z_2} \hat\varphi_1(z_1-z_2)$ denote the series in $\End(\cV)[[z_1^{\pm 1}, z_2]]$ obtained by expanding each term $(z_1-z_2)^m$ as a binomial series with positive powers of $z_2$.
This series satisfies the same initial value problem, and so we have
\[
\rme^{-z_2L_{-1}}\hat\varphi_1(z_1)\rme^{z_2L_{-1}} = \iota_{z_1,z_2} \hat\varphi_1(z_1-z_2).
\]

Putting the two calculations together, we obtain an identity of formal series
\[
\hat\varphi_1(z_1)\hat\varphi_2(z_2)\Omega =  \rme^{z_2L_{-1}}\iota_{z_1,z_2} \hat\varphi_1(z_1-z_2)u.
\]
Hence
\[
\lambda\big(\hat\varphi_1(z_1)\hat\varphi_2(z_2)\Omega\big) = (\rme^{z_2L_{-1}^*}\lambda)\big(\iota_{z_1,z_2}\hat\varphi_1(z_1-z_2)u\big).
\]
As $L_{-1}^*$ maps $\cV(n)^*$ into $\cV(n-1)^*$, it acts nilpotently on $\lambda$ and the sum defining $\rme^{z_1 L_{-1}^*}\lambda$ is finite.

Consider a term of this sum, which is of the form $((L_{-1}^*)^m\lambda)(\hat\varphi_1(z_1-z_2)u)$.
It suffices to prove the lemma for $\lambda \in \cV(d)^*$ and then take linear combinations, in which case there is at most one non-zero term in the sum defining this expression.
That is, if we write $\hat\varphi_1(z) = \sum \hat\varphi_{1,n} z^{-n-d_1}$ then
\[
((L_{-1}^*)^m\lambda)(\hat\varphi_1(z_1-z_2)u) = (z_1-z_2)^{-d_1-d_2+d-m} \lambda(\hat\varphi_{1,d_2-d+m}u).
\]
Since this term is non-zero only when $m \le d$, we have that
\[
(z_1-z_2)^{d_1+d_2} \iota_{z_1,z_2} ((L_{-1}^*)^m\lambda)(\hat\varphi_1(z_1-z_2)u) = \iota_{z_1,z_2} C (z_1-z_2)^{d-m}
\]
for a constant $C$, which is a polynomial in $z_1$ and $z_2$.
We conclude that 
\[
(z_1-z_2)^{d_1+d_2} \iota_{z_1,z_2} (\rme^{z_2L_{-1}^*}\lambda)(\hat\varphi_1(z_1-z_2)u)
\]
is a polynomial in $z_1$ and $z_2$, and we are done.
\end{proof}

\begin{lem}\label{lem: wightman to va locality}
Let $\cF$ be a M\"obius-covariant Wightman CFT, and let $\varphi_1,\varphi_2 \in \cF$.
Then $\hat \varphi_1$ and $\hat\varphi_2$ are local in the sense of vertex algebras.
\end{lem}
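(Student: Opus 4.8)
The plan is to distill the vertex-algebra locality relation from the Wightman locality axiom (W\ref{itm: W locality}) by combining it with the pole estimate of Lemma~\ref{lem: order of pole in product of wightman fields}; the upshot will be that $(z-w)^{d_1+d_2}[\hat\varphi_1(z),\hat\varphi_2(w)]=0$, where $d_i$ is the conformal dimension of $\varphi_i$. I would first handle the case in which both fields are applied to the vacuum, and then propagate to all of $\cV$.

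For the vacuum case, fix $\lambda\in\cD_\cF^*\cap\cV'$; recall that such functionals separate the points of $\cV$ by Lemma~\ref{lem: pos eigenvalues dense implies spectrum condition}. Lemma~\ref{lem: order of pole in product of wightman fields}, applied in the two orderings, yields polynomials $P_\lambda,Q_\lambda\in\bbC[z_1,z_2]$ with
\[
(z_1-z_2)^{d_1+d_2}\lambda\big(\hat\varphi_1(z_1)\hat\varphi_2(z_2)\Omega\big)=P_\lambda,\qquad
(z_1-z_2)^{d_1+d_2}\lambda\big(\hat\varphi_2(z_2)\hat\varphi_1(z_1)\Omega\big)=Q_\lambda,
\]
these equalities holding as identities of formal series whose left-hand sides a priori have infinitely many terms. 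On the other hand, the coefficients of the formal series $\lambda(\hat\varphi_1(z_1)\hat\varphi_2(z_2)\Omega)$ and $\lambda(\hat\varphi_2(z_2)\hat\varphi_1(z_1)\Omega)$ are, up to the fixed shift by $z_1^{-d_1}z_2^{-d_2}$, the Fourier coefficients of the distributions $(f,g)\mapsto\lambda(\varphi_1(f)\varphi_2(g)\Omega)$ and $(f,g)\mapsto\lambda(\varphi_2(g)\varphi_1(f)\Omega)$ on $S^1\times S^1$, which are jointly continuous since $\lambda\in\cD_\cF^*$. By (W\ref{itm: W locality}) these two distributions agree on test functions of disjoint support, so their difference is supported on the diagonal; hence the formal distribution $\lambda([\hat\varphi_1(z_1),\hat\varphi_2(z_2)]\Omega)$ is local, i.e.\! annihilated by $(z_1-z_2)^M$ for $M$ sufficiently large (\cite[Cor.~2.2]{Kac98}, \cite[Prop.~A.1]{CKLW18}). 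Taking $M\ge d_1+d_2$ and multiplying through, the identity collapses to $(z_1-z_2)^{M-d_1-d_2}(P_\lambda-Q_\lambda)=0$ in the integral domain $\bbC[z_1,z_2]$, so $P_\lambda=Q_\lambda$. Therefore $(z_1-z_2)^{d_1+d_2}\lambda\big([\hat\varphi_1(z_1),\hat\varphi_2(z_2)]\Omega\big)=P_\lambda-Q_\lambda=0$, and since $\lambda$ ranges over a separating family we conclude $(z_1-z_2)^{d_1+d_2}[\hat\varphi_1(z_1),\hat\varphi_2(z_2)]\Omega=0$.

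The remaining and, I expect, harder step is to upgrade this to $(z-w)^{d_1+d_2}[\hat\varphi_1(z),\hat\varphi_2(w)]v=0$ for \emph{all} $v\in\cV$, which is what the vertex-algebra locality axiom demands. Writing a general $v$ as a linear combination of vectors $\varphi_3(e_{j_3})\cdots\varphi_k(e_{j_k})\Omega$ and pairing against $\lambda\in\cD_\cF^*\cap\cV'$, one still has by (W\ref{itm: W locality}) that the higher correlators $(f,g)\mapsto\lambda(\varphi_1(f)\varphi_2(g)\varphi_3(h_3)\cdots\varphi_k(h_k)\Omega)$ and their $\varphi_1\leftrightarrow\varphi_2$ reordering agree off the diagonal $z=w$, so some power of $(z-w)$ annihilates the corresponding formal distribution; the point that needs work is that this power can be chosen to equal $d_1+d_2$ \emph{uniformly} in $k$ and in the $\varphi_i,h_i$. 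The naive attempt to reduce to the vacuum case via the translation identity $\hat\psi(w)(\cdots)\Omega=\rme^{wL_{-1}}(\cdots)$ from the proof of Lemma~\ref{lem: order of pole in product of wightman fields} is circular, since it merely transports the problem onto another vector of $\cV$. The genuine mechanism is the Wightman-side analogue of the core of the reconstruction theorem: establish the vacuum-locality relation above for \emph{every} pair of fields in $\cF$, then propagate by a joint induction over these pairs and the number of fields appearing in $v$, at each stage using Wightman locality to commute $\hat\varphi_1$ and $\hat\varphi_2$ past the fields between them and $\Omega$ at the cost of powers of $(z-w)$ controlled only by conformal dimensions. Once $(z-w)^{d_1+d_2}[\hat\varphi_1(z),\hat\varphi_2(w)]=0$ holds in $\End(\cV)[[z^{\pm1},w^{\pm1}]]$, the fields $\hat\varphi_1,\hat\varphi_2$ are local in the sense of vertex algebras; depending on how the hypotheses of \cite[Thm.~4.5]{Kac98} and \cite[Thm.~A.1]{RaymondTanimotoTener22} are phrased, this propagation can instead be folded into an appeal to that reconstruction machinery.
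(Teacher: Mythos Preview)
Your vacuum case is correct and essentially matches the paper's: both use Lemma~\ref{lem: order of pole in product of wightman fields} to show that $(z_1-z_2)^{d_1+d_2}\lambda([\hat\varphi_1(z_1),\hat\varphi_2(z_2)]\Omega)$ is a polynomial, and both use Wightman locality to force that polynomial to vanish (the paper observes directly that a polynomial giving rise to a distribution supported on the diagonal is zero; your route through an auxiliary exponent $M$ and the integral domain $\bbC[z_1,z_2]$ reaches the same conclusion).

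The genuine gap is the passage from the vacuum to all of $\cV$. Your proposed ``joint induction'' is not carried out, and as you yourself note, the issue is precisely whether the exponent can be kept at $d_1+d_2$ uniformly. Commuting $\hat\varphi_1,\hat\varphi_2$ past intervening modes using ``Wightman locality'' is not available: Wightman locality concerns smeared fields with disjoint supports, and converting that to a formal-series statement requires the distribution-theoretic step that only yields an exponent depending on the order of the distribution (hence on $\lambda$ and on the intervening fields). Nor can you offload this onto \cite[Thm.~4.5]{Kac98} or \cite[Thm.~A.1]{RaymondTanimotoTener22}: both take full locality in $\End(\cV)$ as a \emph{hypothesis}, so the propagation cannot be ``folded in''.

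The paper's device here is completely different and is the idea you are missing: once $(z_1-z_2)^{d_1+d_2}[\hat\varphi_1(z_1),\hat\varphi_2(z_2)]\Omega=0$ is known, one packages the formal commutator into an operator-valued distribution $X(f,g)\in\cL(\cD)$ (a finite linear combination of $[\varphi_1(e_nf),\varphi_2(e_mg)]$), obtains $X(f,g)\Omega=0$ for all $f,g$ by continuity, and then invokes the \emph{Reeh--Schlieder theorem} (Corollary~\ref{cor: reeh schlieder}): for $f,g$ supported in a common interval $I$, the operator $X(f,g)$ lies in $\cP(I)$, the vacuum is separating for $\cP(I)$, and hence $X(f,g)=0$. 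A partition-of-unity argument then gives $X(f,g)=0$ for all $f,g$, i.e.\ $(z-w)^{d_1+d_2}[\hat\varphi_1(z),\hat\varphi_2(w)]=0$ in $\End(\cV)$. This replaces your unspecified induction by a single clean appeal to the separating property of $\Omega$.
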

\begin{proof}
Let $X:C^\infty(S^1)\times C^\infty(S^1) \to \End(\cD)$ be the operator-valued distribution corresponding to the formal series $(z_1-z_2)^{d_1+d_2}[\hat\varphi_1(z_1), \hat \varphi_2(z_2)]$ after expanding out the binomial $(z_1-z_2)^{d_1+d_2}$.
More precisely, we first define $X$ on pairs of functions $(e_n,e_m)$, where $e_n(z)=z^n$, by
\[
(z_1-z_2)^{d_1+d_2}[\hat\varphi_1(z_1), \hat \varphi_2(z_2)] = \sum_{n,m \in \bbZ} X(e_n,e_m)z_1^{-n-1}z_2^{-m-1},
\]
and these coefficients lie in $\End(\cV)$.
However expanding $(z_1-z_2)^{d_1+d_2}$ we see that $X$ is a (finite) linear combination of distributions of the form 
\begin{equation}\label{eqn: Lambda binomial expansion}
(f,g) \mapsto [\varphi_1(e_n \cdot f), \varphi_2(e_m \cdot g)],
\end{equation}
which extends to a genuine distribution $X:C^\infty(S^1)\times C^\infty(S^1) \to \End(\cD)$ as claimed.
Moreover, we see from this formula that $X(f,g) = 0$ when $f$ and $g$ have disjoint support, i.e.\! the support of $X$ is contained in the diagonal of $S^1 \times S^1$.

Let $\lambda \in \cD_\cF^* \cap \cV'$, and note that since $\lambda \in \cD_\cF^*$ the distribution
\[
(f,g) \mapsto \lambda(X(f,g)\Omega)
\]
is indeed continuous in $f$ and $g$.
Applying Lemma~\ref{lem: order of pole in product of wightman fields} twice, we see that this distribution, which corresponds to the formal series $(z_1-z_2)^{d_1+d_2}\lambda([\hat\varphi_1(z_1),\hat\varphi_2(z_2)]\Omega)$,
is given by integration against a trigonometric polynomial.
As noted above this distribution (and hence the corresponding polynomial) has support contained in the diagonal of $S^1 \times S^1$, and thus must be identically zero.
Since $\cD_\cF^* \cap \cV'$ separates points in $\cV$ by Lemma~\ref{lem: pos eigenvalues dense implies spectrum condition} we conclude that $X(e_n,e_m)\Omega=0$ for all $n,m \in \bbZ$.
As $X(f,g)\Omega$ is $\cF$-weakly continuous in $f$ and $g$, this implies that $X(f,g)\Omega=0$ for all $f,g \in C^\infty(S^1)$.

Recall that $X$ is a linear combination of distributions of the form \eqref{eqn: Lambda binomial expansion}.
Hence if $f$ and $g$ are supported in an open, non-dense interval $I$ of the circle, then the Reeh-Schlieder property (Corollary~\ref{cor: reeh schlieder}) implies that $X(f,g)=0$.
Now choose three intervals that cover $S^1$ such that the union of any two is contained inside some interval, and let $\{\chi_i\}$ be a partition of unity subordinate to this cover.
Then $X(f,g) = \sum_{i,j=1}^3 X(f\chi_i,g\chi_j) = 0$ for arbitrary $f,g \in C^\infty(S^1)$.
In particular $X(e_n,e_m)=0$ for all $n,m \in \bbZ$, and we conclude that the formal series $(z_1-z_2)^{d_1+d_2}[\hat\varphi_1(z_1),\hat\varphi_2(z_2)]$ is identically zero, as desired.
\end{proof}

We can now state and prove one of our main results, constructing a M\"obius vertex algebra from a Wightman theory.

\begin{thm}\label{thm: Wightman to VA}
Let $\cF$ be a (not-necessarily-unitary) M\"obius-covariant Wightman CFT on $S^1$ with domain $\cD$, and let $\cV \subset \cD$ be given by
\[
\cV = \spann \{ \varphi_1(e_{j_1}) \cdots \varphi_k(e_{j_k})\Omega \, | \, k \in \bbZ_{\ge 0}, \, j_i \in \bbZ, \, \varphi_i \in \cF\}.
\]
Then there is a unique structure of M\"obius vertex algebra on $\cV$ such that for every $\varphi \in \cF$ with conformal dimension $d$ there is a quasiprimary $v_\varphi \in \cV(d)$ such that $\hat \varphi(z) = Y(v_\varphi,z) \in \End(\cV)[[z^{\pm 1}]]$.
The set $S=\{v_\varphi \, | \, \varphi \in \cF\}$ generates $\cV$.
\end{thm}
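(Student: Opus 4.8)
The plan is to realize $\cV$ as the vertex algebra generated by the point-like fields $\hat\varphi(z)$ via a reconstruction (existence) theorem, with all of the analytic input already supplied by the preceding lemmas. To begin, for $\varphi \in \cF$ with conformal dimension $d$ I set $v_\varphi := \hat\varphi(z)\Omega\big|_{z=0}$; this is the coefficient of $z^0$ in $\hat\varphi(z)\Omega$, which lies in $\cV(d)$, and as established in the proof of Lemma~\ref{lem: order of pole in product of wightman fields} (using the spectrum condition together with the $L_0$- and $L_{-1}$-parts of \eqref{eqn: pointlike mobius covariance}) the series $\hat\varphi(z)\Omega$ in fact lies in $\cV[[z]]$, so that $v_\varphi = \hat\varphi(z)\Omega|_{z=0}$ genuinely. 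Applying the $m=1$ case of \eqref{eqn: pointlike mobius covariance} to $\Omega$ and using $L_1\Omega = 0$ gives $L_1\hat\varphi(z)\Omega = (z^2\tfrac{d}{dz} + 2dz)\hat\varphi(z)\Omega$, whose right-hand side has vanishing constant term; hence $L_1 v_\varphi = 0$ and $v_\varphi$ is quasiprimary.

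Next I would check that the collection of fields $\{\hat\varphi(z) : \varphi \in \cF\}$, the vector $\Omega$, and the operator $T := L_{-1} = \pi(-\rmi\rme^{-\rmi\vartheta}\tfrac{d}{d\vartheta})$ satisfy the hypotheses of the reconstruction theorem for vertex algebras (\cite[Thm.~4.5]{Kac98}; in its M\"obius-covariant refinement, \cite[Thm.~A.1]{RaymondTanimotoTener22}) on the space $\cV$: (i) each $\varphi(e_n)$ preserves $\cV$, and $\hat\varphi(z)$ is an $\End(\cV)$-valued field, since $\varphi(e_n)$ maps $\cV(m)$ into $\cV(m-n)$, so on any fixed vector only finitely many negative powers of $z$ survive; (ii) $T\Omega = 0$ by Lemma~\ref{lem: Lie Mob covariance} and $[T,\hat\varphi(z)] = \tfrac{d}{dz}\hat\varphi(z)$ by the $m=-1$ case of \eqref{eqn: pointlike mobius covariance}; (iii) $\hat\varphi(z)\Omega \in \cV[[z]]$ with $\hat\varphi(z)\Omega|_{z=0} = v_\varphi$, by the previous paragraph; (iv) for all $\varphi_1,\varphi_2 \in \cF$ the fields $\hat\varphi_1,\hat\varphi_2$ are mutually local by Lemma~\ref{lem: wightman to va locality}; and (v) $\cV$ is by its very definition the span of the vectors obtained by applying iterated modes of the $\hat\varphi(z)$ to $\Omega$. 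The reconstruction theorem then produces a vertex algebra structure $Y$ on $\cV$ with $Y(\Omega,z) = \id_\cV$ and $Y(v_\varphi,z) = \hat\varphi(z)$ for every $\varphi \in \cF$, and this structure is unique with that property because the $v_\varphi$ generate $\cV$; in particular $S = \{v_\varphi : \varphi \in \cF\}$ generates $\cV$.

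It remains to upgrade $\cV$ to a \emph{M\"obius} vertex algebra, taking $\pi|_\cV$ as the $\Lie(\Mob)_\bbC$-action. Axiom (VA1) holds because $\cV = \bigoplus_{n \ge 0}\cV(n)$ by construction and the spectrum condition, while $L_0$ acts on $\cV(n)$ as the scalar $n$ (so $\cV(n) = \ker(L_0 - n)$); axiom (VA3) is the relation $L_m\Omega = 0$ from Lemma~\ref{lem: Lie Mob covariance}; the identity $Y(L_{-1}v,z) = \tfrac{d}{dz}Y(v,z)$ is part of the reconstruction theorem, and for the generating vectors the remaining (VA4) relations are exactly \eqref{eqn: pointlike mobius covariance} (noting $L_0 v_\varphi = d\,v_\varphi$, $L_1 v_\varphi = 0$), from which the general case of (VA4) propagates via the Borcherds product formula \eqref{eqn: BPF}. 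Since this propagation is precisely what the M\"obius reconstruction theorem \cite[Thm.~A.1]{RaymondTanimotoTener22} packages, I would invoke it directly rather than redo it. I expect the one genuinely delicate point to be bookkeeping: matching the grading/mode conventions of $\hat\varphi(z) = \sum_n \varphi(e_n)z^{-n-d}$ with those of the cited theorems, and confirming that the $\Lie(\Mob)_\bbC$-representation the theorem outputs really is $\pi|_\cV$ — this follows from uniqueness of the translation operator together with the $L_0$-grading and the $L_1$-relation already verified on the generators.
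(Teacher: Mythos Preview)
Your proposal is correct and follows essentially the same route as the paper: verify the hypotheses of the M\"obius reconstruction theorem \cite[Thm.~A.1]{RaymondTanimotoTener22} (grading, $\Lie(\Mob)$-invariance of $\Omega$, creation property, covariance relation~\eqref{eqn: pointlike mobius covariance}, locality via Lemma~\ref{lem: wightman to va locality}, spanning), then read off the vertex algebra structure and check that $v_\varphi$ is quasiprimary. The only cosmetic difference is that the paper gives a self-contained argument for the removable singularity of $\hat\varphi(z)\Omega$ at $z=0$ inside the proof (using the $L_{-1}$-recursion $\varphi_{-n}\Omega = \tfrac{1}{n-d}L_{-1}\varphi_{-n+1}\Omega$), whereas you cite the observation made in the proof of Lemma~\ref{lem: order of pole in product of wightman fields}; and the paper invokes the M\"obius reconstruction theorem directly rather than first doing Kac's version and then upgrading, which you also end up suggesting.
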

\begin{proof}
We equip $\cV$ with the representation of $\Lie(\Mob)_\bbC$ from Lemma~\ref{lem: Lie Mob covariance}.
To show that the point-like fields $\hat \varphi(z)$ generate a M\"obius vertex algebra, we invoke \cite[Thm. A.1]{RaymondTanimotoTener22} (see also \cite[Thm. 4.5]{Kac98}).
To invoke this theorem, we need to verify that:
\begin{enumerate}
\item $\cV = \bigoplus_{n \ge 0} \ker(L_0-n)$
\item $\Omega$ is $\Lie(\Mob)$-invariant
\item For every $\varphi \in \cF$, $\hat \varphi(z)\Omega$ has a removable singularity at $z=0$
\item For every $\varphi \in \cF$, there exists a $d_\varphi \in \bbZ_{\ge 0}$ such that
\[
[L_m, \hat \varphi(z)] = (z^{m+1} \tfrac{d}{dz} + (m+1)z^m d_\varphi) \hat \varphi(z) \qquad m=-1,0,1
\]
\item For every $\varphi,\psi \in \cF$, we have $(z-w)^N [\hat \varphi(z), \hat \psi(w)] = 0$ for $N$ sufficiently large
\item $\cV = \spann \{ \varphi_1(e_{j_1}) \cdots \varphi_k(e_{j_k})\Omega \, | \, k \ge 0, j_i \in \bbZ, \varphi_i \in \cF\}$.
\end{enumerate}
The first point follows from the fact that $\varphi_1(e_{j_1}) \cdots \varphi_k(e_{j_k})\Omega$ is an eigenvector for $L_0$ with eigenvalue $-\sum j_i$ by the commutation relation of Lemma~\ref{lem: Lie Mob covariance}.
The second point and fourth point also follow from Lemma~\ref{lem: Lie Mob covariance} along with Equation~\eqref{eqn: pointlike mobius covariance}.
The fifth point holds by Lemma~\ref{lem: wightman to va locality}, and the sixth point is the definition of $\cV$.

We now argue the third point, that $\hat \varphi(z)$ has a removable singularity at $z=0$.
The argument is the same as in \cite[Thm. 3.11]{RaymondTanimotoTener22}.
Let $\varphi_n = \varphi(e_n)$.\footnote{Note that $\varphi_n$ is the $n$-th mode of a single field $\varphi$
and not the $n$-th field. We use this notation only here and in the next paragraph.}

We must show that $\varphi_{-n}\Omega = 0$ for $n \le d-1$.
When $n < 0$ this identity holds by the spectrum condition which implies that $\ker(L_0-n) = 0$ for these $n$.
So we now consider $n=0, \ldots, d-1$.
From the $L_{-1}$-commutation relation of $\hat \varphi$ we have
\[
\varphi_{-n}\Omega = \tfrac{1}{n-d} L_{-1}\varphi_{-n+1}\Omega.
\]
We repeatedly apply this identity, starting with $n=0$, to obtain $0=\varphi_{0}\Omega = \cdots =\varphi_{-d+1}\Omega$, as desired.

Thus by \cite[Thm. A.1]{RaymondTanimotoTener22} there exists a unique structure of M\"obius vertex algebra on $\cV$, with the same $L_n$, such that for every $\varphi \in \cF$ with conformal dimension $d$ we have $Y(\varphi_{-d}\Omega,z) = \hat \varphi(z)$.
The vector $\varphi_{-d}\Omega$ is quasiprimary, as
\[
L_1 \varphi_{-d}\Omega = [L_1, \varphi_{-d}]\Omega = \lim_{z \to 0} [L_1, Y(\varphi_{-d}\Omega, z)]\Omega 
= \lim_{z \to 0} (z^2 \tfrac{d}{dz} + 2zd)Y(\varphi_{-d}\Omega, z)\Omega = 0.
\]
By the sixth point, the set $S$ in the statement of the theorem generates $\cV$.
This completes the proof of the existence statement.

For uniqueness, note that the set $\{\lim_{z \to 0} \hat \varphi(z)\Omega\}$ generates any vertex algebra satisfying the statement of the theorem.
The modes of the corresponding fields are determined by the fields $\varphi(z)$, and the modes of the remaining fields are then determined by the Borcherds product formula \eqref{eqn: BPF}.
The grading operator $L_0$ is determined by the requirement that the conformal dimension of $\lim_{z \to 0} \hat \varphi(z)\Omega$ matches the conformal dimension of $\varphi$.
The operators $L_{\pm 1}$ are then determined by the commutation relations with the generating fields.
We conclude that the M\"obius vertex algebra constructed above is the unique such structure satisfying the requirements of the theorem.
\end{proof}

As a corollary of the proof of Theorem~\ref{thm: Wightman to VA} we have that if $\varphi \in \cF$ is non-zero and has conformal dimension $d$ then
\begin{equation}\label{eqn: conf dim formula}
d = \inf \{n \in \bbZ_{\ge 0} \, | \, \varphi(e_{-n})\Omega \ne 0\},
\end{equation}
and in particular this gives a proof that the conformal dimension of a Wightman field is uniquely determined.

We conclude this section with a canonical realization of the domain $\cD$ of a Wightman CFT.

\begin{prop}\label{prop: canonical embedding of D}
Let $\cF$ be a M\"obius-covariant Wightman CFT on $S^1$ with domain $\cD$, and let $\cV \subset \cD$ be the corresponding M\"obius vertex algebra from Theorem~\ref{thm: Wightman to VA}.
Equip $\cD$ with the $\cF$-strong topology and equip $\widehat \cV = \prod_{n=0}^\infty \cV(n)$ with the weak topology induced by $\cV'$.
Then the identity map $\operatorname{id}_\cV$ extends to a (necessarily unique) injective continuous linear map $\iota:\cD \to \widehat \cV$.
\end{prop}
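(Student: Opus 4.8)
The plan is to define $\iota$ directly on the spanning set of $\cD$ furnished by the vacuum axiom (W\ref{itm: W Vacuum}) and then to check that it is well-defined, restricts to the inclusion, is continuous, and is injective. For $\varphi\in\cF$ with conformal dimension $d$ let $v_\varphi\in\cV(d)$ be the quasiprimary generator with $\hat\varphi(z)=Y(v_\varphi,z)$ produced by Theorem~\ref{thm: Wightman to VA}, and for $v\in\cV$, $f\in C^\infty(S^1)$ let $Y(v,f)$ be the smeared vertex operator of Lemma~\ref{lem: exist Yvf}, acting on a domain inside $\widehat\cV$. I would set
\[
\iota\big(\varphi_1(f_1)\cdots\varphi_k(f_k)\Omega\big)=Y(v_{\varphi_1},f_1)\cdots Y(v_{\varphi_k},f_k)\Omega\in\widehat\cV
\]
and extend by linearity, once well-definedness is established.

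The key preliminary step is a matrix-element identity: for every $n\in\bbZ_{\ge0}$, every $\lambda\in\cD_\cF^*\cap\cV(n)^*$, and all $\varphi_1,\dots,\varphi_k\in\cF$, the two multilinear maps $C^\infty(S^1)^k\to\bbC$ given by $(f_1,\dots,f_k)\mapsto\lambda\big(\varphi_1(f_1)\cdots\varphi_k(f_k)\Omega\big)$ and $(f_1,\dots,f_k)\mapsto\lambda\big(Y(v_{\varphi_1},f_1)\cdots Y(v_{\varphi_k},f_k)\Omega\big)$ coincide. Both are continuous --- the first because $\lambda\in\cD_\cF^*$ and $\cD$ carries the $\cF$-strong topology, the second by Lemma~\ref{lem: exist Yvf}(ii) together with continuity of $\lambda$ on $\widehat\cV$ --- so it suffices to check agreement when the $f_j$ are Laurent polynomials. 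In that case $\varphi_1(f_1)\cdots\varphi_k(f_k)\Omega$ lies in $\cV$, and since the Wightman modes $\varphi(e_j)$ act on $\cV$ as the degree-shifted modes of $v_\varphi$ (which is exactly the meaning of $\hat\varphi(z)=Y(v_\varphi,z)$, combined with $Y(v,\cdot)|_\cV=Y^0(v,\cdot)$ from Lemma~\ref{lem: exist Yvf}(i)), this vector equals $Y(v_{\varphi_1},f_1)\cdots Y(v_{\varphi_k},f_k)\Omega$ as an element of $\cV\subset\widehat\cV$; as $\lambda$ detects only the $\cV(n)$-component in either picture, the two values agree, and the identity follows by density of Laurent polynomials in $C^\infty(S^1)$.

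Well-definedness of $\iota$ is then immediate: if $\sum_i\varphi_1^i(f_1^i)\cdots\varphi^i_{k_i}(f^i_{k_i})\Omega=0$ in $\cD$, applying $\lambda\in\cD_\cF^*\cap\cV(n)^*$ and using the identity gives $\lambda\big(\sum_i Y(v_{\varphi_1^i},f_1^i)\cdots Y(v^i_{\varphi_{k_i}},f^i_{k_i})\Omega\big)=0$; since $\cD_\cF^*\cap\cV(n)^*$ separates points of $\cV(n)$ by Lemma~\ref{lem: pos eigenvalues dense implies spectrum condition}(i) and $n$ is arbitrary, the vector $\sum_i Y(v_{\varphi_1^i},f_1^i)\cdots Y(v^i_{\varphi_{k_i}},f^i_{k_i})\Omega$ vanishes in $\widehat\cV$. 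Taking the $f_j$ to be monomials shows $\iota|_\cV$ is the inclusion $\cV\subset\widehat\cV$. For continuity, since $\widehat\cV$ carries the weak topology induced by $\cV'$ it is enough (Remark~\ref{rem: weak of F strong is F weak}) to see that $(f_1,\dots,f_k)\mapsto\iota\big(\varphi_1(f_1)\cdots\varphi_k(f_k)\Phi\big)$ is continuous for each fixed $\varphi_j\in\cF$ and $\Phi\in\cD$; expanding $\Phi$ as a combination of vectors $\psi_1(g_1)\cdots\psi_l(g_l)\Omega$, this becomes a finite sum of expressions $Y(v_{\varphi_1},f_1)\cdots Y(v_{\varphi_k},f_k)\Psi$ with $\Psi$ in the domain of Lemma~\ref{lem: exist Yvf}, each jointly continuous in the $f_j$ by that lemma. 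Uniqueness follows because $\cV$ is $\cF$-strongly dense in $\cD$ (Lemma~\ref{lem:V(n) dense}) and $\widehat\cV$ is Hausdorff, so two continuous extensions of the inclusion must agree.

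The remaining point, injectivity, is where I expect the real work to lie. First I would show that $\bigcup_n\big(\cD_\cF^*\cap\cV(n)^*\big)$ separates points not merely of $\cV$ but of all of $\cD$: given $\Phi\ne0$, regularity yields $\mu\in\cD_\cF^*$ with $\mu(\Phi)\ne0$; the map $\vartheta\mapsto\mu\big(U(R_\vartheta)\Phi\big)$ is continuous (Lemma~\ref{lem:V(n) dense}(i)) and $2\pi$-periodic and nonzero at $\vartheta=0$, so one of its Fourier coefficients is nonzero, i.e.\ $\mu_n(\Phi)\ne0$ for some $n$, where $\mu_n:=\tfrac1{2\pi\rmi}\int_{S^1}z^{-n-1}r_z^*\mu\,dz$ is well-defined in $\cD_\cF^*$ exactly as in the proof of Lemma~\ref{lem: pos eigenvalues dense implies spectrum condition}; a direct computation with rotations shows $\mu_n\in\cD_\cF^*\cap\cV(n)^*$. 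Given this, if $\iota(\Phi)=0$ then the matrix-element identity gives $\lambda(\Phi)=\lambda\big(\iota(\Phi)\big)=0$ for every $\lambda\in\cD_\cF^*\cap\cV(n)^*$ and every $n$, whence $\Phi=0$. The crux of the whole argument is therefore the threefold use of the finite-energy functionals $\cD_\cF^*\cap\cV(n)^*$: they separate each weight space $\cV(n)$, they read off the $\widehat\cV$-components of $\iota(\Phi)$, and --- after averaging a general element of $\cD_\cF^*$ over rotations --- they separate points of the full domain $\cD$; everything else amounts to careful bookkeeping between the $\cF$-strong topology on $\cD$ and the weak topology on $\widehat\cV$.
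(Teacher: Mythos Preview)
Your proof is correct and takes a genuinely different route from the paper's. The paper stays entirely inside the Wightman CFT $\cF$: it shows that each $\Phi\in\cD$ admits a unique $\cF$-strong decomposition $\Phi=\sum_n\Phi_n$ with $\Phi_n\in\cV(n)$, by extending the map $(f_1,\dots,f_k)\mapsto\varphi_1(f_1)\cdots\varphi_k(f_k)\Omega$ to the completed projective tensor product $C^\infty((S^1)^k)$ and composing with the Fourier projections $P_n$ (using finite-dimensionality of the spaces $\cV(n;v_1,\dots,v_k)$ to ensure the components land in $\cV(n)$). Injectivity is then immediate from uniqueness of the decomposition. Your approach instead reuses the smeared operators $Y(v,f)$ from Lemma~\ref{lem: exist Yvf}, effectively defining $\iota$ as the comparison map between $\cD$ and the domain of the Wightman CFT built from $\cV$ in Section~\ref{sec: VOA to Wightman}; this is conceptually the content of Lemma~\ref{lem: equivalence isos}(i), which the paper proves later \emph{using} Proposition~\ref{prop: canonical embedding of D}. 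Your route is cleaner in that it avoids the completed-tensor-product machinery, at the cost of invoking the other half of the correspondence; the paper's route is more self-contained and yields the $\cF$-strong convergent decomposition $\Phi=\sum_n\Phi_n$ as a byproduct. One small point: in your injectivity step you should note that the Fourier components $\mu_n$ with $n<0$ vanish identically on $\cD$ (since $\mu_n|_\cV=0$ and $\cV$ is $\cF$-strongly dense), so the nonzero Fourier coefficient you produce necessarily has $n\ge 0$.
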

\begin{proof}
First, we claim that for any $\Phi \in \cD$ there exists a unique sequence $\Phi_n \in \cV(n)$ such that $\Phi = \sum \Phi_n$, converging in the $\cF$-strong topology.
We first consider existence.
It suffices to establish existence for $\Phi=\varphi_1(f_1) \cdots \varphi_k(f_k)\Omega$.
Arguing as in the proof of Lemma~\ref{lem: vertex fields are Mobius covariant}, there exists a continuous map $S:C^\infty((S^1)^k) \to \widehat \cD$ such that
\[
S(f_1\otimes_\pi \cdots \otimes_\pi f_k) = \varphi_1(f_1) \cdots \varphi_k(f_k)\Omega
\]
for all $f_j \in C^\infty(S^1)$, where $\widehat \cD$ is the completion of $\cD$ in the $\cF$-strong topology (see Appendix \ref{app: TVS and LCS}) and 
\[
(f_1 \otimes_\pi \cdots \otimes_\pi f_k)(z_1, \ldots, z_k) = f_1(z_1) \cdots f_k(z_k).
\]
Let $P_n:C^\infty((S^1)^k) \to C^\infty((S^1)^k)$ be the projection onto the closed span of monomials $z_1^{n_1} \cdots z_k^{n_k}$ with $n_1 + \cdots +n_k = -n$ (whose kernel is spanned by monomials with $n_1 + \cdots + n_k \ne -n$).
When $f_1, \ldots, f_k \in \bbC[z^{\pm 1}]$ we have 
\[
S(P_n(f_1\otimes_\pi \cdots \otimes_\pi f_k)) \in \cV(n; v_1, \ldots, v_k)
\]
where $v_j \in \cV$ is the vector corresponding to $\varphi_j$.
Since $\cV(n; v_1, \ldots, v_k)$ has finite dimension, and is therefore complete \cite[Thm. 4.10.3]{NariciBeckenstein11}, the composed map $SP_n$ takes values in $\cV(n; v_1, \ldots, v_k)$, and in particular in $\cV(n)$.
Thus if $\Phi = S(f_1\otimes_\pi \cdots \otimes_\pi f_k)$ and we set $\Phi_n = SP_n(f_1\otimes_\pi \cdots \otimes_\pi f_k)$,
then $\Phi = \sum \Phi_n$ in (the natural extension of) the $\cF$-strong topology on $\widehat \cD$, because
$\sum_n P_n(f_1\otimes_\pi \cdots \otimes_\pi f_k) = f_1\otimes_\pi \cdots \otimes_\pi f_k$ in $C^\infty((S^1)^k)$ and $S$ is continuous.

We now consider uniqueness of the sequence $\Phi_n$.
Suppose that $\sum \Phi_n = 0$ with $\Phi_n \in \cV(n)$ and the sum converging $\cF$-strongly.
Then any $\lambda \in \cV(m)^* \cap \cD_\cF^*$ extends to $\widehat \cD$ by continuity (see Appendix \ref{app: TVS and LCS}) and we have 
\[
0 = \lambda(\Phi) = \sum \lambda(\Phi_n) = \lambda(\Phi_m).
\]
As $\cV(m)^* \cap \cD_\cF^*$ separates points in $\cV(m)$ by Lemma~\ref{lem: pos eigenvalues dense implies spectrum condition} we see $\Phi_m = 0$, and since $m$ was arbitrary this establishes the uniqueness portion of the claim.

We now define $\iota:\cD \to \widehat \cV$ by $\iota(\Phi) = (\Phi_n)_{n \ge 0}$, where $\Phi_n \in \cV(n)$ is the unique sequence such that $\sum \Phi_n = \Phi$ with $\cF$-strong convergence.
This map is well-defined by the above claim and, by inspection, $\iota$ is injective and restricts to the identity on $\cV$.
It remains to check that $\iota$ is continuous from the $\cF$-strong topology to the weak topology on $\widehat \cV$ induced by $\cV'$.
By the universal property of the $\cF$-strong topology, it suffices to check that $\lambda(\iota \varphi_1(f_1) \cdots \varphi_k(f_k)\Omega)$ depends continuously on the $f_j$ for any $\lambda \in \cV(n)^*$.
By the calculation above we have
\begin{equation}\label{eqn: lambda iota}
\lambda(\iota \varphi_1(f_1) \cdots \varphi_k(f_k)\Omega) =
\lambda(SP_n(f_1 \otimes_\pi \cdots \otimes_\pi f_k)).
\end{equation}
We have seen that $SP_n$ is a continuous map with values in the finite-dimensional space $\cV(n; v_1, \ldots, v_k)$, and $\lambda|_{\cV(n; v_1, \ldots, v_k)}$ is evidently continuous.
We conclude that \eqref{eqn: lambda iota} is continuous in the $f_j$, and so $\iota$ is continuous as claimed.
\end{proof}

\subsection{Equivalence of categories}\label{sec: equivalence of categories}

We have constructions in Theorem~\ref{thm: VA to Wightman} and Theorem~\ref{thm: Wightman to VA} that produce Wightman CFTs from vertex algebras and vice versa.
In this section we show that these constructions are inverse to each other, or more precisely we show that they induce an equivalence of categories.
We now introduce the relevant categories.

A homomorphism $g:\cV \to \tilde \cV$ of M\"obius vertex algebras is a linear map that intertwines the representations of $\Lie(\Mob)$, maps the vacuum vector to the vacuum vector, and intertwines the modes:
\[
g(v_{(n)}u) = g(v)_{(n)}g(u) \qquad u,v \in \cV.
\]
Now suppose that $\cV$ have $\tilde \cV$ are equipped with choices of generating sets of quasiprimary vectors $S$ and $\tilde S$, respectively.
We say that $g$ is a morphism $(\cV,S) \to (\tilde \cV, \tilde S)$ if $g$ is a homomorphism of M\"obius vertex algebras and $g(S) \subset \tilde S$.
We write $\mathsf{MVA}^+$ for the category of M\"obius vertex algebras equipped with a choice of generating set of quasiprimary vectors, where the superscript indicates this choice of a generating set.

If $(\cF,\cD,U,\Omega)$ and $(\tilde \cF, \tilde \cD, \tilde U, \tilde \Omega)$ are M\"obius-covariant Wightman CFTs on $S^1$, then a morphism $\cF \to \tilde \cF$ is a linear map $g:\cD \to \tilde \cD$ and a function $g_*:\cF \to \tilde \cF$ such that $g(\Omega) = \tilde \Omega$, $g$ intertwines $U$ and $\tilde U$, and  $g \varphi(f) = (g_* \varphi)(f) g$ for all $\varphi \in \cF$ and $f \in C^\infty(S^1)$.
Note that $g_*$ is uniquely determined by $g$.
A straightforward calculation shows that a homomorphism $g$ is continuous when $\cD$ and $\tilde \cD$ are respectively given the $\cF$-strong and $\tilde \cF$-strong topologies, and similarly for the $\cF$-weak and $\tilde \cF$-weak topologies.
We write $\mathsf{MW}$ for the category of M\"obius-covariant Wightman CFTs on $S^1$.

\begin{lem}\label{lem: functoriality Wightman to VA}
Let $(\cF,\cD,U,\Omega)$ and $(\tilde \cF, \tilde \cD, \tilde U, \tilde \Omega)$ be a pair of M\"obius-covariant Wightman CFTs and let $(g,g_*)$ be a morphism $\cF \to \tilde \cF$.
Let $\cV \subset \cD$ and $\tilde \cV \subset \tilde \cD$ be the M\"obius vertex algebras constructed in Theorem~\ref{thm: Wightman to VA}, and let $S$ and $\tilde S$ be the respective sets of generating vectors.
Then $g(\cV) \subset \tilde \cV$ and $g|_\cV:(\cV,S) \to (\tilde \cV,\tilde S)$ is a morphism in $\mathsf{MVA}^+$.
\end{lem}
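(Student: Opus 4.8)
The plan is to check that the linear map $g|_\cV$ satisfies the four defining conditions of a morphism in $\mathsf{MVA}^+$: it carries $\cV$ into $\tilde\cV$ and $\Omega$ to $\tilde\Omega$, it intertwines the $\Lie(\Mob)_\bbC$-actions, it intertwines all vertex operator modes, and it sends $S$ into $\tilde S$. The first is immediate from cyclicity of the vacuum: applying $g\varphi(f)=(g_*\varphi)(f)g$ repeatedly to a spanning vector $\varphi_1(e_{j_1})\cdots\varphi_k(e_{j_k})\Omega$ of $\cV(n)$ (with $j_1+\cdots+j_k=-n$) and using $g\Omega=\tilde\Omega$ shows that $g$ sends it to $(g_*\varphi_1)(e_{j_1})\cdots(g_*\varphi_k)(e_{j_k})\tilde\Omega\in\tilde\cV(n)$; hence $g(\cV)\subseteq\tilde\cV$ and the grading is preserved (consistently with $g$ intertwining the rotation subgroup of $U$).

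For the $\Lie(\Mob)_\bbC$-intertwining I would argue through the group representations rather than the Lie algebra directly. Since $g$ is continuous from the $\cF$-strong to the $\tilde\cF$-strong topology and (as in the proof of Lemma~\ref{lem: Lie Mob covariance}) $\pi(X)\Phi=\left.\tfrac{d}{dt}\right|_{t=0}U(\gamma_t)\Phi$ for $X\in\Lie(\Mob)$ with associated one-parameter group $\gamma_t$, applying the continuous linear map $g$ to the defining difference quotients and using $gU(\gamma_t)=\tilde U(\gamma_t)g$ gives $g\pi(X)=\tilde\pi(X)g$ on all of $\cD$; in particular $gL_m=\tilde L_m g$ for $m=-1,0,1$. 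Restricting to $\cV$ gives the desired $\Lie(\Mob)_\bbC$-module map, since the grading operator and the $L_{\pm1}$ on $\cV$ are by definition the restrictions of $\pi$.

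The heart of the argument is the identification $g(v_\varphi)=v_{g_*\varphi}$ for each $\varphi\in\cF$, which in particular yields $g(S)\subseteq\tilde S$. By construction $v_\varphi=\varphi(e_{-d_\varphi})\Omega$, so $g(v_\varphi)=(g_*\varphi)(e_{-d_\varphi})\tilde\Omega$; if $g_*\varphi=0$ then $g(v_\varphi)=0=v_{g_*\varphi}$, which lies in $\tilde S$ since $g_*\varphi\in\tilde\cF$. If $g_*\varphi\neq0$, then Theorem~\ref{thm: Wightman to VA} applied to $\tilde\cF$ gives $\hat{g_*\varphi}(z)=Y(v_{g_*\varphi},z)$ with $v_{g_*\varphi}=(g_*\varphi)(e_{-d_{g_*\varphi}})\tilde\Omega\neq0$. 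From $\hat\varphi(z)\Omega=Y(v_\varphi,z)\Omega=\rme^{zL_{-1}}v_\varphi$ one reads off that $\varphi(e_{-m})\Omega=\tfrac{1}{(m-d_\varphi)!}L_{-1}^{\,m-d_\varphi}v_\varphi$ for $m\geq d_\varphi$ and $\varphi(e_{-m})\Omega=0$ for $m<d_\varphi$; since $\varphi(e_{-d_{g_*\varphi}})\Omega\neq0$ (its image under $g$ is $v_{g_*\varphi}\neq0$) we get $d_{g_*\varphi}\geq d_\varphi$, and then $gL_{-1}=\tilde L_{-1}g$ gives $v_{g_*\varphi}=\tfrac{1}{(d_{g_*\varphi}-d_\varphi)!}\tilde L_{-1}^{\,d_{g_*\varphi}-d_\varphi}g(v_\varphi)$, so $g(v_\varphi)\neq0$; but $g(v_\varphi)=(g_*\varphi)(e_{-d_\varphi})\tilde\Omega\neq0$ then forces $d_{g_*\varphi}\leq d_\varphi$ by \eqref{eqn: conf dim formula}, hence $d_{g_*\varphi}=d_\varphi$ and $g(v_\varphi)=v_{g_*\varphi}\in\tilde S$. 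I expect this to be the main obstacle: it is the point where the analytic covariance of the Wightman morphism is coupled to the algebraic normalization of the generating vectors, in particular where one rules out a shift of conformal dimension under $g_*$; the rest is bookkeeping.

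Finally, for the mode intertwining I would use the degree-shift identification $(v_\varphi)_{(n)}=\varphi(e_{n-d_\varphi+1})$ on $\cV$ together with $d_{g_*\varphi}=d_\varphi$, so that for all $u\in\cV$
\[
g\big((v_\varphi)_{(n)}u\big)=g\big(\varphi(e_{n-d_\varphi+1})u\big)=(g_*\varphi)(e_{n-d_\varphi+1})g(u)=g(v_\varphi)_{(n)}g(u)
\]
(the case $g_*\varphi=0$ being trivial). Thus the subspace $\{v\in\cV:g(v_{(m)}u)=g(v)_{(m)}g(u)\ \text{for all }m\in\bbZ,\ u\in\cV\}$ contains $S$ and $\Omega$, is closed under $L_{-1}$ by $gL_{-1}=\tilde L_{-1}g$, and is closed under all modes by the standard computation with the Borcherds product formula \eqref{eqn: BPF}; since $S$ generates $\cV$ as a vertex algebra it is all of $\cV$. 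Combined with $g(S)\subseteq\tilde S$, this shows that $g|_\cV\colon(\cV,S)\to(\tilde\cV,\tilde S)$ is a morphism in $\mathsf{MVA}^+$.
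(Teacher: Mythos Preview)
Your proof is correct and follows essentially the same approach as the paper's: both establish $g(\cV)\subset\tilde\cV$ by pushing spanning vectors through the intertwining relation, obtain $gL_m=\tilde L_m g$ by differentiating $gU(\gamma_t)=\tilde U(\gamma_t)g$, prove equality of conformal dimensions $d_\varphi=d_{g_*\varphi}$ (when $g_*\varphi\neq 0$) via the characterization \eqref{eqn: conf dim formula} together with the $L_{-1}$-recursion $\varphi(e_{-n})\Omega=\tfrac{1}{n-d}L_{-1}\varphi(e_{-n+1})\Omega$, and then extend the mode-intertwining relation from the generators $S$ to all of $\cV$ by the Borcherds product formula. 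The only cosmetic difference is that you package the dimension argument as a direct two-sided inequality (first $d_{g_*\varphi}\ge d_\varphi$, then $g(v_\varphi)\neq 0$ via the $\tilde L_{-1}$-relation forces $d_{g_*\varphi}\le d_\varphi$), whereas the paper phrases the second step as a proof by contradiction; the content is identical.
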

\begin{proof}

By definition $\cV$ is spanned by vectors of the form $\varphi_1(e_{j_1}) \cdots \varphi_k(e_{j_k})\Omega$ where $\varphi_i \in \cF$ and $e_j(z) = z^j$.
Since $(g,g_*)$ is a morphism we have
\[
g\varphi_1(e_{j_1}) \cdots \varphi(e_{j_k})\Omega = (g_*\varphi_1)(e_{j_1}) \cdots (g_*\varphi_k)(e_{j_k})\tilde \Omega \in \tilde \cV,
\]
so $g(\cV) \subset \tilde \cV$.

We next check that $g$ intertwines the representations of $\Lie(\Mob)$.
Let $h \tfrac{d}{d\vartheta} \in \Lie(\Mob)$ and let $\gamma_t \in \Mob$ be the corresponding one-parameter group.
We saw in the proof of Lemma~\ref{lem: Lie Mob covariance} that the representations of $\Lie(\Mob)$ on $\cD$ and $\tilde \cD$ are given by differentiating $U(\gamma_t)$, and so we have
\[
g \pi(h \tfrac{d}{d\vartheta})v = g \left.\frac{d}{dt}\right|_{t=0}U(\gamma_t)v
= \left.\frac{d}{dt}\right|_{t=0}\tilde U(\gamma_t)gv
= \tilde \pi(h \tfrac{d}{d\vartheta}) gv
\]
where we used that the derivatives are taken in the $\cF$- and $\tilde \cF$-weak topologies, and $g$ is continuous with respect to these topologies.

Now fix $\varphi \in \cF$ with conformal dimension $d$.
Let $d'$ be the conformal dimension of $g_*\varphi$, and we begin by arguing $d=d'$ provided $g_*\varphi \not \equiv 0$.
By \eqref{eqn: conf dim formula} we have
\[
d = \inf \{ n \in \bbZ_{\ge 0} \, | \, \varphi(e_{-n})\Omega \ne 0\}, \qquad d' = \inf \{ n \in \bbZ_{\ge 0} \, | \, (g_*\varphi)(e_{-n})\Omega \ne 0\}.
\]
As $g_*\varphi(e_{-n})\Omega = g \varphi(e_{-n})\Omega$ we have $d \le d'$, and we must show that $g \varphi(e_{-d})\Omega \ne 0$.
From the previous step we know that $gL_n = \tilde L_n g$, where as usual $L_n = \pi(-\rmi \rme^{\rmi n \vartheta} \tfrac{d}{d\vartheta})$ and similarly for $\tilde L_n$.
Thus for $n \ne d$ we have
\[
(g_*\varphi)(e_{-n})\tilde \Omega = g \varphi(e_{-n})\Omega = \tfrac{1}{n-d}g L_{-1}\varphi(e_{-n+1})\Omega = \tfrac{1}{n-d}\tilde L_{-1} (g_*\varphi)(e_{-n+1})\tilde \Omega.
\]
If $(g_*\varphi)(e_{-d})\tilde \Omega = 0$, we may apply the above relation repeatedly to $n=d+1, d+2, \ldots$ to conclude that $(g_*\varphi)(e_{-n}) \tilde \Omega = 0$ for all $n \in \bbZ_{\ge 0}$.
But then we would have $(g_*\varphi)(e_{-d'})\tilde \Omega = 0$, a contradiction.
We conclude that $d'=d$, which is to say that $\varphi$ and $g_*\varphi$ have the same conformal dimension provided $g_*\varphi \not \equiv 0$.

Next observe that $Y(gv,z) = (\widehat{g_*\varphi})(z)$, or equivalently $(gv)_{(n)} = (g_*\varphi)(e_{n-d+1})$.
We therefore have 
\[
g v_{(n)} = g \varphi(e_{n-d+1}) = (g_*\varphi)(e_{n-d+1})g = (gv)_{(n)}g.
\]
This means that $g$ intertwines the actions of modes of vectors $v$ corresponding to $\varphi \in \cF$, and since such vectors generate $\cV$ we can conclude that $g$ intertwines the actions of modes $v_{(n)}$ for all $v \in \cV$.
Moreover the identity $g \varphi(e_{-d})\Omega = (g_* \varphi)(e_{-d})\tilde \Omega$ implies that $gS \subset \tilde S$.
\end{proof}

\begin{lem}\label{lem: functoriality VA to Wightman}
Let $\cV$ and $\tilde \cV$ be M\"obius vertex algebras with generating sets $S$ and $\tilde S$, respectively.
Let $g:(\cV,S) \to (\tilde \cV, \tilde S)$ be a morphism in $\mathsf{MVA}^+$.
Let $(\cD,\cF,U,\Omega)$ and $(\tilde \cD, \cF,\tilde U, \tilde \Omega)$ be the M\"obius-covariant Wightman CFTs constructed in Theorem~\ref{thm: VA to Wightman}.
Then there is a unique morphism $(h,h_*): \cF \to \tilde \cF$ such that $h|_\cV = g$.
\end{lem}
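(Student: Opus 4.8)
The plan is to build $h$ and $h_*$ by the evident formulas and to reduce the one nontrivial point — well-definedness of $h$ — to the separation properties of the restricted duals $\cV'$ and $\tilde\cV'$.

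For $v\in S$, write $\tilde Y(g(v),\cdot)$ for the smeared field on $\tilde\cD$ attached to $g(v)\in\tilde\cV$ by Lemma~\ref{lem: exist Yvf} applied to $\tilde\cV$; since $g(S)\subset\tilde S$ this is a field of $\tilde\cF$. I would define $h:\cD\to\tilde\cD$ on the spanning vectors of $\cD$ by
\[
h\big(Y(v_1,f_1)\cdots Y(v_k,f_k)\Omega\big)=\tilde Y(g(v_1),f_1)\cdots\tilde Y(g(v_k),f_k)\tilde\Omega,\qquad v_j\in S,\ f_j\in C^\infty(S^1),
\]
and $h_*:\cF\to\tilde\cF$ by $h_*\big(Y(v,\cdot)|_\cD\big)=\tilde Y(g(v),\cdot)|_{\tilde\cD}$ for $v\in S$ (unambiguous, since $v$ is recovered from the distribution $Y(v,\cdot)|_\cD$ via \eqref{eqn: conf dim formula}). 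Before checking well-definedness I would record two facts about the morphism $g$ that get used throughout. First, $g$ intertwines $L_0$, hence $g(\cV(n))\subseteq\tilde\cV(n)$, hence $g^*w:=w\circ g$ lies in $\cV'$ for every $w\in\tilde\cV'$. Second, $g$ preserves conformal dimensions and intertwines the modes, so it intertwines the degree-shifted modes as well; therefore, when the $f_j$ are Laurent polynomials the vector $Y(v_1,f_1)\cdots Y(v_k,f_k)\Omega$ lies in $\cV$ and $g$ carries it precisely to $\tilde Y(g(v_1),f_1)\cdots\tilde Y(g(v_k),f_k)\tilde\Omega$ — both being the same finite linear combination of words in the modes, with $g$ applied factor by factor. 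In particular, the displayed formula for $h$ agrees with $g$ on $\cV$, since $\cV$ is spanned by such vectors with Laurent-polynomial arguments; so once $h$ is known to be well defined we automatically have $h|_\cV=g$.

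The main obstacle is the well-definedness of $h$: one must show that $\sum_i c_i\,Y(v_{i,1},f_{i,1})\cdots Y(v_{i,k_i},f_{i,k_i})\Omega=0$ in $\cD$ forces $\sum_i c_i\,\tilde Y(g(v_{i,1}),f_{i,1})\cdots\tilde Y(g(v_{i,k_i}),f_{i,k_i})\tilde\Omega=0$ in $\tilde\cD$. Fix $w\in\tilde\cV'$. For each $i$ the functions $(f_{i,1},\ldots,f_{i,k_i})\mapsto w\big(\tilde Y(g(v_{i,1}),f_{i,1})\cdots\tilde Y(g(v_{i,k_i}),f_{i,k_i})\tilde\Omega\big)$ and $(f_{i,1},\ldots,f_{i,k_i})\mapsto(g^*w)\big(Y(v_{i,1},f_{i,1})\cdots Y(v_{i,k_i},f_{i,k_i})\Omega\big)$ are jointly continuous — the former by Lemma~\ref{lem: exist Yvf} for $\tilde\cV$, the latter because $g^*w\in\cV'\subseteq\cD_\cF^*$ (using the inclusion established in the proof of Theorem~\ref{thm: VA to Wightman}) — and they coincide when all arguments are Laurent polynomials, by the factor-by-factor computation recalled above. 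Since Laurent polynomials are dense in $C^\infty(S^1)$, they coincide for arbitrary smooth test functions. Summing over $i$ and using the vanishing relation in $\cD$ gives $w\big(\sum_i c_i\,\tilde Y(g(v_{i,1}),f_{i,1})\cdots\tilde\Omega\big)=(g^*w)\big(\sum_i c_i\,Y(v_{i,1},f_{i,1})\cdots\Omega\big)=0$; since $\tilde\cV'$ separates the points of $\widehat{\tilde\cV}\supseteq\tilde\cD$, the required relation follows. Thus $h$ is a well-defined linear map; the same computation shows $w\circ h=g^*w$ on $\cD$ for every $w\in\tilde\cV'$, so that $h$ is continuous for the relevant topologies.

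The remaining checks I expect to be routine, carried out on the spanning vectors $\Phi=Y(v_1,f_1)\cdots Y(v_k,f_k)\Omega$. Taking $k=0$ gives $h(\Omega)=\tilde\Omega$. Applying $h$ to the covariance identity $U(\gamma)\Phi=Y(v_1,\beta_{d_1}(\gamma)f_1)\cdots Y(v_k,\beta_{d_k}(\gamma)f_k)\Omega$ of Lemma~\ref{lem: vertex fields are Mobius covariant}, and using that $g(v_j)$ has the same conformal dimension $d_j$ as $v_j$, gives $h\circ U(\gamma)=\tilde U(\gamma)\circ h$. For $v\in S$, comparing the definition of $h$ on $Y(v,f)\Phi$ with that on $\Phi$ yields $h\,\varphi(f)=(h_*\varphi)(f)\,h$ for $\varphi=Y(v,\cdot)|_\cD$, which in particular shows $h_*$ takes values in $\tilde\cF$; and $h|_\cV=g$ was noted above. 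Finally, for uniqueness let $(h',h'_*)$ be another morphism with $h'|_\cV=g$. The relation $h'\varphi(f)=(h'_*\varphi)(f)h'$ together with cyclicity of $\Omega$ show that $h'$ is determined by $h'_*$ and by $h'(\Omega)=\tilde\Omega$. For $v\in S$ with conformal dimension $d$ and $\varphi=Y(v,\cdot)|_\cD$, the distribution $h'_*\varphi\in\tilde\cF$ has the form $\tilde Y(w,\cdot)|_{\tilde\cD}$ with $w\in\tilde S$, and $(h'_*\varphi)(e_n)\tilde\Omega=h'(\varphi(e_n)\Omega)=g(v_n\Omega)=g(v)_n\tilde\Omega$ for all $n\in\bbZ$; matching the least $n$ with non-zero value forces $w$ and $g(v)$ to have equal conformal dimension, and then $w=w_{-d}\tilde\Omega=g(v)_{-d}\tilde\Omega=g(v)$. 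Hence $h'_*=h_*$, and therefore $h'=h$, completing the proof.
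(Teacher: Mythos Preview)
Your proof is correct and follows essentially the same approach as the paper: both arguments hinge on the adjoint $g^*$ mapping $\tilde\cV'$ into $\cV'$, the separation of points in $\tilde\cD$ by $\tilde\cV'$, and density of Laurent polynomials to pass from the vertex-algebraic identity on $\cV$ to all smooth test functions. The paper phrases the existence of $h$ as taking the closure of the graph of $g$ inside $\cD\times\tilde\cD$, whereas you define $h$ directly on spanning vectors and check well-definedness, but these are the same argument; your treatment of uniqueness (pinning down $h'_*\varphi$ via the creation modes) is more detailed than the paper's one-line appeal to continuity.
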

\begin{proof}
For $v \in S$, we write $\varphi_v := Y(v,\cdot)$ for the corresponding Wightman field in $\cF$, and similarly for $\tilde v \in \tilde S$ we write $\tilde \varphi_{\tilde v}:=\tilde Y(\tilde v, \cdot)$ for the Wightman field in $\tilde \cF$.
For $v \in S$, we define $h_* \varphi_v = \tilde \varphi_{gv} \in \tilde \cF$.
Since $g$ is a morphism of vertex algebras we have for all $\varphi_1, \ldots, \varphi_k \in \cF$ and all $f_1, \ldots, f_k \in \bbC[z^{\pm 1}]$
\[
g \varphi_1(f_1) \cdots \varphi_k(f_k)\Omega = (h_*\varphi_1)(f_1) \cdots (h_*\varphi_k)(f_k)\tilde \Omega.
\]
Since morphisms of Wightman CFTs are continuous for the $\cF$- and $\tilde \cF$-weak topologies, we can see from the above formula that a morphism $(h,h_*)$ as in the statement of the lemma is necessarily unique.

Since $g$ intertwines the actions of $L_0$ and $\tilde L_0$, the adjoint operator $g^*$ maps $\tilde \cV'$ into $\cV'$.
As $\cV' \subset \cD_\cF^*$ and $\tilde \cV' \subset \tilde \cD_{\tilde \cF}^*$ (by Lemma~\ref{lem: exist Yvf}), and $\tilde \cV'$ separates points in $\tilde \cD$, it follows that the closure of the graph of $g:\cV \to \tilde \cV$  in $\cD \times \tilde \cD$ is again the graph of a densely defined linear map $h:\cD \to \tilde \cD$.
If $f_1, \ldots, f_k \in C^\infty(S^1)$, we may approximate each $f_j$ by Laurent polynomials $f_{j,n}$ to obtain
\[
h \varphi_1(f_{1,n}) \cdots \varphi_k(f_{k,n})\Omega = (h_*\varphi_1)(f_{1,n}) \cdots (h_*\varphi_k)(f_{k,n})\Omega \to (h_*\varphi_1)(f_{1}) \cdots (h_*\varphi_k)(f_{k})\Omega.
\]
Hence $h$ is defined on all of $\cD$ and $h\varphi(f) = (h_*\varphi)(f)h$ for all $f \in C^\infty(S^1)$ and $\varphi \in \cF$.
It follows immediately that $h$ also intertwines the representations $U$ and $\tilde U$, and we have shown that $(h,h_*)$ is a morphism $\cF \to \tilde \cF$.
\end{proof}

Lemmas~\ref{lem: functoriality Wightman to VA} and \ref{lem: functoriality VA to Wightman} upgrade the constructions of Theorem~\ref{thm: Wightman to VA} and \ref{thm: VA to Wightman} to a pair of functors $F:\mathsf{MW} \to \mathsf{MVA}^+$ and $G:\mathsf{MVA}^+ \to \mathsf{MW}$.
In showing that these are an equivalence of categories, it will be helpful to note that if $\cF$ is a Wightman CFT with domain $\cD$, then the vertex algebra $\cV:=F(\cF)$ is a subspace $\cV \subset \cD$.
Conversely, if $\cV \in \mathsf{MVA}^+$ and $\cD$ is the domain of the Wightman CFT $G(\cV)$, then $\cV \subset \cD$.

\begin{lem}\label{lem: equivalence isos}
We have the following.
\begin{enumerate}[i)]
\item Let $(\cF,\cD,U,\Omega)$ be a M\"obius-covariant Wightman CFT, let $\cV = F(\cF)$ with $\cV \subset \cD$.
Let $(\tilde \cF, \tilde \cD, \tilde U, \tilde \Omega) = G(\cV)$ with $\cV \subset \tilde \cD$.
Then there is a unique isomorphism $(g,g_*): \cF \to \tilde \cF$ such that $g|_\cV = \mathrm{id}$.
\item Let $\cV$ be a M\"obius vertex algebra, let $(\cF,\cD,U,\Omega) = G(\cV)$ be the corresponding M\"obius-covariant Wightman CFT with $\cV \subset \cD$.
Let $\tilde \cV \subset \cD$ be the M\"obius vertex algebra $F(\cF)$.
Then $\tilde \cV = \cV$ as M\"obius vertex algebras.
\end{enumerate}
\end{lem}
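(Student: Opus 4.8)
The plan is to prove the two statements by recognizing the composite constructions as maps that have already been built. Part~(ii) is essentially formal: a M\"obius vertex algebra is determined by its $\Lie(\Mob)_\bbC$-representation, its vacuum, and the vertex operators attached to a generating set (via the Borcherds product formula \eqref{eqn: BPF}), so it suffices to check that $F \circ G$ returns these data unchanged. The content of part~(i) is supplied by the canonical embedding $\iota:\cD \to \widehat\cV$ of Proposition~\ref{prop: canonical embedding of D}: I will show that the obvious candidate for $g$ \emph{is} this $\iota$.

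\emph{Part (i).} Write $v_\varphi \in \cV$ for the quasiprimary generator attached to $\varphi \in \cF$ in Theorem~\ref{thm: Wightman to VA}, so that $S = \{v_\varphi : \varphi \in \cF\}$ is the generating set of $\cV = F(\cF)$, and $\tilde\cD$ is the span inside $\widehat\cV = \prod_n \cV(n)$ of the vectors $Y(v_{\varphi_1},f_1)\cdots Y(v_{\varphi_k},f_k)\Omega$. Define $g$ on the spanning set of $\cD$ by $g\big(\varphi_1(f_1)\cdots\varphi_k(f_k)\Omega\big) = Y(v_{\varphi_1},f_1)\cdots Y(v_{\varphi_k},f_k)\Omega$. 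To see this is well defined and injective, I would show it agrees with $\iota$, i.e.\ that $\iota\big(\varphi_1(f_1)\cdots\varphi_k(f_k)\Omega\big) = Y(v_{\varphi_1},f_1)\cdots Y(v_{\varphi_k},f_k)\Omega$ in $\widehat\cV$. Both sides are jointly continuous in the $f_j$ --- the left because $\iota$ is continuous and products of smeared Wightman fields are jointly continuous, the right by Lemma~\ref{lem: exist Yvf} --- and when the $f_j$ are Laurent polynomials both sides lie in $\cV$ and are literally equal, since $\varphi(e_n)$ acts on $\cV$ as the degree-shifted mode of $v_\varphi$ by the construction in Theorem~\ref{thm: Wightman to VA}. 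As Laurent polynomials are dense in $C^\infty(S^1)$, the two continuous maps coincide. Hence $\iota(\cD) = \tilde\cD$ and $g = \iota$, viewed as a map $\cD \to \tilde\cD$, is a linear bijection restricting to the identity on $\cV$.

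\emph{Part (i), continued.} Putting $g_*\varphi := Y(v_\varphi,\cdot)|_{\tilde\cD} \in \tilde\cF$, the relation $g\,\varphi(f) = (g_*\varphi)(f)\,g$ and $g(\Omega) = \tilde\Omega$ are immediate from the formula for $g$ on the spanning set. For the $\Mob$-action, $\varphi$ and $v_\varphi$ have the same conformal dimension by \eqref{eqn: conf dim formula} (and by construction of $G$), so applying $g$ to the covariance identity $U(\gamma)\varphi_1(f_1)\cdots\varphi_k(f_k)\Omega = \varphi_1(\beta_{d_1}(\gamma)f_1)\cdots\varphi_k(\beta_{d_k}(\gamma)f_k)\Omega$ and comparing with the relation characterizing $\tilde U$ (Lemma~\ref{lem: vertex fields are Mobius covariant}) yields $gU(\gamma) = \tilde U(\gamma)g$. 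Thus $(g,g_*)$ is a morphism; since $g$ is bijective and $g_*\varphi = g\,\varphi(\cdot)\,g^{-1}$, the pair $(g^{-1},g_*^{-1})$ is also a morphism, so $(g,g_*)$ is an isomorphism. Uniqueness: any isomorphism $(g',g_*')$ with $g'|_\cV = \mathrm{id}$ must agree with $g$ on the spanning set of $\cD$ by the intertwining relation, while $g_*'\varphi$ is pinned down by evaluating $(g_*'\varphi)(e_{-d})\tilde\Omega = g'\varphi(e_{-d})\Omega = v_\varphi$ together with \eqref{eqn: conf dim formula}.

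\emph{Part (ii), and the main obstacle.} Here $(\cF,\cD,U,\Omega) = G(\cV)$ is built from $\cV$ with its chosen generating set $S$: $\cD = \cD_S \subset \widehat\cV$ with fields $\varphi_v := Y(v,\cdot)|_{\cD_S}$ for $v \in S$. Then $\tilde\cV = F(\cF) \subset \cD_S$ is the span of the vectors $\varphi_{v^1}(e_{j_1})\cdots\varphi_{v^k}(e_{j_k})\Omega$ with $v^i \in S$; on $\cV$ the operator $\varphi_v(e_j)$ is the degree-shifted mode of $v$, so $\tilde\cV$ is the span of all iterated modes of elements of $S$ applied to $\Omega$, which equals $\cV$ because $S$ generates $\cV$ as a vertex algebra. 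So $\tilde\cV = \cV$ as subspaces of $\cD_S$, with the same $\Omega$ and generating set $S$; to match the rest of the structure, note the $\Lie(\Mob)_\bbC$-representation on $\tilde\cV$ from Lemma~\ref{lem: Lie Mob covariance} is obtained by differentiating $U$, which by the construction of $U$ in Lemma~\ref{lem: vertex fields are Mobius covariant} returns the operators $\pi(g\tfrac{d}{d\vartheta})$ of Lemma~\ref{lem: LieMob rep extends to D}, restricting on $\cV$ to the original $L_{-1},L_0,L_1$; and for $v \in S$ one has $\hat\varphi_v(z)|_\cV = Y_\cV(v,z)$, hence $v_{\varphi_v} = \lim_{z\to 0}\hat\varphi_v(z)\Omega = v$ and $Y_{\tilde\cV}(v,z) = \hat\varphi_v(z) = Y_\cV(v,z)$. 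The two M\"obius vertex algebra structures therefore agree on a generating set, so they are equal. The only genuinely non-formal point in the whole argument is the identification $\iota\big(\varphi_1(f_1)\cdots\varphi_k(f_k)\Omega\big) = Y(v_{\varphi_1},f_1)\cdots Y(v_{\varphi_k},f_k)\Omega$ in part~(i), and even that reduces to the agreement of two continuous $\widehat\cV$-valued maps on the dense subspace of Laurent-polynomial test functions, where the constructions literally coincide.
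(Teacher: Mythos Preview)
Your proof is correct and follows essentially the same approach as the paper: the key move in part~(i) is to identify the isomorphism $g$ with the canonical embedding $\iota$ of Proposition~\ref{prop: canonical embedding of D} via the density-and-continuity argument on Laurent polynomials, and part~(ii) is recognized as immediate from the construction. You supply considerably more detail than the paper (which dispatches part~(ii) in a single sentence and leaves the verification that $(g,g_*)$ is a morphism implicit), and your uniqueness argument via the intertwining relation on the spanning set is a minor variant of the paper's appeal to $\cF$-strong continuity plus density of $\cV$, but the substance is the same.
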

\begin{proof}
We first consider (i).
Uniqueness of such an isomorphism follows from the fact that an isomorphism $g:\cD \to \tilde \cD$ is $\cF$-strong continuous and $\cV \subset \cD$ is $\cF$-strong dense.
We now consider existence.
By construction there is a canonical bijection $\cF \to \tilde \cF$ which we denote by $\varphi \mapsto \tilde \varphi$.
We must verify that there exists a corresponding bijection $\cD \to \tilde \cD$.
We have $\cV \subset \cD$ and $\cV \subset \tilde \cD$.
By construction we have $\tilde \cD \subset \widehat \cV$, and Proposition~\ref{prop: canonical embedding of D} provides a map $\iota:\cD \hookrightarrow \widehat \cV$.
We have $\iota \varphi_1(f_1) \cdots \varphi_k(f_k)\Omega = \tilde \varphi_1(f_1) \cdots \tilde \varphi_k(f_k)\Omega$ when $f_j \in \bbC[z^{\pm 1}]$, and since both sides are continuous in the functions $f_j$ this extends to all $f_j \in C^\infty(S^1)$.
We conclude that $\iota$ maps $\cD$ into $\tilde \cD$ and furnishes the necessary bijection.
Part (ii) is immediate from the construction.
\end{proof}

The isomorphisms from Lemma~\ref{lem: equivalence isos} are natural in $\cF$ and $\cV$ respectively, and thus we have proven the following.

\begin{thm}\label{thm: equiv of categories}
Let $\mathsf{MW}$ be the category of (not-necessarily-unitary) M\"obius-covariant Wightman CFTs and let $\mathsf{MVA}^+$ be the category of M\"obius vertex algebras equipped with a generating family of quasiprimary vectors.
Let $F:\mathsf{MW} \to \mathsf{MVA}^+$ be the functor constructed on objects in Theorem~\ref{thm: Wightman to VA} and on morphisms in Lemma~\ref{lem: functoriality Wightman to VA}.
Let $G:\mathsf{MVA}^+ \to \mathsf{MW}$ be the functor constructed on objects in Theorem~\ref{thm: VA to Wightman} and on morphisms in Lemma~\ref{lem: functoriality VA to Wightman}.
Then $F$ and $G$, along with the isomorphisms of Lemma~\ref{lem: equivalence isos}, are an equivalence of categories between $\mathsf{MW}$ and $\mathsf{MVA}^+$.
\end{thm}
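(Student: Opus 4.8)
The plan is to show that $F$ and $G$ are quasi-inverse by assembling the object-level isomorphisms of Lemma~\ref{lem: equivalence isos} into natural isomorphisms $\eta\colon FG \Rightarrow \mathrm{id}_{\mathsf{MVA}^+}$ and $\varepsilon\colon \mathrm{id}_{\mathsf{MW}} \Rightarrow GF$. That $F$ and $G$ are functors is already furnished by Lemmas~\ref{lem: functoriality Wightman to VA} and \ref{lem: functoriality VA to Wightman}, so only the naturality remains to be verified.

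The transformation $\eta$ is immediate: Lemma~\ref{lem: equivalence isos}(ii) gives $FG(\cV) = \cV$ \emph{on the nose} as M\"obius vertex algebras, and chasing the constructions shows the generating set is recovered as well (for $v \in S$ the corresponding Wightman field of $G(\cV)$ is $\varphi = Y(v,\,\cdot\,)$, whose associated quasiprimary vector in $FG(\cV)$ is $\lim_{z\to 0}\widehat\varphi(z)\Omega = v$). Tracing Lemmas~\ref{lem: functoriality VA to Wightman} and \ref{lem: functoriality Wightman to VA} through the composite shows that $FG$ is also the identity on morphisms, so one may simply take $\eta = \mathrm{id}$, and naturality is automatic.

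For $\varepsilon$, Lemma~\ref{lem: equivalence isos}(i) supplies for each $(\cF,\cD,U,\Omega)$ a distinguished isomorphism $\varepsilon_\cF\colon \cF \to GF(\cF)$ restricting to the identity on the finite-energy subspace $\cV = F(\cF) \subset \cD$. To check naturality, let $(\psi,\psi_*)\colon \cF_1 \to \cF_2$ be a morphism in $\mathsf{MW}$, put $\cV_i = F(\cF_i)$, and compare $GF(\psi)\circ\varepsilon_{\cF_1}$ with $\varepsilon_{\cF_2}\circ\psi$ as morphisms $\cF_1 \to GF(\cF_2)$. Their underlying linear maps are $\cF_1$-strongly continuous, and $\cV_1$ is $\cF_1$-strongly dense in $\cD_1$ by Lemma~\ref{lem:V(n) dense}, so it suffices to verify equality on $\cV_1$. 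There $\varepsilon_{\cF_1}$ is the identity and $GF(\psi)$ restricts to $F(\psi) = \psi|_{\cV_1}$; while $\psi$ carries $\cV_1$ into $\cV_2$ by Lemma~\ref{lem: functoriality Wightman to VA}, on which $\varepsilon_{\cF_2}$ is the identity. Hence both composites equal $\psi|_{\cV_1}$ on $\cV_1$, the $\psi_*$-components agreeing automatically since they are determined by the underlying maps. This gives the naturality square, and therefore $F$, $G$ together with $\eta$, $\varepsilon$ constitute an equivalence of categories.

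I expect the only real subtlety to be the precise threading of the morphism-level assertions of Lemmas~\ref{lem: functoriality Wightman to VA} and \ref{lem: functoriality VA to Wightman} through the composites $GF$ and $FG$ — namely that $F$ sends a Wightman morphism to its restriction to the finite-energy subspace, that $G$ sends a vertex-algebra morphism to its unique Wightman extension, and that these two facts combine to give $GF(\psi)|_{\cV_1} = \psi|_{\cV_1}$ and $FG(g) = g$. Once these identifications are made explicit, density of $\cV \subset \cD$ in the $\cF$-strong topology together with continuity of morphisms does all of the remaining work.
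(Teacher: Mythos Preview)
Your proposal is correct and follows essentially the same approach as the paper, which simply asserts that the isomorphisms of Lemma~\ref{lem: equivalence isos} are natural in $\cF$ and $\cV$ and states the theorem without further proof. Your version spells out the naturality verification (density of $\cV$ plus $\cF$-strong continuity of morphisms for $\varepsilon$, and $FG=\mathrm{id}$ on the nose for $\eta$) that the paper leaves implicit.
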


We close the section with a brief discussion of our results in the context of theories with full conformal symmetry rather than M\"obius symmetry.
The following modification is reasonably straightforward.
We could consider Wightman CFTs on $S^1$ with an energy-momentum tensor, i.e. the family $\cF$ of quasiprimary fields contains a preferred field $T(z)$ such that $L_n := T(e_n)$  satisfy the Virasoro algebra relation with some central charge $c \in \mathbb{C}$ and $L_{-1}, L_0, L_1$ agree with the generators of $\Mob$\footnote{Note also that if we assume the CFT-type condition $ker L_0 = \mathbb{C}\Omega$ then it is enough to assume that $T(e_n)$, $n=-1,0,1$, are the generators of the $\Mob$ symmetry. Then, the Virasoro algebra relations follow by \cite[Thm. 4.10]{Kac98}.}.
Our results give rise to a correspondence between ($\bbN$-graded) conformal vertex algebras and Wightman CFTs with energy-momentum tensor. 

On the other hand, it would also be desirable to have a correspondence between (i) conformal vertex algebras equipped with a generating family of primary vectors, and (ii) diffeomorphism covariant Wightman theories on $S^1$.
Since the action of diffeomorphisms of $S^1$ do not generally fix the vacuum vector, the domain of the Wightman theory would not be spanned by expressions $\varphi_1(f_1) \cdots \varphi_k(f_k)\Omega$, but instead by ones of the form $\varphi_1(f_1) \cdots \varphi_k(f_k)U(\gamma)\Omega$ with $\gamma \in \Diff(S^1)$.
This raises several points to address, including the question of exponentiating representations of the Virasoro algebra to an action of $\Diff(S^1)$ (on some completion of the original space) in a non-unitary setting.
These are non-trivial technical challenges that we do not consider in this article. 
However, we do not see any obstructions to completing such a generalization in the unitary setting, where the theory of exponentiating positive energy representations of the Virasoro algebra is well-developed.

\section{Invariant forms and unitary theories}\label{sec: invariant forms}

In this section we show that the correspondence between Wightman CFTs on $S^1$ and M\"obius vertex algebras constructed in Section~\ref{sec: MVA and WCFT}  is compatible with invariant bilinear forms.
The definition of an invariant bilinear form for a M\"obius vertex algebra is standard (see \cite[\S5.2]{FHL93} and \cite{Li94}).

\begin{defn}
An invariant bilinear form $( \cdot , \cdot)$ on a M\"obius vertex algebra $\cV$ is a bilinear form such that 
\begin{equation}\label{eqn: invariant bilinear}
(Y(v,z)u_1,u_2) = (u_1, Y(\rme^{zL_1}(-z^{-2})^{L_0}v,z^{-1})u_2)
\end{equation}
and 
\begin{equation}\label{eqn: invariant mob}
(L_nu_1, u_2) = (u_1, L_{-n}u_2)
\end{equation}
for all $v,u_1,u_2 \in \cV$.
\end{defn}
It can be convenient to introduce notation for the opposite vertex operator
\begin{equation}\label{eqn: opposite vertex operator}
Y^\opp(v,z) = Y(\rme^{zL_1}(-z^{-2})^{L_0}v,z^{-1}),
\end{equation}
and in this notation the invariance condition becomes 
\[
(Y(v,z)u_1,u_2) = (u_1, Y^\opp(v,z)u_2).
\]

The map $L_n \mapsto -L_{-n}$ extends linearly to a Lie algebra automorphism of $\Lie(\Mob)_{\bbC}$ which leaves $\Lie(\Mob)$ invariant.
Let $d\alpha:\Lie(\Mob) \to \Lie(\Mob)$ be this restriction.
In this notation, the compatibility condition \eqref{eqn: invariant mob} between the invariant bilinear form and the representation $\pi$ of $\Lie(\Mob)$ on $\cV$ becomes
\[
(\pi(f \tfrac{d}{d\vartheta})u_1,u_2) = -(u_1,\pi(d\alpha(f\tfrac{d}{d\vartheta}))u_2).
\]
In order to formulate the correct notion of invariant bilinear form for a Wightman CFT, we must integrate $d\alpha$ to an automorphism $\alpha$ of $\Mob$.
It is straightforward to check that $\alpha$ is given by
\[
(\alpha \gamma)(z) = 1/\gamma(\tfrac{1}{z}).
\]
Indeed, at the level of matrices $\alpha$ is given on $\begin{pmatrix} a & b\\ \overline{b} & \overline{a}\end{pmatrix} \in \mathrm{SU}(1,1)$ (with $\abs{a}^2 - \abs{b}^2 = 1$) by complex conjugation
\[
\alpha \begin{pmatrix} a & b\\ \overline{b} & \overline{a}\end{pmatrix} = \begin{pmatrix} \overline{a} & \overline{b}\\ b & a\end{pmatrix}
\]
and $d\alpha$ is given on $\begin{pmatrix} \rmi c & d\\\overline{d} & -\rmi c \end{pmatrix} \in \mathfrak{su}(1,1)$ (with $c \in \bbR$) by complex conjugation as well
\[
d\alpha \begin{pmatrix} \rmi c & d\\\overline{d} & -\rmi c \end{pmatrix} = \begin{pmatrix} -\rmi c & \overline{d}\\ d & \rmi c \end{pmatrix}.
\]
In particular we have
\begin{equation}\label{eqn: exp of lie algebra alpha}
\exp(d\alpha(f\tfrac{d}{d\vartheta})) = \alpha(\exp(f \tfrac{d}{d\vartheta})).
\end{equation}

We thus have the following notion of invariant bilinear form for a Wightman CFT.

\begin{defn}
Let $(\cF,\cD,U,\Omega)$ be a M\"obius-covariant Wightman CFT on $S^1$.
A jointly $\cF$-strong continuous bilinear form $(\, \cdot \, , \, \cdot \,)$ on $\cD$ is called an invariant bilinear form if
\begin{equation}\label{eqn: invariant Wightman form}
( \varphi(f)\Phi, \Psi) = (\Phi, (-1)^{d_\varphi} \varphi(f\circ \tfrac{1}{z})\Psi)
\end{equation}
for all $\varphi \in \cF$ (with conformal dimension $d_\varphi$), all $f \in C^\infty(S^1)$, and all $\Phi,\Psi \in \cD$, and moreover
\begin{equation}\label{eqn: invariant Wightman mob}
(U(\gamma)\Phi,U(\alpha(\gamma))\Psi) = (\Phi,\Psi)
\end{equation}
for all $\gamma \in \Mob$ and $\Phi,\Psi \in \cD$.
\end{defn}
As in the context of vertex algebras, we can introduce the notion of opposite field
\[
\varphi^\opp(f) := (-1)^{d_\varphi} \varphi(f \circ \tfrac{1}{z})
\]
and the invariance condition \eqref{eqn: invariant Wightman form} then becomes
\[
(\varphi(f)\Phi,\Psi) = (\Phi, \varphi^\opp(f)\Psi).
\]

\begin{thm}[Correspondence between invariant bilinear forms]\label{thm: equivalence of bilinear forms}
Let $(\cF,\cD,U,\Omega)$ be a M\"obius-covariant Wightman CFT on $S^1$ and let $\cV \subset \cD$ be the corresponding M\"obius vertex algebra.
Then 
\begin{enumerate}[i)]
\item Every invariant bilinear form for the Wightman CFT $\cD$ restricts to an invariant bilinear form for the vertex algebra $\cV$.
\item Every invariant bilinear form for the vertex algebra $\cV$ extends uniquely to an invariant bilinear form for the Wightman CFT on $\cD$.
\end{enumerate}
If an invariant form on $\cV$ is nondegenerate, then so is the extension to $\cD$.
Conversely, if an invariant form on $\cD$ is nondegenerate then so is the restriction to $\cV$.
\end{thm}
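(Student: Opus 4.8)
The plan is to reduce everything to the linear functional attached to an invariant form together with the opposite‑field formalism, exploiting the canonical embedding $\iota\colon\cD\hookrightarrow\widehat\cV=\prod_{n=0}^\infty\cV(n)$ of Proposition~\ref{prop: canonical embedding of D} (which is the identity on $\cV$) and the $\cF$-strong density of $\cV$ in $\cD$. For $\varphi\in\cF$ of conformal dimension $d_\varphi$ write $\varphi^\opp(f)=(-1)^{d_\varphi}\varphi(f\circ\tfrac1z)$, so that \eqref{eqn: invariant Wightman form} reads $(\varphi(f)\Phi,\Psi)=(\Phi,\varphi^\opp(f)\Psi)$; here $f\mapsto f\circ\tfrac1z$ is a continuous involution of $C^\infty(S^1)$, $\varphi^\opp(e_n)=(-1)^{d_\varphi}\varphi(e_{-n})$, and under the dictionary of Theorem~\ref{thm: Wightman to VA} this matches the opposite vertex operator $Y^\opp(v_\varphi,z)=(-1)^{d_\varphi}z^{-2d_\varphi}Y(v_\varphi,z^{-1})$ (using quasiprimarity of $v_\varphi$). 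A short computation from the formula $\alpha(\gamma)(z)=1/\gamma(1/z)$ gives $\beta_d(\gamma)(f\circ\tfrac1z)=(\beta_d(\alpha\gamma)f)\circ\tfrac1z$, hence $\varphi^\opp$ is again M\"obius covariant with $U(\gamma)\varphi^\opp(f)U(\gamma)^{-1}=\varphi^\opp(\beta_{d_\varphi}(\alpha\gamma)f)$.

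For part (i), given an invariant form on $\cD$ I would first obtain \eqref{eqn: invariant mob} for its restriction to $\cV$ by differentiating \eqref{eqn: invariant Wightman mob} along a one-parameter group $\gamma_t=\exp(tX)$, using joint $\cF$-strong continuity of the form, the differentiability of $t\mapsto U(\gamma_t)\Phi$, and \eqref{eqn: exp of lie algebra alpha}, to get $(\pi(X)\Phi,\Psi)+(\Phi,\pi(d\alpha X)\Psi)=0$ and thus $(L_nu_1,u_2)=(u_1,L_{-n}u_2)$. Next, \eqref{eqn: invariant bilinear} holds for the generators $v=v_\varphi\in S$ simply by comparing Fourier coefficients in \eqref{eqn: invariant Wightman form}. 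Finally, \eqref{eqn: invariant mob} together with invariance on $S$ determines the restricted form: sliding modes of generators out of the right slot into the left expresses $(a,\,v^1_{m_1}\cdots v^k_{m_k}\Omega)$ as $(-1)^{d_1+\cdots+d_k}\lambda_0\big(v^k_{-m_k}\cdots v^1_{-m_1}a_n\big)$, where $\lambda_0:=(\cdot,\Omega)_\cV|_{\cV(0)}$ and $a_n$ is the appropriate homogeneous component; since such vectors span $\cV$ and $\lambda_0$ satisfies $\lambda_0(L_1\cV(1))=0$ (again by \eqref{eqn: invariant mob}), the standard theory of invariant bilinear forms for vertex algebras --- the M\"obius analogue of \cite{Li94}, cf.\ \cite[\S5.2]{FHL93} --- produces a genuinely invariant form with the same $\lambda_0$, which must therefore coincide with the restricted form, so the latter is invariant.

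For part (ii), starting from an invariant form $(\cdot,\cdot)_\cV$ on $\cV$ set $\lambda_0:=(\cdot,\Omega)_\cV$, which is supported on $\cV(0)$ by \eqref{eqn: invariant mob}, and let $\nu\in\cD_\cF^*$ be its unique $\cF$-strongly continuous extension vanishing on $\cV(m)$ for $m\ne0$ (the Remark after Lemma~\ref{lem: pos eigenvalues dense implies spectrum condition}; concretely $\nu(\Phi)=\lambda_0((\iota\Phi)_0)$). Since $\pi(X)$ is $\cF$-strongly continuous (explicit formula in Lemma~\ref{lem: Lie Mob covariance}) and $\lambda_0\circ\pi(X)|_\cV=(\pi(X)\cdot,\Omega)_\cV=-(\cdot,\pi(d\alpha X)\Omega)_\cV=0$, density gives $\nu\circ\pi(X)=0$ on $\cD$ and hence $\nu\circ U(\gamma)=\nu$ for all $\gamma\in\Mob$. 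I would then define
\[
(\Phi,\Psi)_\cD:=\nu\big(\psi_l^\opp(g_l)\cdots\psi_1^\opp(g_1)\Phi\big)\qquad\text{for }\Psi=\psi_1(g_1)\cdots\psi_l(g_l)\Omega,
\]
where as a function of $\Phi$ the right side is manifestly a well-defined element of $\cD_\cF^*$. The crux --- the step I expect to require the most care --- is independence of the chosen presentation of $\Psi$: that $\sum_\alpha\psi_1^\alpha(g_1^\alpha)\cdots\psi_{l_\alpha}^\alpha(g_{l_\alpha}^\alpha)\Omega=0$ forces $\sum_\alpha\nu\big((\psi^\alpha_{l_\alpha})^\opp(g_{l_\alpha}^\alpha)\cdots(\psi^\alpha_1)^\opp(g_1^\alpha)\Phi\big)=0$ for all $\Phi$. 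As this functional of $\Phi$ is $\cF$-strongly continuous, it is enough to check vanishing on the dense subspace $\cV$; for $\Phi\in\cV$ and Laurent polynomial smearings the opposite fields preserve $\cV$, $\nu|_\cV=(\cdot,\Omega)_\cV$, and the invariance axioms of $\cV$ collapse $\nu\big(\psi_l^\opp(g_l)\cdots\psi_1^\opp(g_1)\Phi\big)$ to $(\Phi,\psi_1(g_1)\cdots\psi_l(g_l)\Omega)_\cV$, so the sum in question becomes $(\Phi,\cdot)_\cV$ evaluated at $0$. For general smearings I would approximate each $g_j^\alpha$ by Laurent polynomials $g_{j,n}^\alpha$: joint continuity of products of Wightman fields makes $\sum_\alpha\psi_1^\alpha(g_{1,n}^\alpha)\cdots\Omega\to0$ in $\cD$ and makes the $\nu$-expression converge to the desired one, while for the approximants it equals $(\Phi,\cdot)_\cV$ evaluated at $\sum_\alpha\psi_1^\alpha(g_{1,n}^\alpha)\cdots\Omega\in\cV$, which tends to $0$ precisely because $(\Phi,\cdot)_\cV\in\cV'\subseteq\cD_\cF^*$ is $\cF$-strongly continuous. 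With well-definedness established, bilinearity and joint $\cF$-strong continuity are read off the manifestly continuous formula $(\varphi_1(f_1)\cdots\varphi_k(f_k)\Omega,\psi_1(g_1)\cdots\psi_l(g_l)\Omega)_\cD=\nu\big(\psi_l^\opp(g_l)\cdots\psi_1^\opp(g_1)\varphi_1(f_1)\cdots\varphi_k(f_k)\Omega\big)$; the same Laurent-polynomial computation shows the form restricts to $(\cdot,\cdot)_\cV$ on $\cV$, and uniqueness of the extension follows since two jointly $\cF$-strongly continuous forms agreeing on the dense $\cV\times\cV$ agree on $\cD\times\cD$. Invariance of $(\cdot,\cdot)_\cD$ is then formal: \eqref{eqn: invariant Wightman form} from $\varphi^\opp(f\circ\tfrac1z)=(-1)^{d_\varphi}\varphi(f)$, and \eqref{eqn: invariant Wightman mob} from $\psi^\opp(\beta_d(\alpha\gamma)g)\,U(\gamma)=U(\gamma)\,\psi^\opp(g)$ combined with $\nu\circ U(\gamma)=\nu$.

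For the nondegeneracy addendum: if $(\cdot,\cdot)_\cD$ is nondegenerate and $v\in\cV$ lies in the left (resp.\ right) radical of the restricted form, then $(v,\cdot)_\cD$ (resp.\ $(\cdot,v)_\cD$) is $\cF$-strongly continuous and vanishes on the dense $\cV$, hence on $\cD$, so $v=0$. Conversely, if $(\cdot,\cdot)_\cV$ is nondegenerate and $\Phi\in\cD$ lies in a radical of $(\cdot,\cdot)_\cD$, then $(\Phi,v)_\cD=0$ for every $v\in\cV$; since $(\Phi,v)_\cD=\sum_n((\iota\Phi)_n,v_n)_\cV$ (both sides being $\cF$-strongly continuous in $\Phi$ and agreeing on $\cV$), and since a nondegenerate invariant form on $\cV$ is nondegenerate on each $\cV(n)$ (a vector pairing to $0$ with $\cV(n)$ pairs to $0$ with all of $\cV$ by $L_0$-homogeneity), we conclude $(\iota\Phi)_n=0$ for all $n$, i.e.\ $\Phi=0$.
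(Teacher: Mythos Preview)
Your proof is correct and follows the same architecture as the paper's (opposite fields, the embedding $\cD\hookrightarrow\widehat\cV$, density of $\cV$, and reduction to the vacuum pairing), with two differences in execution worth noting.

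For part~(i), the paper does not pass through $\lambda_0$ and a Li-type reconstruction. Instead it proves directly (Lemma~\ref{lem: invariant form for generators}) that a bilinear form satisfying the $L_n$-condition and the invariance identity on a generating set $S$ is automatically invariant: the form induces a map $f\colon\cV\to\cV'$ intertwining the modes $v_{(n)}$ and $v'_{(n)}$ for $v\in S$, and the Borcherds product formula propagates this to all $v\in\cV$. This is more self-contained than your route, which appeals to the existence of an invariant form attached to $\lambda_0\in(\cV(0)/L_1\cV(1))^*$; while this M\"obius analogue of \cite{Li94} is true even with infinite-dimensional weight spaces, it is not standard in that generality, and the paper's lemma sidesteps the need for it.

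For part~(ii), your argument and the paper's are essentially the same: both hinge on the identity that lets one move opposite fields across the pairing, which the paper records as \eqref{eqn: left and right definitions of D form}. The paper phrases well-definedness via the pairing $\widehat\cV\times\cV\to\bbC$ and checks the two arguments separately using that identity; you phrase it via the single functional $\nu$ and an approximation in the $g_j$'s, which amounts to the same thing. Your verification of \eqref{eqn: invariant Wightman mob} via $\nu\circ U(\gamma)=\nu$ together with the covariance $U(\gamma)\varphi^\opp(f)U(\gamma)^{-1}=\varphi^\opp(\beta_{d_\varphi}(\alpha\gamma)f)$ is in fact slightly slicker than the paper's approach, which differentiates $t\mapsto(U(\gamma_t)\Phi,U(\alpha\gamma_t)\Psi)_\cD$ using joint $\cF$-strong continuity of the form and $\cF$-strong continuity of $\pi(X)$.
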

\begin{proof}
First suppose that $\cD$ is equipped with an invariant bilinear form $( \, \cdot \, , \, \cdot \,)$.
Let $X \in \Lie(\Mob)$, let $\gamma_t=\exp(tX) \in \Mob$, and let 
\[
\rho_t = \alpha(\exp(t X))=\exp(t d\alpha(X)).
\]
For $u_1,u_2 \in \cV$ we have
\[
(U(\gamma_t)u_1,u_2) = (u_1,U(\rho_{-t})u_2).
\]
Differentiating and evaluating at $t=0$ (as in the proof of Lemma~\ref{lem: Lie Mob covariance}) yields
\[
(\pi(X)u_1,u_2) = -(u_1,\pi(d\alpha(X))u_2),
\]
as required.
Now let $S \subset \cV$ be the set of quasiprimary generators corresponding to $\cF$.
For $v \in S$ we have
\[
(Y(v,f)u_1,u_2) = (-1)^{d_v} (u_1, Y(v,f \circ \tfrac{1}{z})u_2)
\]
and in particular at the level of modes $v_n \in \mathrm{End}(\cV)$ we have
\[
(v_n u_1, u_2) = (-1)^{d_v} (u_1,v_{-n} u_2).
\]
Hence for $v \in S$ we have
\[
(Y(v,z)u_1,u_2) = (u_1, Y^\opp(v,z)u_2).
\]
This extends to all $v \in \cV$ by Lemma~\ref{lem: invariant form for generators} below, and we have established (i).

Now conversely suppose that $\cV$ is equipped with an invariant bilinear form which we denote $(\, \cdot \, , \, \cdot \,)_{\cV}$.
Note that a $\cF$-strongly continuous extension of such a form on $\cV$ to a bilinear form on $\cD$ is unique, and so we must only show existence.
Recall from Proposition~\ref{prop: canonical embedding of D} that $\cD$ comes naturally embedded in $\widehat \cV = \prod_{n=0}^\infty \cV(n)$.
The bilinear form on $\cV$ naturally extends to a pairing of $\cV$ and $\widehat \cV$.
First, we claim that for $\varphi_1, \ldots, \varphi_k$, $\psi_1, \ldots, \psi_\ell \in \cF$ and $f_1,\ldots, f_k, g_1, \ldots, g_\ell \in C^\infty(S^1)$ we have
\begin{equation}\label{eqn: left and right definitions of D form}
(\psi^\opp(g_\ell) \cdots \psi^\opp(g_1)\varphi_1(f_1) \cdots \varphi_k(f_k)\Omega,\Omega)_{\widehat \cV, \cV} = 
(\Omega,\varphi^\opp(f_k) \cdots \varphi^\opp(f_1)\psi(g_1) \cdots \psi(g_\ell)\Omega)_{\cV,\widehat \cV}.
\end{equation}
Indeed these agree when $f_i,g_j \in \bbC[z^{\pm 1}]$ since the form is invariant for $\cV$, and this identity extends to all smooth functions by continuity.

With this in mind, we wish to define a bilinear form on $\cD$ by extending linearly the prescription
\begin{equation}\label{eqn: invariant form on D def}
(\varphi_1(f_1) \cdots \varphi_k(f_k)\Omega, \psi_1(g_1) \cdots \psi_\ell(g_\ell)\Omega)_\cD :=
(\psi^\opp(g_\ell) \cdots \psi^\opp(g_1)\varphi_1(f_1) \cdots \varphi_k(f_k)\Omega,\Omega)_{\widehat \cV, \cV},
\end{equation}
but we must first check that this is well-defined.
Suppose that for some collection of Wightman fields $\varphi_{i,j} \in \cF$ and smearing functions $f_{i,j} \in C^\infty(S^1)$ we have
\[
0 = \sum_i \varphi_{1,j}(f_{1,j}) \cdots \varphi_{k_j,j}(f_{k_j,j})\Omega.
\]
Then for all $\psi_1, \ldots, \psi_\ell \in \cF$ and $g_1, \ldots, g_\ell \in C^\infty(S^1)$
\[
0 = \sum_j \psi^\opp(g_\ell) \cdots \psi^\opp(g_1)\varphi_{1,j}(f_{1,j}) \cdots \varphi_{k_j,j}(f_{k_j,j})\Omega,
\]
and thus
\[
0 = \sum_j (\psi^\opp(g_\ell) \cdots \psi^\opp(g_1)\varphi_{1,j}(f_{1,j}) \cdots \varphi_{k_j,j}(f_{k_j,j})\Omega,\Omega)_{\widehat \cV,\cV}.
\]
This shows that the prescription \eqref{eqn: invariant form on D def} is well-defined in the first input.
We may repeat the above argument (invoking \eqref{eqn: left and right definitions of D form}) to show that it is also well-defined in the second input, and we conclude that \eqref{eqn: invariant form on D def} extends to a well-defined bilinear form.
As \eqref{eqn: invariant form on D def} is continuous in the functions $f_j$ and $g_j$, the bilinear form on $\cD$ is jointly $\cF$-strong continuous, as required.

Finally, we need to check that $( \, \cdot \, , \cdot \,)_{\cD,\cD}$ is compatible with the representation $U$ of $\Mob$.
Let $X \in \Lie(\Mob)$, let $\gamma_t = \exp(tX)$ and recall that $\alpha(\gamma_t) = \exp(t d\alpha(X))$.
From the proof of Lemma~\ref{lem: LieMob rep extends to D} we have for $\Phi \in \cD$
\[
\frac{d}{dt} U(\gamma_t)\Phi = \left. \frac{d}{ds}U(\gamma_{t+s})\Phi \right|_{s=0}
= \left. \frac{d}{ds}U(\gamma_s)U(\gamma_{t})\Phi \right|_{s=0} = \pi(X) U(\gamma_t)\Phi,
\]
with the derivative taken in the $\cF$-strong topology on $\cD$.
Similarly
\[
\frac{d}{dt} U(\alpha(\gamma_{t}))\Psi = \pi(d\alpha(X))U(\alpha(\gamma_{t}))\Psi.
\]
Hence by the joint continuity of the bilinear form we have
\begin{align*}
&\frac{d}{dt} (U(\gamma_t)\Phi,U(\alpha(\gamma_{t}))\Psi)_\cD \\
&= (\pi(X)U(\gamma_t)\Phi,U(\alpha(\gamma_{-t}))\Psi)_\cD + (U(\gamma_t)\Phi,\pi(d\alpha(X))U(\alpha(\gamma_{t}))\Psi)_\cD\\ 
&= 0.
\end{align*}
In the last equality we used the fact that $(\pi(X)u_1,u_2) = -(u_1,\pi(d\alpha(X))u_2)$ for $u_j \in \cV$, which extends to vectors in $\cD$ by the $\cF$-strong continuity of $\pi(X)$ and $\pi(d\alpha(X))$.
Hence the above expression is independent of $t$, and as the exponential map $\Lie(\Mob) \to \Mob$ is surjective we then have
\[
(\Phi,\Psi)_{\cD} = (U(\gamma)\Phi,U(\alpha(\gamma))\Psi)_\cD
\]
for all $\gamma \in \Mob$ and $\Phi,\Psi \in \cD$, as required.

Finally, we address nondegeneracy.
Recall that $\cD$ embeds naturally in $\widehat \cV$ and that the bilinear form on $\cD$ is compatible with the pairing of $\cV$ and $\widehat \cV$.
If the bilinear form on $\cV$ is nondegenerate and $\Phi=(\Phi_n)_{n \ge 0} \in \cD$ (with $\Phi_n \in \cV(n)$) is non-zero, then $\Phi_n \ne 0$ for some $n$, and thus there exists $v \in \cV(n)$ such that
\[
(\Phi,v) = (\Phi_n,v) \ne 0.
\]
A similar argument shows that the right-kernel of the form is zero, and so the form on $\cD$ is nondegenerate.

Conversely, assume that the form on $\cD$ is nongenerate. By M\"obius invariance, its restriction to $\cV$ is the direct sum $(v,u) = \sum_n (v,u)$. 
Let $v \in \cV$ with $v \neq 0$.
Then there exists $\Phi =(\Phi_n)_{n \ge 0} \in \cD$ with
\[
0 \ne (v,\Phi) = \sum_n (v_n,\Phi_n).
\]
Hence there must be some $n$ such that
$(v_n,\Phi_n) \neq 0$, and so the left-kernel of the form on $\cV$ is zero.
A similar argument shows that right-kernel is zero as well.
\end{proof}

We used the following fact in the proof of Theorem~\ref{thm: equivalence of bilinear forms}.
\begin{lem}\label{lem: invariant form for generators}
Let $\cV$ be a M\"obius vertex algebra equipped with a bilinear form $( \, \cdot \, , \, \cdot \, )$, and let $S \subset \cV$ be a set of vectors that generate $\cV$.
Suppose that the invariance condition
\[
(Y(v,z)u_1,u_2) = (u_1, Y^\opp(v,z)u_2)
\]
holds for $v \in S$ and $u_1,u_2 \in \cV$, and also that
\[
(L_nu_1, u_2) = (u_1, L_{-n}u_2)
\]
for all $u_1,u_2 \in \cV$.
Then the invariance condition holds for all $v \in \cV$ (i.e.\! the form is an invariant bilinear form for $\cV$).
\end{lem}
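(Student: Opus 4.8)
The plan is to reformulate invariance of the bilinear form as a module-intertwining property, and then to prove a general ``generators suffice'' principle for such intertwiners. Concretely, I would introduce the linear map $\Psi:\cV\to\cV'$ given by $\Psi(u)=(u,\cdot\,)$; the $n=0$ case of the hypothesis \eqref{eqn: invariant mob} forces $(u_1,u_2)=0$ whenever $u_1\in\cV(m)$ and $u_2\in\cV(m')$ with $m\ne m'$, so $\Psi$ is well defined and carries $\cV(m)$ into $\cV(m)^*\subset\cV'$. Let $(\cV',Y')$ be the contragredient module, whose vertex operator is determined by $\langle Y'(v,z)\mu,w\rangle=\langle\mu,Y^\opp(v,z)w\rangle$ (see \cite[\S5.2]{FHL93}), and write $Y'(v,z)=\sum_n v'_{(n)}z^{-n-1}$. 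Unwinding the definition \eqref{eqn: opposite vertex operator} of $Y^\opp$, a vector $v\in\cV$ satisfies the invariance relation $(Y(v,z)u_1,u_2)=(u_1,Y^\opp(v,z)u_2)$ for all $u_1,u_2\in\cV$ if and only if $\Psi(v_{(n)}u)=v'_{(n)}\Psi(u)$ for all $n\in\bbZ$ and $u\in\cV$; let $\cW\subseteq\cV$ be the linear subspace of all such $v$. By hypothesis $S\subseteq\cW$, and $\Omega\in\cW$ since $Y(\Omega,z)=Y^\opp(\Omega,z)=\operatorname{Id}_\cV$. The goal is then $\cW=\cV$.

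The heart of the argument is that $\cW$ is closed under the products $(v,w)\mapsto v_{(n)}w$. Given $v,w\in\cW$, I would expand $(v_{(n)}w)_{(k)}u$ by the Borcherds product formula \eqref{eqn: BPF} in $\cV$, apply $\Psi$ termwise, invoke $w\in\cW$ to replace $\Psi(w_{(k+j)}u)$ by $w'_{(k+j)}\Psi(u)$ and then $v\in\cW$ to replace $\Psi\bigl(v_{(n-j)}(w_{(k+j)}u)\bigr)$ by $v'_{(n-j)}w'_{(k+j)}\Psi(u)$ (and symmetrically for the remaining family of terms), and finally reassemble via the Borcherds product formula applied in the module $\cV'$ to conclude $\Psi\bigl((v_{(n)}w)_{(k)}u\bigr)=(v_{(n)}w)'_{(k)}\Psi(u)$, i.e.\ $v_{(n)}w\in\cW$. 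Since $S$ generates $\cV$, the space $\cV$ is the linear span of the vectors $a^1_{(n_1)}\cdots a^\ell_{(n_\ell)}\Omega$ with $a^i\in S$ and $n_i\in\bbZ$, and a straightforward induction on the word length $\ell$ (using $\Omega\in\cW$, $S\subseteq\cW$, and closure under products) yields $\cW=\cV$. Translating back through $\Psi$, the invariance relation holds for all $v\in\cV$.

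The work here is bookkeeping rather than conceptual. The points requiring care are that $\Psi$ genuinely takes values in the restricted dual $\cV'$ (this is where the $n=0$ case of \eqref{eqn: invariant mob} enters) and that the Borcherds product formula is available for the module $\cV'$ — but the latter is standard, being a consequence of the Jacobi identity valid in any module over a vertex algebra (cf.\ \cite[\S4.8]{Kac98}, \cite[\S5.2]{FHL93}). No symmetry of the form is used anywhere. One could instead argue directly with $Y^\opp$ and the $\Lie(\Mob)_\bbC$-covariance relations of a M\"obius vertex algebra, avoiding the language of contragredient modules, but the manipulations involving $\rme^{zL_1}(-z^{-2})^{L_0}$ make that route noticeably heavier.
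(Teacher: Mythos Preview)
Your proposal is correct and follows essentially the same route as the paper: introduce the map $\Psi:\cV\to\cV'$, $\Psi(u)=(u,\,\cdot\,)$, reinterpret the invariance condition as the intertwining relation $\Psi\,v_{(n)}=v'_{(n)}\,\Psi$ for the contragredient module, and then propagate from $S$ to all of $\cV$ via the Borcherds product formula. The paper's proof is more terse---it simply asserts that the intertwining condition ``extends to all $v\in\cV$ by the Borcherds product formula (for $\cV$ and for $\cV'$)''---whereas you spell out the closure of $\cW$ under products and the induction on word length; you also make explicit the use of the $L_0$-compatibility to ensure $\Psi$ lands in the restricted dual, a point the paper leaves implicit.
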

\begin{proof}
There is a (generalized) $\cV$-module structure on the restricted dual $\cV'$ whose state-field correspondence $Y'(v,z)$ is characterized by
\[
(Y^\opp(v,z)u, u')_{\cV,\cV'} = (u,Y'(v,z)u')_{\cV,\cV'}
\]
for all $u,v \in \cV$ and $u' \in \cV'$.
This contragredient module structure was first studied in \cite[\S5.2]{FHL93} and described further in our context with infinite-dimensional weight spaces in the paragraphs following \cite[Lem.\! 2.22]{HLZI}.
If $f:\cV \to \cV'$ is the map $f(u) = (u, \, \cdot \,)$, then our hypothesis implies that $f Y(v,z) = Y'(v,z)f$ for all $v \in S$, or at the level of modes $fv_{(n)} = v'_{(n)}f$ for all $v \in S$ and $n \in \bbZ$.
This intertwining condition extends to all $v \in \cV$ by the Borcherds product formula (for $\cV$ and for $\cV'$), and we conclude that the bilinear form is invariant.
\end{proof}

It was shown in  \cite[Prop. 5.3.6]{FHL93} that every nondegenerate invariant bilinear form on a vertex operator algebra is symmetric.
Later it was observed in \cite[Prop. 2.6]{Li94} that the proof does not use the hypothesis of nondegeneracy, and further examination of the proof in \cite{FHL93} shows that the proof also goes through for M\"obius vertex algebras as defined in this article (that is, allowing for infinite-dimensional $L_{0}$-weight spaces and only using M\"obius symmetry rather than Virasoro).
In light of Theorem~\ref{thm: equivalence of bilinear forms}, we have the same result for Wightman CFTs.

\begin{cor}
Every invariant bilinear form on a Wightman CFT is symmetric.
\end{cor}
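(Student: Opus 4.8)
The plan is to deduce this from the corresponding statement for M\"obius vertex algebras, recalled in the paragraph above, together with the correspondence of Theorem~\ref{thm: equivalence of bilinear forms} and a density argument.

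First I would fix an invariant bilinear form $(\,\cdot\,,\,\cdot\,)$ on the Wightman CFT $(\cF,\cD,U,\Omega)$ and let $\cV \subset \cD$ be the associated M\"obius vertex algebra. By Theorem~\ref{thm: equivalence of bilinear forms}(i) the restriction of $(\,\cdot\,,\,\cdot\,)$ to $\cV$ is an invariant bilinear form for the vertex algebra $\cV$. By the result of \cite[Prop. 5.3.6]{FHL93} together with \cite[Prop. 2.6]{Li94} --- which, as noted just above, applies verbatim to M\"obius vertex algebras with possibly infinite-dimensional weight spaces --- this restricted form is symmetric, i.e.\ $(u_1,u_2)=(u_2,u_1)$ for all $u_1,u_2\in\cV$.

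Next I would propagate symmetry from $\cV$ to all of $\cD$. Equip $\cD$ with the $\cF$-strong topology and consider the two bilinear forms $B_1(\Phi,\Psi)=(\Phi,\Psi)$ and $B_2(\Phi,\Psi)=(\Psi,\Phi)$ on $\cD$. By the definition of an invariant bilinear form for a Wightman CFT, $B_1$ is jointly $\cF$-strong continuous; since the flip map $\cD\times\cD\to\cD\times\cD$ is a homeomorphism, $B_2$ is jointly continuous as well. The previous step shows $B_1=B_2$ on $\cV\times\cV$, and $\cV=\bigoplus_{n\ge 0}\cV(n)$ is $\cF$-strongly dense in $\cD$ by Lemma~\ref{lem:V(n) dense}, so $\cV\times\cV$ is dense in $\cD\times\cD$. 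Two continuous scalar functions agreeing on a dense set coincide, hence $B_1=B_2$, which is precisely the asserted symmetry.

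I do not expect a genuine obstacle here: the only point that needs care is that the classical symmetry result for invariant bilinear forms really does apply in the generality of this paper (M\"obius rather than conformal symmetry, and infinite-dimensional $L_0$-eigenspaces), which is exactly the content of the remark preceding the corollary and may simply be cited. The remaining ingredient --- passing symmetry from the dense subspace $\cV$ to its $\cF$-strong completion $\cD$ --- is immediate from the joint continuity built into the Wightman-side definition of an invariant bilinear form.
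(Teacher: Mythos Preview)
Your proposal is correct and matches the paper's approach: the paper simply observes that the corollary follows from the vertex algebra symmetry result via Theorem~\ref{thm: equivalence of bilinear forms}, and you have spelled out the implicit density argument (restrict to $\cV$, use the FHL/Li result, then extend by joint $\cF$-strong continuity and density of $\cV$). There is nothing to add.
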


We now turn our attention to theories equipped with invariant sesquilinear forms (which we call involutive structures), noting this includes unitary theories.
In order to do this we will need to introduce antilinear homomorphisms of M\"obius vertex algebras and M\"obius-covariant Wightman CFTs.
Let $\cV$ and $\tilde \cV$ be M\"obius vertex algebras, with vacuum vectors $\Omega$ and $\tilde \Omega$ and representations $L_n$ and $\tilde L_n$ of $\Lie(\Mob)_{\bbC}$, respectively.
Then an antilinear map $g:\cV \to \tilde \cV$ is called a homomorphism if $g(\Omega) = \tilde \Omega$ and
\[
g v_{(m)} = (gv)_{(m)}g, \qquad \mbox{ and } \qquad  gL_n = \tilde L_n g
\]
for all $v \in \cV$, $m \in \bbZ$ and $n=-1,0,1$.

On the Wightman side, if $(\cF, \cD, U, \Omega)$ and $(\tilde \cF, \tilde \cD, \tilde U, \tilde \Omega)$ are M\"obius-covariant Wightman CFTs, an antilinear homomorphism $\cF \to \tilde \cF$ is an antilinear map $g:\cD \to \tilde \cD$ and a function $g_*:\cF \to \tilde \cF$ such that $g(\Omega) = \tilde \Omega$ and
\[
g\varphi(f) = (g_*\varphi)(\overline{f} \circ \tfrac{1}{z})g, \qquad \mbox{ and } \qquad gU(\gamma) = \tilde U(\alpha(\gamma))g
\]
for all $\varphi \in \cF$, $f \in C^\infty(S^1)$ and $\gamma \in \Mob$ (and we recall $\alpha(\gamma)(z) = 1/\gamma(\tfrac{1}{z})$)\footnote{It may be surprising that the condition $gL_n = \tilde L_n g$ for vertex algebras corresponds to $gU(\gamma) = \tilde U(\alpha(\gamma))g$ for Wightman CFTs. Note that due to the antilinearity of $g$, the relation $gL_n = \tilde L_n g$ does not imply that $g$ intertwines the representations of $\Lie(\Mob)$. In fact we have $g\pi(X) = \tilde \pi(d\alpha(X)) g$ for $X \in \Lie(\Mob)$.}.
We have
\[
\tilde U(\gamma) (g_*\varphi)(f) = g U(\alpha(\gamma))\varphi(\overline{f}\circ\tfrac{1}{z}),
\]
where $\overline{f}$ denotes the pointwise complex conjugate.
Just as we demonstrated in Lemmas~\ref{lem: functoriality Wightman to VA} and \ref{lem: functoriality VA to Wightman}, one can show that antilinear homomorphisms of M\"obius vertex algebras extend uniquely to antilinear homomorphisms of Wightman CFTs, and conversely antilinear homomorphisms of Wightman CFTs restrict to antilinear homomorphisms of M\"obius vertex algebras. 

\begin{lem}\label{lem: functoriality antilinear homomorphisms}
Let $(\cF,\cD,U,\Omega)$ and $(\tilde \cF, \tilde \cD, \tilde U, \tilde \Omega)$ be two M\"obius-covariant Wightman CFTs and let $\cV \subset \cD$ and $\tilde \cV \subset \tilde \cD$ be the corresponding M\"obius vertex algebras with respective generating sets $S$ and $\tilde S$.
\begin{enumerate}[i)]
\item If $(g_*,g):\cF \to \tilde \cF$ is an antilinear homomorphism then $g(\cV) \subset \tilde \cV$ and $g|_{\cV}$ is an antilinear homomorphism of M\"obius vertex algebras satisfying $g(S) \subset \tilde S$.
\item If $g:\cV \to \tilde \cV$ is an antilinear homomorphism of M\"obius vertex algebras such that $g(S) \subset \tilde S$, then there is a unique antilinear homomorphism $(h_*,h):\cF \to \tilde \cF$ such that $h|_\cV = g$.
\end{enumerate}
\end{lem}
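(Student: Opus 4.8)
The plan is to follow the proofs of Lemma~\ref{lem: functoriality Wightman to VA} and Lemma~\ref{lem: functoriality VA to Wightman} essentially line by line, keeping track of the extra complex conjugations. The computational fact that makes this go through cleanly is that on $S^1$ one has $\overline{e_j} = e_{-j}$, hence $\overline{e_j}\circ\tfrac{1}{z} = e_j$, and more generally $\overline{f}\circ\tfrac{1}{z} = \sum_j \overline{\hat f(j)}\, e_j$ for $f \in C^\infty(S^1)$. Consequently the twisted covariance relation $g\varphi(f) = (g_*\varphi)(\overline{f}\circ\tfrac{1}{z})g$ is, at the level of modes, simply $g\varphi(e_j) = (g_*\varphi)(e_j)g$, with no conjugation, and conversely the mode-level relations automatically reassemble to the twisted relation because $g$ is antilinear.

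For part (i) I would first use that $\cV$ is spanned by the vectors $\varphi_1(e_{j_1}) \cdots \varphi_k(e_{j_k})\Omega$ together with $g\varphi(e_j) = (g_*\varphi)(e_j)g$ to conclude $g(\cV)\subset\tilde\cV$. Differentiating $gU(\gamma) = \tilde U(\alpha(\gamma))g$ along one-parameter subgroups, exactly as in Lemma~\ref{lem: functoriality Wightman to VA} (using that $g$ is continuous for the weak topologies), gives $g\pi(X) = \tilde\pi(d\alpha(X))g$ for $X \in \Lie(\Mob)$; extending $\bbC$-linearly and using $\overline{L_n} = -L_{-n}$ in $\Lie(\Mob)_\bbC$ together with the definition $d\alpha(L_n) = -L_{-n}$, this is precisely the untwisted relation $gL_n = \tilde L_n g$. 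Next, \eqref{eqn: conf dim formula} together with $g\varphi(e_{-n})\Omega = (g_*\varphi)(e_{-n})\tilde\Omega$ and the $L_{-1}$-recursion from the proof of Lemma~\ref{lem: functoriality Wightman to VA} shows that $\varphi$ and $g_*\varphi$ have equal conformal dimension whenever $g_*\varphi \not\equiv 0$, so $g$ carries the generating vector $v_\varphi = \varphi(e_{-d})\Omega$ to $\tilde v_{g_*\varphi}$, i.e.\ $g(S)\subset\tilde S$. Finally $g (v_\varphi)_{(m)} = g\varphi(e_{m-d+1}) = (g_*\varphi)(e_{m-d+1})g = (gv_\varphi)_{(m)}g$ on $\cV$, and since the coefficients in the Borcherds product formula \eqref{eqn: BPF} are integers, this mode-intertwining relation propagates from $S$ to all of $\cV$ by the same induction on word length as in the linear case; with $g\Omega = \tilde\Omega$ this exhibits $g|_\cV$ as an antilinear homomorphism of M\"obius vertex algebras.

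For part (ii) I would set $h_*\varphi_v := \tilde\varphi_{gv} \in \tilde\cF$ for $v \in S$ (well-defined since $g(S)\subset\tilde S$, and $gv$ is again quasiprimary of the same conformal dimension as $v$ because $g$ intertwines $L_0$ and $L_1$). As in Lemma~\ref{lem: functoriality VA to Wightman}, the transpose of $g$ carries $\tilde\cV'$ into $\cV'$ (since $g$ is compatible with the $L_0$-gradings), so $\cV' \subset \cD_\cF^*$, $\tilde\cV' \subset \tilde\cD_{\tilde\cF}^*$, and the fact that $\tilde\cV'$ separates points in $\tilde\cD$ together show that the closure of the graph of $g:\cV\to\tilde\cV$ in $\cD\times\tilde\cD$ is the graph of a continuous densely defined antilinear map $h$. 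On $\cV$ and for Laurent polynomials one has $h\varphi_v(f) = \sum_j \overline{\hat f(j)}\,\tilde\varphi_{gv}(e_j)h = \tilde\varphi_{gv}(\overline{f}\circ\tfrac{1}{z})h$; approximating arbitrary $f_j\in C^\infty(S^1)$ by Laurent polynomials and using the joint continuity of the smeared fields together with the continuity of $f \mapsto \overline{f}\circ\tfrac{1}{z}$ on $C^\infty(S^1)$, the vectors $\varphi_1(f_1)\cdots\varphi_k(f_k)\Omega$ land in the domain of $h$ and $h\varphi(f) = (h_*\varphi)(\overline{f}\circ\tfrac{1}{z})h$ for all $f$ and all $\varphi\in\cF$, so $h$ is defined on all of $\cD$. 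For the $\Mob$-covariance I would extend $g\pi(X) = \tilde\pi(d\alpha(X))g$ from $\cV$ to $\cD$ by density and the $\cF$-strong continuity of $\pi(X)$, $\tilde\pi(d\alpha(X))$ and $h$, and then run an ODE argument along $\gamma_t = \exp(tX)$ as in Lemma~\ref{lem: Laurent polynomial smearing vertex fields are Mobius covariant} (using that $\tilde\pi(d\alpha(X))^*$ preserves $\tilde\cV'$) to deduce $hU(\gamma_t) = \tilde U(\alpha(\gamma_t))h$; surjectivity of $\exp$ then gives $hU(\gamma) = \tilde U(\alpha(\gamma))h$. Uniqueness of $(h,h_*)$ is as in the linear case: $h_*$ is determined by $h$, and $h$ is determined on $\cV$ by $g$ and hence on $\cD$ by $\cF$-strong density of $\cV$ and continuity.

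The work is essentially routine given the linear statements; the one point that genuinely needs care is the bookkeeping of the antilinear twist, namely that the automorphism $\alpha$ appearing in $gU(\gamma) = \tilde U(\alpha(\gamma))g$ on the Wightman side matches the untwisted condition $gL_n = \tilde L_n g$ on the vertex-algebra side (the discrepancy being absorbed by antilinearity via $\overline{L_n} = -L_{-n}$), and symmetrically that $\overline{f}\circ\tfrac{1}{z}$ restricts to the identity on each monomial $e_j$, so that all mode-level intertwining identities carry no conjugation. Once these identities are recorded, every density, graph-closure, ODE, and Borcherds-propagation step from Lemmas~\ref{lem: functoriality Wightman to VA} and \ref{lem: functoriality VA to Wightman} applies mutatis mutandis.
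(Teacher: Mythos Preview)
Your proposal is correct and is exactly the approach the paper has in mind: the paper omits the proof entirely, stating only that it is ``essentially identical to Lemma~\ref{lem: functoriality Wightman to VA} and \ref{lem: functoriality VA to Wightman} (once we observe as above that an antilinear vertex algebra homomorphism satisfies $g\pi(X) = \tilde \pi(d\alpha(X))g$ for $X \in \Lie(\Mob)$)'', and you have carried out precisely this, correctly isolating the two bookkeeping identities $\overline{e_j}\circ\tfrac{1}{z}=e_j$ and $\overline{L_n}=-L_{-n}=d\alpha(L_n)$ that make the antilinear twist disappear at the level of modes. One small simplification: for the $\Mob$-covariance in part~(ii) you do not need an ODE argument; once $h\varphi(f)=(h_*\varphi)(\overline{f}\circ\tfrac{1}{z})h$ is established, the relation $hU(\gamma)=\tilde U(\alpha(\gamma))h$ follows directly from field covariance together with the pointwise identity $\overline{\beta_d(\gamma)f}\circ\tfrac{1}{z}=\beta_d(\alpha(\gamma))(\overline{f}\circ\tfrac{1}{z})$ (which is a short computation using $X_{\alpha(\gamma)}(z)=X_\gamma(1/z)$ and the reality of $X_\gamma$), mirroring the one-line ``it follows immediately'' in the proof of Lemma~\ref{lem: functoriality VA to Wightman}.
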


We omit the proof of Lemma~\ref{lem: functoriality antilinear homomorphisms} which is essentially identical to Lemma~\ref{lem: functoriality Wightman to VA} and \ref{lem: functoriality VA to Wightman} (once we observe as above that an antilinear vertex algebra homomorphism satisfies $g\pi(X) = \tilde \pi(d\alpha(X))g$ for $X \in \Lie(\Mob)$).

Recall that an antilinear map $g$ is said to preserve a sesquilinear form if $\ip{g\Phi, g\Psi} = \overline{\ip{\Phi,\Psi}}$ for all vectors $\Phi,\Psi$, and that a sesquilinear form is said to be (Hermitian) symmetric if $\ip{\Phi,\Psi}=\overline{\ip{\Psi,\Phi}}$.

\begin{defn}\label{defn: involutive vertex algebra}
An \textbf{involutive M\"obius vertex algebra} is a M\"obius vertex algebra $\cV$ equipped with a sesquilinear form $\ip{ \, \cdot \, , \, \cdot \,}$ and an antilinear automorphism $\theta:\cV \to \cV$ which is involutive $(\theta^2=\mathrm{id}_\cV)$ and preserves the sesquilinear form, and such that $\ip{ \, \cdot \, , \theta\, \cdot \,}$ is an invariant bilinear form.
An involutive M\"obius vertex algebra is called \textbf{unitary} if the sesquilinear form is an inner product that is normalized so that $\ip{\Omega,\Omega}=1$.
\end{defn}

We use the convention that sesquilinear forms are linear in the first variable, and require that homomorphisms of M\"obius vertex algebras commute with the operators $L_n$.
The condition that $\ip{ \, \cdot \, , \theta\, \cdot \,}$ is an invariant bilinear form is equivalent to having
\begin{equation}\label{eqn: rephrased invariant sesquilinear form}
\ip{Y(v,z)u_1,u_2} = \ip{u_1,Y^\opp(\theta v, \bar z)u_2} \quad \mbox{ and } \quad \ip{L_nu_1, u_2} = \ip{u_1, L_{-n}u_2}
\end{equation}
for all $u_1,u_2,v \in \cV$ and $n=-1,0,1$, where $z$ is a formal complex variable,
i.e.\! $\ip{ \, \cdot \, , \bar z\, \cdot \,} = z\ip{ \, \cdot \, , \cdot \,}$.

We sometimes refer to the sesquilinear form from Definition~\ref{defn: involutive vertex algebra} as an invariant sesquilinear form, omitting reference to the involution $\theta$.

\begin{rem}\label{rmk: invariant sesquilinear redundancy}
The sesquilinear forms from Definition~\ref{defn: involutive vertex algebra} are automatically Hermitian symmetric as a consequence of the fact that invariant bilinear forms are symmetric.
If the sesquilinear form is nondegenerate then the requirement that $\theta$ be involutive is redundant and the automorphism $\theta$ is uniquely determined by the sesquilinear form (the proof is exactly as in \cite[Prop. 5.1]{CKLW18} for inner products).
\end{rem}

We now turn our attention to invariant sesquilinear forms on Wightman CFTs.

\begin{defn}\label{defn: Wightman involutive structure}
An \textbf{involutive M\"obius-covariant Wightman CFT on $S^1$} is a Wightman CFT $(\cF,\cD,U,\Omega)$ along with a jointly $\cF$-strong continuous sesquilinear form $\ip{\, \cdot \, , \, \cdot \,}$ on $\cD$ and an involutive automorphism $(\theta_*,\theta)$ of $\cF$ such that $\langle \theta u, \theta v \rangle = \langle u,v \rangle$ and such that $\ip{\,\cdot\, , \, \theta \cdot \,}$ is an invariant bilinear form. 
An involutive M\"obius-covariant Wightman CFT is called \textbf{unitary} if the sesquilinear form is an inner product which is normalized so that $\ip{\Omega,\Omega}=1$.
\end{defn}

One can think of $\theta$ as generalising the notion of a PCT operator associated with a positive-definite form. As with vertex algebras, we sometimes refer to the sesquilinear form of Definition~\ref{defn: Wightman involutive structure} as an invariant sesquilinear form, omitting reference to the involution.

If we write $\varphi^\dagger = (-1)^{d_\varphi}\theta_*\varphi$, then the condition that $\ip{\,\cdot\, , \, \theta \cdot \,}$ is an invariant bilinear form is equivalent to
\begin{equation}\label{eqn: compatibility sesquilinear and wightman field}
\ip{\varphi(f)\Phi,\Psi} = \ip{\Phi,\varphi^\dagger(\overline{f})\Psi} \quad \mbox{ and } \quad \ip{U(\gamma)\Phi, U(\gamma)\Psi} = \ip{\Phi,\Psi}
\end{equation}
for all $\Phi,\Psi \in \cD$, $\varphi \in \cF$, and $\gamma \in \Mob$\footnote{Note that this is a slight departure from \cite{RaymondTanimotoTener22}, where we required that $\cF$ be invariant under the involution $\dagger$ rather than $\theta_*$. In the present setting we find the updated definition to be more natural, as the involution $\dagger$ of fields typically does not correspond to an antilinear automorphism of the Wightman CFT.}.
Here, as before, $\overline{f}$ denotes the pointwise complex conjugate of the function $f$.

As with involutive vertex algebras (Remark~\ref{rmk: invariant sesquilinear redundancy}), the sesquilinear form of an involutive Wightman CFT is automatically Hermitian symmetric.

\begin{thm}[Equivalence of involutive and unitary structures] \label{thm: equivalence of involutive structures}
Let $(\cF, \cD, U, \Omega)$ be a M\"obius-covariant Wightman CFT, and let $\cV \subset \cD$ be the corresponding M\"obius vertex algebra equipped with a set $S$ of quasiprimary generators.
Then we have the following.
\begin{enumerate}[i)]
\item If $\cD$ is equipped with a sesquilinear form and involution $(\theta_*,\theta)$ making it into an involutive Wightman CFT, then the sesquilinear form and involution $\theta$ restrict to an involutive structure on the vertex algebra $\cV$.
The set $S \subset \cV$ of quasiprimary generators is invariant under $\theta$.
\item If $\cV$ is equipped with a sesquilinear form and involution $\theta$ making it into an involutive vertex algebra and $S$ is invariant under $\theta$, then there is a unique involution $\theta_*$ of $\cF$ and unique extensions of the sesquilinear form and $\theta$ to $\cD$ making $\cF$ into an involutive Wightman CFT.
\end{enumerate}
If the sesquilinear form is nondegenerate on $\cD$ then it remains nondegenerate on $\cV$, and similarly if the form is nondegenerate on $\cV$ so is the extension to $\cD$.
Moreover unitary structures on $\cD$ correspond to unitary structures on $\cV$, and vice versa.
\end{thm}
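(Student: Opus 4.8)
The plan is to reduce the entire statement to Theorem~\ref{thm: equivalence of bilinear forms} together with Lemma~\ref{lem: functoriality antilinear homomorphisms}, exploiting that an involutive structure is exactly the data of an invariant sesquilinear form $\ip{\,\cdot\,,\,\cdot\,}$ and a compatible antilinear involution $\theta$ for which the ``untwisting'' $\ip{\,\cdot\,,\theta\,\cdot\,}$ is an invariant bilinear form. For part (i), suppose $\cD$ carries an involutive structure $(\ip{\,\cdot\,,\,\cdot\,},(\theta,\theta_*))$. The pair $(\theta,\theta_*)$ is an antilinear automorphism of the Wightman CFT $\cF$, so by Lemma~\ref{lem: functoriality antilinear homomorphisms}(i) it restricts to an antilinear homomorphism $\theta|_\cV$ of the vertex algebra $\cV$ with $\theta(S)\subseteq S$; since $\theta^2=\id$ on $\cD$ restricts to an involution on $\cV$, the map $\theta|_\cV$ is an involutive automorphism and $\theta(S)=S$. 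By hypothesis $(\Phi,\Psi):=\ip{\Phi,\theta\Psi}$ is an invariant bilinear form on $\cD$, so by Theorem~\ref{thm: equivalence of bilinear forms}(i) its restriction to $\cV$ is an invariant bilinear form; this restriction equals $\ip{\,\cdot\,,\theta\,\cdot\,}|_\cV$, so the restricted sesquilinear form together with $\theta|_\cV$ makes $\cV$ an involutive M\"obius vertex algebra with $\theta$-invariant generating set $S$.

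For part (ii), let $\cV$ carry an involutive structure $(\ip{\,\cdot\,,\,\cdot\,}_\cV,\theta)$ with $\theta(S)\subseteq S$ (hence $=S$, as above). By Lemma~\ref{lem: functoriality antilinear homomorphisms}(ii), $\theta$ extends uniquely to an antilinear homomorphism $(\theta_\cD,\theta_*):\cF\to\cF$ with $\theta_\cD|_\cV=\theta$; since $\theta_\cD$ is continuous for the $\cF$-strong topology and $\cV$ is $\cF$-strongly dense in the Hausdorff space $\cD$, the continuous map $\theta_\cD^2$ agrees with the identity on a dense set, so $\theta_\cD^2=\id$ (and likewise $\theta_*^2=\id$). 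The bilinear form $\ip{\,\cdot\,,\theta\,\cdot\,}_\cV$ is invariant, so by Theorem~\ref{thm: equivalence of bilinear forms}(ii) it extends uniquely to a jointly $\cF$-strong continuous invariant bilinear form $(\,\cdot\,,\,\cdot\,)_\cD$ on $\cD$. Set $\ip{\Phi,\Psi}_\cD:=(\Phi,\theta_\cD\Psi)_\cD$; this is a jointly $\cF$-strong continuous sesquilinear form extending $\ip{\,\cdot\,,\,\cdot\,}_\cV$, and $\ip{\,\cdot\,,\theta_\cD\,\cdot\,}_\cD=(\,\cdot\,,\,\cdot\,)_\cD$ is by construction an invariant bilinear form. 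That $\theta_\cD$ preserves $\ip{\,\cdot\,,\,\cdot\,}_\cD$ is the identity $(\theta_\cD\Phi,\Psi)_\cD=\overline{(\Phi,\theta_\cD\Psi)_\cD}$, which holds for $\Phi,\Psi\in\cV$ because $\theta$ preserves $\ip{\,\cdot\,,\,\cdot\,}_\cV$, and extends to $\cD\times\cD$ since both sides are jointly $\cF$-strong continuous and $\cV\times\cV$ is dense. Uniqueness of $\theta_*$ and of the extended form follows from density of $\cV$ and continuity. Nondegeneracy transfers both ways because $\theta$ is a bijection, so $\ip{\,\cdot\,,\,\cdot\,}$ is degenerate precisely when $\ip{\,\cdot\,,\theta\,\cdot\,}$ is, and one then invokes the nondegeneracy statement of Theorem~\ref{thm: equivalence of bilinear forms}.

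For the statement about unitary structures, the only genuinely new point is that positive-definiteness passes from $\cV$ to $\cD$ (the converse is immediate by restriction), the normalization $\ip{\Omega,\Omega}=1$ being preserved since $\Omega\in\cV$. Here I would use Proposition~\ref{prop: canonical embedding of D} to write a nonzero $\Phi\in\cD$ as $\Phi=\sum_{n\ge 0}\Phi_n$ with $\Phi_n\in\cV(n)$, converging $\cF$-strongly, with at least one $\Phi_n\ne 0$. Rotation covariance in \eqref{eqn: compatibility sesquilinear and wightman field} forces $\ip{\Phi_n,\Phi_m}_\cD=0$ for $n\ne m$, and separate $\cF$-strong continuity of $\ip{\,\cdot\,,\,\cdot\,}_\cD$ then yields $\ip{\Phi,\Phi}_\cD=\sum_n\ip{\Phi_n,\Phi_n}_\cV$, a convergent series of non-negative terms with at least one strictly positive, so $\ip{\Phi,\Phi}_\cD>0$. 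The conceptual content is carried entirely by Theorem~\ref{thm: equivalence of bilinear forms} and Lemma~\ref{lem: functoriality antilinear homomorphisms}; I expect the main obstacle to be the bookkeeping around the separate-versus-joint continuity of the various forms and the interaction of the antilinear involution with the pairing between $\cV$ and $\widehat\cV$ underlying the extended forms.
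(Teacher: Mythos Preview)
Your proposal is correct and follows essentially the same route as the paper: reduce everything to Theorem~\ref{thm: equivalence of bilinear forms} via the bilinear form $\ip{\,\cdot\,,\theta\,\cdot\,}$, and transport $\theta$ using Lemma~\ref{lem: functoriality antilinear homomorphisms}. You actually supply more detail than the paper does---in particular your density/continuity checks that $\theta_\cD^2=\id$ and that $\theta_\cD$ preserves the form, and your argument that positive-definiteness passes from $\cV$ to $\cD$ via the decomposition $\Phi=\sum_n\Phi_n$, are spelled out where the paper simply declares these ``immediate''.
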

\begin{proof}
First consider an involutive structure $(\theta_*,\theta)$ on $\cF$. 
Then $\theta$ restricts to an antilinear involutive M\"obius vertex algebra automorphism $\theta|_\cV : \cV \to \cV$ by Lemma~\ref{lem: functoriality antilinear homomorphisms}.
By Theorem~\ref{thm: equivalence of bilinear forms}, the invariant bilinear form $\ip{\, \cdot \, , \, \theta \cdot \,}$ restricts to an invariant bilinear form on $\cV$, and it follows that $\ip{\, \cdot \, , \, \cdot \,}$ and $\theta|_{\cV}$ yield an involutive structure on $\cV$.
If $\varphi \in \cF$ corresponds to the state $v \in \cV(d)$, then
\[
\theta v = \theta \varphi(e_{-d})\Omega = (\theta_* \varphi)(e_{-d})\Omega. 
\]
Hence the Wightman field $\theta_*\varphi \in \cF$ corresponds to $\theta v$ and $S$ is $\theta$-invariant (we have used here the observation that $\theta_*$ preserves the conformal dimension of fields).

For the other direction, suppose that we have an involutive structure on $\cV$ corresponding to an involution $\theta$ and sesquilinear form $\ip{\, \cdot \, , \, \cdot \, }$.
Then the invariant bilinear form $\ip{\, \cdot \, , \theta \, \cdot \,}$ extends uniquely to an invariant bilinear form $( \, \cdot \, , \, \cdot \,)$ on $\cD$.
By Lemma~\ref{lem: functoriality antilinear homomorphisms} we may uniquely extend $\theta$ to an antilinear automorphism $(\theta_*,\theta)$ of $\cF$.
This extension is $\cF$-strong continuous, and thus the sesquilinear form $(\, \cdot \, , \,\theta \cdot \,)$ is $\cF$-strong continuous as well.
This sesquilinear form, along with $(\theta_*, \theta)$, yield an involutive structure on $\cF$ as required.

The proof of equivalence of nondegeneracy is straightforward (as in the proof of Theorem~\ref{thm: equivalence of bilinear forms}), and the equivalence of unitarity is immediate.
\end{proof}

For unitary Wightman CFTs domain $\cD$ can be equipped with the norm topology coming from the inner product. 
This leads to a number of analytic questions, which are discussed further in \cite{RaymondTanimotoTener22}.

\begin{rem}
Suppose that $(\cF,\cD,U,\Omega)$ is a Wightman CFT such that $\cD$ is equipped with an inner product and $\cF$ is equipped with an involution $\dagger$.
The definition of a (unitary) Wightman CFT given in \cite{RaymondTanimotoTener22} required only that the compatibility conditions \eqref{eqn: compatibility sesquilinear and wightman field} hold, with no mention of the PCT operator $\theta$.
However, under these assumptions one may show that there exists a unique $\theta$ making the associated vertex algebra $\cV$ into a unitary vertex algebra, arguing as in \cite[Thm. 3.11]{RaymondTanimotoTener22} based on \cite[Thm. 5.16]{CKLW18}\footnote{The hypothesis that $\dim \cV(0) = 1$ required in \cite[Thm. 5.16]{CKLW18} is not needed, as shown in \cite[\S3.4]{CarpiGaudioHillier23ax}, and the condition $\dim \cV(n) < \infty$ is also not needed.}.
One may then extend $\theta$ to an antilinear involution $(\theta_*,\theta)$ of $\cF$ by Lemma~\ref{lem: functoriality antilinear homomorphisms}, making $(\cF,\cD,U,\Omega)$ into a unitary Wightman CFT as defined in this article.
We note that for a general sesquilinear form on $\cD$, an involution $\dagger$ satisfying \eqref{eqn: compatibility sesquilinear and wightman field} does not necessarily correspond to an involutive structure.
Indeed, in the extreme example where the sesquilinear form is identically zero, the compatibility conditions \eqref{eqn: compatibility sesquilinear and wightman field} impose no constraint on the involution $\dagger$, but not every set-theoretic involution of $\cF$ corresponds to an involutive structure.
It is possible that the conditions \eqref{eqn: compatibility sesquilinear and wightman field} are sufficient to reconstruct the PCT operator $\theta$ when the sesquilinear form is nondegenerate, but we do not address that question here.
\end{rem}

\appendix

\section[The Reeh-Schlieder theorem for non-unitary Wightman conformal field theories]{The Reeh-Schlieder theorem for non-unitary\\ Wightman conformal field theories} \label{sec: reeh schlieder}

In this section we work with rotation-covariant Wightman QFTs $(\cF,\cD,U,\Omega)$ on $S^1$, which differ from M\"obius-covariant Wightman CFTs (Definition~\ref{def: Wightman}) only in that the symmetry $U$ is only a representation of the rotation subgroup $\Rot(S^1) \subset \Mob$, and accordingly the covariance condition (W\ref{itm: W Mob covariance}) is weakened to only require covariance for rotations.
We write either $R_z$ or $R_\vartheta$ for rotation by $z=\rme^{\rmi\vartheta}$.

For $I \subset S^1$ an interval (i.e.\! $I$ is a connected open non-empty proper subset), we let $\cP(I) \subset \cL(\cD)$ be the algebra generated by smeared fields $\varphi(f)$ with $\supp f \subset I$.
The goal of this section is to establish the Reeh-Schlieder theorem for the theory $\cF$, which says that the vacuum vector $\Omega$ is cyclic and separating for the algebras $\cP(I)$.
Recall that $\Omega$ is \textbf{cyclic} for an algebra $\cP \subset \cL(\cD)$ (with respect to a certain topology on $\cD$) if $\cP \Omega$ is dense in $\cD$, and \textbf{separating} for $\cP$ if the only $X \in \cP$ such that $X\Omega = 0$ is $X=0$.
The analogous statement for unitary Wightman quantum field theories on higher-dimensional spacetimes is well-known (see \cite[\S4.2]{StreaterWightman64} and \cite{ReehSchlieder}).
We give here a proof of the Reeh-Schlieder theorem in our current (not necessarily unitary) context.

Let $D$ be the open unit disk in $\bbC$, and $\overline{D}$ its closure.
We denote by $A(\overline{D})$ the space of continuous $\bbC$-valued functions on $\overline{D}$ that are holomorphic on the interior $D$.
By the maximum principle $A(\overline{D})$ embeds as a closed subspace of $C(S^1)$, and we give $A(\overline{D})$ the norm inherited from $C(S^1)$.

\begin{lem}\label{lem: reeh schlieder lemma}
Let $\cF$ be a rotation-covariant Wightman CFT.
Fix $\varphi_1, \ldots, \varphi_k \in \cF$, and let $\lambda \in \cD_\cF^*$.
Let $f_1, \ldots, f_k \in C^\infty(S^1)$ and let $z_1, \ldots, z_k \in S^1$.
Then for each $j = 1,\cdots, k$, the map
\begin{equation}\label{eqn: reeh schlieder function}
z_j \mapsto \lambda\big(U(R_{z_1}) \varphi_1(f_1) U(R_{z_2}) \varphi_2(f_2) \cdots U(R_{z_k}) \varphi_k(f_k)\Omega\big)
\end{equation}
lies in $A(\overline{D})$.
\end{lem}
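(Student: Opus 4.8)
The plan is to peel off the $j$-th rotation operator and reduce to a one-variable statement, then exploit the spectrum condition. Fix $j$, set $\Psi := \varphi_j(f_j)U(R_{z_{j+1}})\varphi_{j+1}(f_{j+1})\cdots U(R_{z_k})\varphi_k(f_k)\Omega \in \cD$, and define a functional $\mu$ on $\cD$ by $\mu(\Phi) = \lambda\big(U(R_{z_1})\varphi_1(f_1)\cdots U(R_{z_{j-1}})\varphi_{j-1}(f_{j-1})\Phi\big)$. Since $\mu$ is obtained from $\lambda$ by applying the adjoints of field and rotation operators, all of which preserve $\cD_\cF^*$ (cf.\ the proofs of Lemmas~\ref{lem: continuity of field operators} and \ref{lem: pos eigenvalues dense implies spectrum condition}), we have $\mu \in \cD_\cF^*$, and the function in \eqref{eqn: reeh schlieder function}, viewed as a function of $z_j$ with all other variables fixed, is precisely $z_j \mapsto \mu(U(R_{z_j})\Psi)$. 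So it suffices to prove: \emph{for every $\mu \in \cD_\cF^*$ and $\Psi \in \cD$, the function $h(z) := \mu(U(R_z)\Psi)$ on $S^1$ lies in $A(\overline D)$.}

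First I would note that $h$ is continuous on $S^1$: the same argument as in Lemma~\ref{lem:V(n) dense}(i) (using that $X_{R_z}\equiv 1$, so $U(R_z)\varphi(f) = \varphi(f\circ R_z^{-1})U(R_z)$ with $z\mapsto f\circ R_z^{-1}$ continuous into $C^\infty(S^1)$, together with joint continuity of products of smeared fields in their test functions) shows $z\mapsto U(R_z)\Psi$ is continuous into $\cD$ with its $\cF$-weak topology, and composing with the $\cF$-weakly continuous functional $\mu$ gives continuity of $h$. Under the embedding $A(\overline D)\hookrightarrow C(S^1)$, the image consists exactly of those continuous $h$ with $\widehat h(n)=0$ for all $n<0$: for such $h$ the Poisson extension equals $\sum_{n\ge 0}\widehat h(n)z^n$, which is holomorphic on $D$ and continuous on $\overline D$. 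Hence it remains to show $\widehat h(n)=0$ for $n<0$.

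For this I would reuse the averaging construction from the proof of Lemma~\ref{lem: pos eigenvalues dense implies spectrum condition}, which uses only rotation covariance. Writing $r_z = U(R_\vartheta)$ for $z=\rme^{\rmi\vartheta}$, set $\mu_n = \frac{1}{2\pi\rmi}\int_{S^1} z^{-n-1}r_z^*\mu\,dz$; exactly as in that proof the integral converges weak-$*$ and defines an element $\mu_n \in \cD_\cF^*\cap\cV(n)^*$ with $\mu_n(u) = \delta_{n,m}\mu(u)$ for $u\in\cV(m)$. A change of variables identifies $\widehat h(n)$ with $\mu_n(\Psi)$. If $n<0$ then $\cV(n)=0$ by the spectrum condition, so $\mu_n$ vanishes on every $\cV(m)$, hence on $\cV=\bigoplus_{m\ge 0}\cV(m)$; since $\cV$ is $\cF$-weakly dense in $\cD$ (the proof of Lemma~\ref{lem:V(n) dense}(ii) uses only rotation covariance and the spectrum condition) and $\mu_n$ is $\cF$-weakly continuous, we get $\mu_n\equiv 0$ on $\cD$, and therefore $\widehat h(n)=\mu_n(\Psi)=0$.

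The delicate point is that $\cD$ need not be complete, so one cannot simply expand $\Psi$ as a convergent sum $\sum_m\Psi_m$ of finite-energy components and read off the Fourier coefficients of $h$ term by term — those components only live in the completion $\widehat\cV$. Working instead with the weak-$*$ convergent dual integrals $\mu_n\in\cD_\cF^*$, which is where the averaging genuinely converges, sidesteps this, at the cost of verifying (just as in Lemma~\ref{lem: pos eigenvalues dense implies spectrum condition}) that each $\mu_n$ really belongs to $\cD_\cF^*$, i.e.\ that the maps $(g_1,\ldots,g_m)\mapsto \mu_n(\psi_1(g_1)\cdots\psi_m(g_m)\Phi')$ are continuous for all $\psi_i\in\cF$ and $\Phi'\in\cD$.
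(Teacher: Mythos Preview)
Your argument is correct, and it takes a genuinely different route from the paper's proof. The paper argues by approximation: it first observes that when the $f_j$ are Laurent polynomials the expression \eqref{eqn: reeh schlieder function} is literally a polynomial in the $z_i$, then rewrites the expression via rotation covariance as $\lambda(\varphi_1(\beta_{d_1}(R_{w_1})f_1)\cdots\varphi_k(\beta_{d_k}(R_{w_k})f_k)\Omega)$ with $w_j=z_1\cdots z_j$, approximates each $f_j$ by Laurent polynomials, and uses that $\beta_d(R_w)$ acts unitarily on each Sobolev space $H^N(S^1)$ together with boundedness of the multilinear map $H^N(S^1)^k\to\bbC$ (for suitable $N$) to get \emph{uniform} convergence in the $z_i$; since $A(\overline{D})$ is closed in $C(S^1)$ the limit stays in $A(\overline{D})$.

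Your approach instead isolates the single variable $z_j$ by dualizing the prefix into $\mu\in\cD_\cF^*$, then proves directly that $h(z)=\mu(U(R_z)\Psi)$ lies in $A(\overline{D})$ by computing its Fourier coefficients via the rotational averages $\mu_n$ and invoking the spectrum condition. This is more conceptual in that it makes explicit the physical mechanism (positive energy $\Rightarrow$ only non-negative Fourier modes $\Rightarrow$ boundary value of a holomorphic function), and it avoids any discussion of Sobolev norms or uniform estimates. The paper's approximation argument, on the other hand, avoids having to quote the Fourier-coefficient characterization of the disk algebra and treats all variables $z_1,\ldots,z_k$ simultaneously. Both work in the rotation-covariant setting; the ingredients you borrow from Lemmas~\ref{lem: continuity of field operators}, \ref{lem:V(n) dense}, and \ref{lem: pos eigenvalues dense implies spectrum condition} indeed use only rotation covariance, as you note.
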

\begin{proof}
When the functions $f_j$ are all Laurent polynomials, the expression \eqref{eqn: reeh schlieder function} is a polynomial in the $z_i$ and the conclusion follows.
We now consider the general case.

By rotation covariance we have
\[
\lambda\big(U(R_{z_1}) \varphi_1(f_1) U(R_{z_2}) \cdots U(R_{z_k}) \varphi_k(f_k)\Omega\big)
=
\lambda\big(\varphi_1(\beta_{d_1}(R_{w_1}) f_1)  \cdots  \varphi_k(\beta_{d_k}(R_{w_k})f_k)\Omega\big)
\]
where $w_j = z_1z_2 \cdots z_j$, and $d_j$ is the conformal dimension of $\varphi_j$.
Given arbitrary smooth $f_j$, choose sequences of Laurent polynomials $f_{j,n}$ such that $f_{j,n} \to f_j$ in $C^\infty(S^1)$.
As in Section~\ref{sec: VOA to Wightman}, let $H^N(S^1)$ be the Sobolev space corresponding to a number $N>0$, and recall that the topology on $C^\infty(S^1)$ is generated by the Sobolev norms $\norm{\,\cdot\,}_N$.
Since $\beta_d(R_w)$ acts as a unitary on $H^N(S^1)$, we have convergence in each $H^N(S^1)$
\[
\lim_{n \to \infty} \beta_{d_j}(R_{w})f_{j,n} = \beta_{d_j}(R_w)f_j
\]
that is uniform in $w$.

Since $\lambda \in \cD_\cF^*$, expressions
\begin{equation}\label{eqn: rs inner product expression}
\lambda\big(\varphi_1(f_1) \cdots \varphi_k(f_k)\Omega\big)
\end{equation}
are jointly continuous as maps $C^\infty(S^1)^k \to \bbC$.
Hence we may choose a positive number $N$ such that \eqref{eqn: rs inner product expression} is jointly continuous $H^N(S^1)^k \to \bbC$ (i.e.\! it is a bounded multilinear map).
It follows that
\[
\lim_{n \to \infty} \lambda\big(\varphi_1(\beta_{d_1}(R_{w_1}) f_{1,n})  \cdots  \varphi_k(\beta_{d_k}(R_{w_k})f_{k,n})\Omega\big) = \lambda\big(\varphi_1(\beta_{d_1}(R_{w_1}) f_1)  \cdots  \varphi_k(\beta_{d_k}(R_{w_k})f_k)\big)
\]
uniformly in $z_1, \ldots, z_k$.
As each map 
\begin{align*}
z_j &\mapsto \lambda\big(U(R_{z_1}) \varphi_1(f_{1,n}) U(R_{z_2}) \varphi_2(f_2) \cdots U(R_{z_k}) \varphi_k(f_{k,n})\Omega\big) \\
& = \lambda\big(\varphi_1(\beta_{d_1}(R_{w_1}) f_{1,n})  \cdots  \varphi_k(\beta_{d_k}(R_{w_k})f_{k,n})\Omega\big)
\end{align*}
lies in $A(\overline{D})$ and $A(\overline{D})$ is a closed subspace of $C(S^1)$, the map \eqref{eqn: reeh schlieder function} lies in $A(\overline{D})$ as claimed.
\end{proof}

\begin{lem}\label{lem: RS vanish on interval vanish everywhere}
Let $\cF$ be a rotation-covariant Wightman CFT on $S^1$ with domain $\cD$, and let $I \subset S^1$ be an interval.
Let $\lambda \in \cD_\cF^*$, and suppose $\lambda(X\Omega) = 0$ for all $X \in \cP(I)$.
Then $\lambda = 0$.
\end{lem}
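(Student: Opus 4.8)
The plan is to deduce $\lambda = 0$ from the fact that $\cD$ is spanned by vectors $\varphi_1(g_1)\cdots\varphi_k(g_k)\Omega$ (the vacuum axiom), by showing $\lambda$ annihilates every such vector for \emph{arbitrary} $g_i \in C^\infty(S^1)$. The hypothesis gives this whenever each $g_i$ is supported in $I$, since then $\varphi_1(g_1)\cdots\varphi_k(g_k) \in \cP(I)$; following the classical Reeh--Schlieder argument, I would enlarge the permitted supports to all of $S^1$ by analytically continuing in rotation parameters and invoking the spectrum condition through Lemma~\ref{lem: reeh schlieder lemma}.

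Concretely, fix a nonempty open subinterval $I_0$ with $\overline{I_0} \subset I$, small enough that $R_z(I_0) \subset I$ for all $z \in S^1$ sufficiently close to $1$. The key claim is that
\[
\lambda\big(U(R_{z_1})\varphi_1(f_1) U(R_{z_2})\varphi_2(f_2) \cdots U(R_{z_k})\varphi_k(f_k)\Omega\big) = 0
\]
for all $\varphi_i \in \cF$, all $f_i \in C^\infty(S^1)$ with $\supp f_i \subset I_0$, and all $z_1,\ldots,z_k \in S^1$. To prove it, write $G(z_1,\ldots,z_k)$ for the left-hand side; moving the rotations to the right exactly as in the proof of Lemma~\ref{lem: reeh schlieder lemma} (using $U(R_z)\Omega = \Omega$ and rotation covariance) realizes $G(z_1,\ldots,z_k)$ as $\lambda$ of $\varphi_1(\beta_{d_1}(R_{w_1})f_1)\cdots\varphi_k(\beta_{d_k}(R_{w_k})f_k)\Omega$, where $w_j = z_1\cdots z_j$ and $\supp \beta_{d_j}(R_{w_j})f_j = R_{w_j}(\supp f_j) \subset R_{w_j}(I_0)$. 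Hence when $z_1,\ldots,z_k$ all lie in a sufficiently small arc around $1$ this vector lies in $\cP(I)\Omega$, so $G$ vanishes on such a neighbourhood. By Lemma~\ref{lem: reeh schlieder lemma}, for each $j$ the function $z_j \mapsto G(z_1,\ldots,z_k)$ lies in $A(\overline{D})$, and a function in $A(\overline{D})$ vanishing on a subarc of $S^1$ vanishes identically (boundary uniqueness for the disk algebra: a nonzero element of $A(\overline{D}) \subset H^1(D)$ has the logarithm of its boundary modulus in $L^1(S^1)$). Continuing first in $z_k$ with $z_1,\ldots,z_{k-1}$ near $1$, then in $z_{k-1}$, and so on down to $z_1$ — at each stage the function already vanishes for the remaining variables near $1$ — gives $G \equiv 0$ on $(S^1)^k$, which is the claim.

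It remains to reduce the general case to the claim. Because $I_0$ is open and nonempty, finitely many rotates $R_{\psi_1}(I_0),\ldots,R_{\psi_M}(I_0)$ cover $S^1$. Given arbitrary $g_1,\ldots,g_k \in C^\infty(S^1)$, a partition of unity subordinate to this cover together with the multilinearity of $(g_1,\ldots,g_k)\mapsto \varphi_1(g_1)\cdots\varphi_k(g_k)\Omega$ reduces us to the case $\supp g_j \subset R_{\psi_{m_j}}(I_0)$ for some indices $m_j$. Then $f_j := \beta_{d_j}(R_{\psi_{m_j}})^{-1}g_j$ is supported in $I_0$ and $\varphi_j(g_j) = U(R_{\psi_{m_j}})\varphi_j(f_j)U(R_{\psi_{m_j}})^{-1}$ by rotation covariance; substituting this for each $j$ and cancelling $U(R_{\psi_{m_k}})^{-1}\Omega = \Omega$ presents $\varphi_1(g_1)\cdots\varphi_k(g_k)\Omega$ in the form appearing in the claim, for suitable $z_1,\ldots,z_k \in S^1$. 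Thus $\lambda$ annihilates it, hence annihilates all of the spanning vectors, hence $\lambda = 0$. I expect the analytic-continuation step to look like the hard part, but in fact Lemma~\ref{lem: reeh schlieder lemma} already does the substantive work of converting the spectrum condition into membership in $A(\overline{D})$; what remains is the boundary-uniqueness input for the disk algebra together with the careful but routine bookkeeping needed to track supports under rotations and partitions of unity.
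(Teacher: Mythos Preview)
Your proof is correct and follows essentially the same Reeh--Schlieder argument as the paper: both use Lemma~\ref{lem: reeh schlieder lemma} to place the rotation-parameter dependence in $A(\overline{D})$, invoke boundary uniqueness on an arc (the paper phrases this via Schwarz reflection, you via $H^1$ boundary values), and iterate over the $k$ insertions. The only organizational difference is that the paper interleaves the analytic-continuation and partition-of-unity steps, freeing one $f_j$ at a time, whereas you first continue in all the rotation parameters $z_j$ and then apply a single partition of unity at the end.
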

\begin{proof}
Fix $\varphi_1, \ldots, \varphi_k \in \cF$, so that
\begin{equation}\label{eqn: P(I) perp}
\lambda\big(\varphi_1(f_1) \cdots \varphi_k(f_n)\Omega\big) = 0
\end{equation}
whenever $\supp(f_j) \subset I$ for $j=1,\ldots,k$.
Fix $f_1,\ldots,f_k$ supported in $I$, and consider the function $F_k:S^1 \to \bbC$ given by
\[
F_k(z) = \lambda\big(\varphi_1(f_1) \cdots \varphi_{k-1}(f_{k-1})U(R_z)\varphi_k(f_k)\Omega\big).
\]
We have $F_k \in A(\overline{D})$ by Lemma~\ref{lem: reeh schlieder lemma}.
Moreover, by rotation covariance $F_k$ vanishes on a small interval of $S^1$ about $1$ (note that $\supp(f)$ is closed and the interval $I$ is open, so that $I$ contains a neighborhood of $\supp(f_k)$).
Thus by the Schwarz reflection principle we have $F_k = 0$ identically, and restricting to $z \in S^1$ we have
\[
0 = F_k(z) = \lambda\big(\varphi_1(f_1) \cdots \varphi_{k-1}(f_{k-1})\varphi_k(\beta_d(R_z)f_k)\Omega\big)
\]
for all $z \in S^1$.
Hence \eqref{eqn: P(I) perp} holds whenever $f_1, \ldots, f_{k-1}$ are supported in $I$, and $f_k$ is supported in any interval of length $\abs{I}$.
Using a partition of unity, it follows that \eqref{eqn: P(I) perp} holds for arbitrary $f_k$.

We now repeat the above argument.
As before, we may show that the function
\[
z \mapsto \lambda\big(\varphi_1(f_1) \cdots U(R_z) \varphi_{k-1}(f_{k-1}) \varphi_k(f_k)\Omega\big)
\]
vanishes identically on $S^1$, and from there deduce that \eqref{eqn: P(I) perp} holds whenever $f_1, \ldots, f_{k-2}$ are supported in $I$, and $f_{k-1},f_k$ are arbitrary.
Repeatedly applying this argument, we see that \eqref{eqn: P(I) perp} holds for all $f_1, \ldots, f_k \in C^\infty(S^1)$, which means $\lambda=0$ by the vacuum axiom of a Wightman CFT.
\end{proof}

\begin{cor}[Reeh-Schlieder theorem]\label{cor: reeh schlieder}
Let $\cF$ be a rotation-covariant Wightman CFT on $S^1$ with domain $\cD$.
For $I \subset S^1$ an interval we let $\cP(I) \subset \cL(\cD)$ be the subalgebra generated by $\varphi(f)$ with $\varphi \in \cF$ and $\supp(f) \subset I$.
Then
\begin{enumerate}[i)]
\item $\Omega$ is cyclic for $\cP(I)$ with respect to the $\cF$-strong topology on $\cD$, i.e.\! $\cP(I)\Omega$ is $\cF$-strongly dense in $\cD$.
\item $\Omega$ is separating for $\cP(I)$, i.e.\! if $X \in \cP(I)$ and $X\Omega = 0$ then $X=0$.
\end{enumerate}
\end{cor}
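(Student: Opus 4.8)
The plan is to deduce both statements from Lemma~\ref{lem: RS vanish on interval vanish everywhere}: cyclicity will be essentially immediate, and the separating property will follow from it together with the locality axiom and a density/continuity argument.

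\emph{Part (i).} I would use that the $\cF$-strong topology is locally convex and, by Remark~\ref{rem: weak of F strong is F weak}, has continuous dual exactly $\cD_\cF^*$. By the Hahn--Banach theorem, a linear subspace of $\cD$ is $\cF$-strongly dense precisely when every $\lambda \in \cD_\cF^*$ vanishing on it is zero. Since $\cP(I)$ is an algebra, $\cP(I)\Omega$ is a linear subspace of $\cD$ (the image of $\cP(I)$ under the linear map $X \mapsto X\Omega$), and Lemma~\ref{lem: RS vanish on interval vanish everywhere} is exactly the assertion that any $\lambda \in \cD_\cF^*$ annihilating $\cP(I)\Omega$ vanishes. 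Hence $\cP(I)\Omega$ is $\cF$-strongly dense, which is (i).

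\emph{Part (ii).} Let $X \in \cP(I)$ with $X\Omega = 0$. First I would reduce to the case that $I$ is not dense: $X$ is a finite linear combination of products of smeared fields $\varphi(f)$ with $\supp f \subset I$, so only finitely many test functions occur and the union $K$ of their supports is a compact subset of the open arc $I$; therefore $K$, and hence $X$, is already localized in a non-dense open sub-arc $I' \subseteq I$ (that is, $X \in \cP(I')$), and it suffices to prove $X = 0$ for such $I'$. Now choose an interval $J \subset S^1$ disjoint from $I'$, which exists precisely because $I'$ is non-dense. For every $Y \in \cP(J)$, locality (W\ref{itm: W locality}) gives $XY = YX$, so $XY\Omega = YX\Omega = 0$; thus $X$ annihilates $\cP(J)\Omega$. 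By part (i) applied to the interval $J$, the subspace $\cP(J)\Omega$ is $\cF$-strongly dense in $\cD$, and by Lemma~\ref{lem: continuity of field operators} the operator $X$, being a finite linear combination of compositions of the $\cF$-strongly continuous operators $\varphi(f)$, is itself $\cF$-strongly continuous. A continuous operator vanishing on a dense subspace vanishes identically, so $X = 0$, proving (ii).

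The only genuinely delicate point is the degenerate situation in (ii) in which $I = S^1 \setminus \{\text{point}\}$ is dense and no interval $J$ is disjoint from it; this is handled by the reduction above, which exploits that a single operator $X \in \cP(I)$ involves only finitely many test functions and hence lies in $\cP(I')$ for a strictly smaller, non-dense interval. Everything else -- the Hahn--Banach theorem, the locality axiom, and the $\cF$-strong continuity of smeared fields -- is already available, so no new analytic input is required.
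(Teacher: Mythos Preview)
Your proof is correct and follows essentially the same approach as the paper: part (i) via Hahn--Banach and Lemma~\ref{lem: RS vanish on interval vanish everywhere}, part (ii) via locality with a disjoint interval, $\cF$-strong continuity of smeared fields, and the density from part (i). The one difference is that you explicitly handle the edge case $I = S^1 \setminus \{\text{point}\}$ by first shrinking to a non-dense sub-interval containing the (compact) supports of the finitely many test functions occurring in $X$; the paper simply invokes ``the interval $I'$ complementary to $I$'' without commenting on this degenerate case, so your version is arguably a touch more careful here.
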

\begin{proof}
For part (i), recall from Remark~\ref{rem: weak of F strong is F weak} that $\cD_\cF^*$ is precisely the dual space of $\cD$ equipped with the $\cF$-strong topology.
By Lemma~\ref{lem: RS vanish on interval vanish everywhere} the closed subspace $\overline{\cP(I)\Omega}$ is annihilated only by the zero functional, and so by the Hahn-Banach theorem (for locally convex topological vector spaces) we must have $\overline{\cP(I)\Omega}=\cD$.

For part (ii), observe that by the locality axiom of a Wightman theory the operator $X$ vanishes on $\cP(I')\Omega$, where $I'$ is the interval complementary to $I$.
By Lemma~\ref{lem: continuity of field operators} the operator $X:\cD \to \cD$ is $\cF$-strongly continuous, 
and hence by part (i) we have $X=0$.
\end{proof}

\section{Topological vector spaces}\label{app: TVS and LCS}

In this section we supplement the discussion of the topology on the domain $\cD$ of a Wightman field theory by giving additional definitions, details, and references regarding topological vector spaces and locally convex spaces.
We refer readers to the textbooks \cite{NariciBeckenstein11,Treves67} for further reading.
All vector spaces in this section are assumed to be over the field of complex numbers.

A \textbf{topological vector space} is a vector space $V$ equipped with a \textbf{vector topology}, which is a topology such that the addition map $V \times V \to V$ and the scalar multiplication map $\bbC \times V \to V$ are continuous.
Vector topologies are not necessarily Hausdorff by definition, although we will primarily be interested in Hausdorff topological vector spaces.

A \textbf{seminorm} on a vector space $V$ is a map $p:V \to \bbR_{\ge 0}$ such that $p(u+v) \le p(u)+p(v)$ and $p(\alpha u) = \abs{\alpha} p(u)$ for all $u,v \in V$ and $\alpha \in \bbC$.
Given a set of seminorms on $V$, the corresponding \textbf{seminorm topology} is the coarsest topology on $V$ making all of the seminorms continuous.
Seminorm topologies are always vector topologies, but not every vector topology is a seminorm topology.
A \textbf{locally convex space} is a topological vector space whose topology is a seminorm topology corresponding to some set of seminorms.
Equivalently, a locally convex space is a topological vector space such that there exists a neighborhood basis of the origin consisting of convex sets \cite[Thm. 5.5.2]{NariciBeckenstein11}.
Every Hausdorff topological vector space $V$ has a unique \textbf{completion} $\widehat V$ \cite[\S5]{Treves67}, and the completion of a locally convex space is locally convex \cite[Thm. 5.11.5]{NariciBeckenstein11}.
We note that finite-dimensional Hausdorff topological vector spaces are complete \cite[Thm. 4.10.3]{NariciBeckenstein11}, as are products of complete topological vector spaces.
Every continuous linear map $T:U \to V$ of Hausdorff topological vector spaces extends continuously to a map $\widehat T: \widehat U \to \widehat V$ \cite[Thm. 5.2]{Treves67}.

Locally convex spaces play an important role in functional analysis because the Hahn-Banach theorem holds for them.
In particular, the continuous linear functionals on a locally convex Hausdorff space separate points.
Moreover, if $X$ is a closed subspace of a locally convex Hausdorff space $V$ and $v \not \in X$, then there exists a continuous linear functional $\lambda:V \to \bbC$ such that $\lambda|_X \equiv 0$ and $\lambda(v)=1$ \cite[Thm. 7.7.7]{NariciBeckenstein11}.
In contrast, there exist topological vector spaces which do not admit nonzero continuous linear functionals, such as $L^p$ spaces with $0 < p < 1$.

Most familiar examples of topological vector spaces, such as normed vector spaces, are locally convex.
Another source of locally convex spaces is via weak topologies \cite[\S8.2]{NariciBeckenstein11}.
Given a vector space $V$ and a set of linear functionals $\cX$ on $V$, the \textbf{weak} topology (or initial topology) on $V$ corresponding to $\cX$ is the coarsest topology making all of the functionals continuous.
This is a locally convex vector topology, being the seminorm topology corresponding to the seminorms $\abs{\lambda}$ for $\lambda \in \cX$.
A sequence (or net) $v_n \in V$ converges to $v$ if and only if $\lambda(v_n) \to \lambda(v)$ for every $\lambda \in \cX$.
A map $T:X \to V$ is continuous with respect to the weak topology if and only if $\lambda \circ T$ is continuous for every $\lambda \in \cX$.

Dually, we have the notion of the \textbf{colimit} (or final or strong) topology.
Consider a vector space $V$, and a family of linear maps $T_s:X_s \to V$ from topological vector spaces $X_s$ such that the images $T_s(X_s)$ span $V$.
The colimit topology on $V$ corresponding to the maps $T_s$ is the finest topology on $V$ such that every $T_s$ is continuous, and it is a vector topology \cite[\S4.11]{NariciBeckenstein11}.
If $U$ is a topological vector space, then a linear map $T:V \to U$ is continuous if and only if $T \circ T_s$ is continuous for all $s$.

If each space $X_s$ is locally convex then we may define a subtly different \textbf{locally convex colimit} topology on $V$, which is the finest locally convex topology such that each $X_s$ is continuous \cite[\S12.2]{NariciBeckenstein11}.
If $U$ is a \emph{locally convex} space then a linear map $T: V \to U$ is continuous for the locally convex colimit topology if and only if $T \circ T_s$ is continuous for all $s$ \cite[Thm. 12.2.2]{NariciBeckenstein11}.

We now discuss tensor products of locally convex spaces.
If $U$,$V$, and $X$ are vector spaces then bilinear maps $U \times V \to X$ correspond to linear maps $U \otimes V \to X$, where $\otimes$ is the algebraic tensor product.
If $U$,$V$, and $X$ are locally convex spaces, then there is a unique locally convex topology on $U \otimes V$, called the \textbf{$\pi$-topology} (or \textbf{projective topology}), such that \emph{jointly continuous} bilinear maps $U \times V \to X$ correspond to \emph{continuous} linear maps $U \otimes V \to X$ \cite[Prop. 43.4]{Treves67}.
We write $U \otimes_\pi V$ for the algebraic tensor product equipped with the $\pi$ topology.

We now conclude by revisiting the $\cF$-strong topology.
Suppose that $\cF$ is a set of operator-valued distributions on $S^1$ with domain a vector space $\cD$.
For every $\varphi_1, \ldots, \varphi_k \in \cF$ and $\Phi \in \cD$ we have a multilinear map
$C^\infty(S^1)^k \to \cD$ given by $(f_1, \ldots, f_k) \mapsto \varphi_1(f_1) \cdots \varphi_k(f_k)\Phi$.
These correspond to linear maps 
\[
S_{\varphi_1, \ldots, \varphi_k,\Phi}:C^\infty(S^1) \otimes_\pi \cdots \otimes_\pi C^\infty(S^1) \to \cD.
\]
We include the case $k=0$, in which case $S_\Phi:\bbC \to \cD$ assigns $1 \mapsto \Phi$.
The $\cF$-strong topology on $\cD$ is then defined to be the locally convex colimit of the maps $S_{\varphi_1, \ldots, \varphi_k, \Phi}$.
Unpacking the definitions, if $X$ is a locally convex space then a map $T:\cD \to X$ is $\cF$-strong continuous if and only if $T(\varphi_1(f_1) \cdots \varphi_k(f_k)\Phi)$ is jointly continuous in the $f_j$ for all $\varphi_1, \ldots, \varphi_k \in \cF$ and $\Phi \in \cD$.


\begin{thebibliography}{CKLW18}

\bibitem[AGT23]{AGT23Pointed}
M.~S. Adamo, L.~Giorgetti, and Y.~Tanimoto.
\newblock Wightman fields for two-dimensional conformal field theories with pointed representation category.
\newblock {\em Comm. Math. Phys.}, 404(3):1231--1273, 2023.

\bibitem[AMT24]{AMT24OS}
M.~S. Adamo, Y.~Moriwaki, and Y.~Tanimoto.
\newblock Osterwalder--{S}chrader axioms for unitary full vertex operator algebras.
\newblock {\em arXiv:2407.18222 [math-ph]}, 2024.

\bibitem[AW22]{AllWood20}
R.~Allen and S.~Wood.
\newblock Bosonic ghostbusting: The bosonic ghost vertex algebra admits a logarithmic module category with rigid fusion.
\newblock {\em Commun. Math. Phys.}, 390:959 -- 1015, 2022.

\bibitem[BK08]{BuhlKaraali08}
G.~Buhl and G.~Karaali.
\newblock Spanning sets for {M}\"{o}bius vertex algebras satisfying arbitrary difference conditions.
\newblock {\em J. Algebra}, 320(8):3345--3364, 2008.

\bibitem[CGH23]{CarpiGaudioHillier23ax}
S.~Carpi, T.~Gaudio, and R.~Hillier.
\newblock From vertex operator superalgebras to graded-local conformal nets and back.
\newblock {\em arXiv:2304.14263 [math.OA]}, 2023.

\bibitem[CKLW18]{CKLW18}
S.~Carpi, Y.~Kawahigashi, R.~Longo, and M.~Weiner.
\newblock From vertex operator algebras to conformal nets and back.
\newblock {\em Mem. Amer. Math. Soc.}, 254(1213), 2018.

\bibitem[CWX]{CarpiWeinerXu}
S.~Carpi, M.~Weiner, and F.~Xu.
\newblock From vertex operator algebra modules to representations of conformal nets.
\newblock {\em In preparation}.

\bibitem[EG17]{EvansGannonNonunitary17}
D.~E. Evans and T.~Gannon.
\newblock Non-unitary fusion categories and their doubles via endomorphisms.
\newblock {\em Adv. Math.}, 310:1--43, 2017.

\bibitem[FHL93]{FHL93}
I.~B. Frenkel, Y.-Z. Huang, and J.~Lepowsky.
\newblock On axiomatic approaches to vertex operator algebras and modules.
\newblock {\em Mem. Amer. Math. Soc.}, 104(494):viii+64, 1993.

\bibitem[Gui20]{gui20unbddax}
B.~Gui.
\newblock Unbounded field operators in categorical extensions of conformal nets.
\newblock {\em arXiv:2001.03095 [math.QA]}, 2020.

\bibitem[Gui21]{gui21categoricalextension}
B.~Gui.
\newblock Categorical extensions of conformal nets.
\newblock {\em Comm. Math. Phys.}, 383(2):763--839, 2021.

\bibitem[Gui24a]{GuiConvergence24}
B.~Gui.
\newblock Convergence of sewing conformal blocks.
\newblock {\em Commun. Contemp. Math.}, 26(3):Paper No. 2350007, 65, 2024.

\bibitem[Gui24b]{GuiPropagation24}
B.~Gui.
\newblock Sewing and propagation of conformal blocks.
\newblock {\em New York J. Math.}, 30:187--230, 2024.

\bibitem[GZ23]{GuiAnalyticI23ax}
B.~Gui and H.~Zhang.
\newblock {A}nalytic {C}onformal {B}locks of {$C_2$}-cofinite {V}ertex {O}perator {A}lgebras {I}: {P}ropagation and {D}ual {F}usion {P}roducts.
\newblock {\em arXiv:2305.10180 [math.QA]}, 2023.

\bibitem[HLZ14]{HLZI}
Y.-Z. Huang, J.~Lepowsky, and L.~Zhang.
\newblock Logarithmic tensor category theory for generalized modules for a conformal vertex algebra, {I}: introduction and strongly graded algebras and their generalized modules.
\newblock In {\em Conformal field theories and tensor categories}, Math. Lect. Peking Univ., pages 169--248. Springer, Heidelberg, 2014.

\bibitem[Hua99]{HuangFunctionalI}
Y.-Z. Huang.
\newblock A functional-analytic theory of vertex (operator) algebras. {I}.
\newblock {\em Comm. Math. Phys.}, 204(1):61--84, 1999.

\bibitem[Hua03]{HuangFunctionalII}
Y.-Z. Huang.
\newblock A functional-analytic theory of vertex (operator) algebras. {II}.
\newblock {\em Comm. Math. Phys.}, 242(3):425--444, 2003.

\bibitem[Hua20]{HuangGenerators20}
Y.-Z. Huang.
\newblock Generators, spanning sets and existence of twisted modules for a grading-restricted vertex (super)algebra.
\newblock {\em Selecta Math. (N.S.)}, 26(4):Paper No. 62, 42, 2020.

\bibitem[Kac98]{Kac98}
V.~Kac.
\newblock {\em Vertex algebras for beginners}, volume~10 of {\em University Lecture Series}.
\newblock American Mathematical Society, Providence, RI, second edition, 1998.

\bibitem[Li94]{Li94}
H.~S. Li.
\newblock Symmetric invariant bilinear forms on vertex operator algebras.
\newblock {\em J. Pure Appl. Algebra}, 96(3):279--297, 1994.

\bibitem[Mor23]{Moriwaki21}
Y.~Moriwaki.
\newblock Two-dimensional conformal field theory, full vertex algebra and current-current deformation.
\newblock {\em Adv. Math.}, 427:Paper No. 109125, 74, 2023.

\bibitem[NB11]{NariciBeckenstein11}
L.~Narici and E.~Beckenstein.
\newblock {\em Topological vector spaces}, volume 296 of {\em Pure and Applied Mathematics (Boca Raton)}.
\newblock CRC Press, Boca Raton, FL, second edition, 2011.

\bibitem[RS61]{ReehSchlieder}
H.~Reeh and S.~Schlieder.
\newblock Bemerkungen zur {U}nit\"{a}r\"{a}quivalenz von {L}orentzinvarianten {F}elden.
\newblock {\em Nuovo Cimento (10)}, 22:1051--1068, 1961.

\bibitem[RTT22]{RaymondTanimotoTener22}
C.~Raymond, Y.~Tanimoto, and J.~E. Tener.
\newblock Unitary vertex algebras and {W}ightman conformal field theories.
\newblock {\em Comm. Math. Phys.}, 395(1):299--330, 2022.

\bibitem[Seg04]{SegalDef}
G.~Segal.
\newblock The definition of conformal field theory.
\newblock In {\em Topology, geometry and quantum field theory}, volume 308 of {\em London Math. Soc. Lecture Note Ser.}, pages 421--577. Cambridge Univ. Press, Cambridge, 2004.

\bibitem[Str93]{Strocchi93}
F.~Strocchi.
\newblock {\em Selected topics on the general properties of quantum field theory}, volume~51 of {\em World Scientific Lecture Notes in Physics}.
\newblock World Scientific Publishing Co., Inc., River Edge, NJ, 1993.

\bibitem[SW64]{StreaterWightman64}
R.~F. Streater and A.~S. Wightman.
\newblock {\em P{CT}, spin and statistics, and all that}.
\newblock W. A. Benjamin, Inc., New York-Amsterdam, 1964.

\bibitem[Ten19a]{GRACFT1}
J.~E. Tener.
\newblock Geometric realization of algebraic conformal field theories.
\newblock {\em Adv. Math.}, 349:488--563, 2019.

\bibitem[Ten19b]{TenerGRACFT2}
J.~E. Tener.
\newblock Representation theory in chiral conformal field theory: from fields to observables.
\newblock {\em Selecta Math. (N.S.)}, 25(5):Paper No. 76, 82, 2019.

\bibitem[Ten24]{TenerGRACFT3}
J.~E. Tener.
\newblock Fusion and positivity in chiral conformal field theory.
\newblock {\em Geom. Funct. Anal.}, 34(4):1226--1296, 2024.

\bibitem[Tr{\`{e}}67]{Treves67}
F.~Tr{\`{e}}ves.
\newblock {\em Topological vector spaces, distributions and kernels}.
\newblock Academic Press, New York-London, 1967.

\end{thebibliography}

\def\lfhook#1{\setbox0=\hbox{#1}{\ooalign{\hidewidth \lower1.5ex\hbox{'}\hidewidth\crcr\unhbox0}}}

\end{document}